\newcommand{\conv}{\ensuremath{\mathrm{conv}}}
\newcommand{\Real}{\ensuremath{\mathbb{R}}}
\newcommand{\Plane}{\ensuremath{\mathbb{R}^2}}
\newcommand{\dual}[1]{\ensuremath{#1^\star}}
\newcommand{\arr}{\ensuremath{\mathcal{A}}}
\newcommand{\LE}{\mathcal{L}}
\newcommand{\UE}{\mathcal{U}}
\newcommand{\width}{\mathrm{width}}
\newcommand{\strip}{\sigma}
\newcommand{\stripbar}{\overline{\strip}}
\newcommand{\ASDS}{\mathcal{D}}
\newcommand{\wparam}{\omega}
\newcommand{\CWDS}{\mathcal{W}}
\newcommand{\wf}{\omega}
\newcommand{\wfg}{\overline{\omega}}
\newcommand{\upd}{\upsilon}
\newcommand{\Upd}{\Upsilon}
\newcommand{\cL}{\mathcal{L}}
\newtheoremstyle{mytheorem}{3pt}{3pt}{\slshape}{}{\bfseries}{}{.5em}{}
\theoremstyle{mytheorem}
\newtheorem{lemma}{Lemma}
\newtheorem{theorem}{Theorem}
\theoremstyle{definition}
\newbox\ProofSym
\renewenvironment{proof}[1][Proof.]{\O@proof{#1}}{\O@endproof}
\def\O@proof#1{\trivlist
   \@topsep\z@\@topsepadd\smallskipamount%
   \@ifstar{\item[]}{\item[\hskip\labelsep\it #1 ]}}
\def\O@endproof{\hfill\copy\ProofSym\linebreak[3mm]\endtrivlist}
\let\geq\geqslant
\let\leq\leqslant
\def\denseitems{
    \itemsep1pt plus1pt minus1pt
    \parsep0pt plus0pt
    \parskip0pt\topsep0pt}
\newcommand*\patchAmsMathEnvironmentForLineno[1]{%
  \expandafter\let\csname old#1\expandafter\endcsname\csname #1\endcsname
  \expandafter\let\csname oldend#1\expandafter\endcsname\csname end#1\endcsname
  \renewenvironment{#1}%
     {\linenomath\csname old#1\endcsname}%
     {\csname oldend#1\endcsname\endlinenomath}}%
\newcommand*\patchBothAmsMathEnvironmentsForLineno[1]{%
  \patchAmsMathEnvironmentForLineno{#1}%
  \patchAmsMathEnvironmentForLineno{#1*}}%
\title{Constrained Two-Line Center Problems
\thanks{T. Ahn was supported by the Institute of Information \& communications Technology Planning \& Evaluation(IITP) grant funded by the Korea government(MSIT) (No. 2019-0-01906).
S.W.Bae was supported by the National Research Foundation of Korea(NRF) grant funded by the Korea government(MSIT) (No. RS-2023-00251168).}} 
\author{Taehoon Ahn\thanks{Graduate School of Artificial Intelligence, Pohang University of Science and Technology, Pohang, Republic of Korea. {\tt sloth@postech.ac.kr}}
\and
{Sang Won} Bae\thanks{Division of Artificial Intelligence and Computer Science, Kyonggi University, Suwon, Korea. {\tt swbae@kgu.ac.kr}}}
\begin{document}
\date{\DTMnow}
\maketitle

\begin{abstract}
 Given a set~$P$ of $n$~points in the plane,
 the two-line center problem asks to find two lines that minimize
 the maximum distance from each point in~$P$ to its closer one of the two resulting lines.
 The currently best algorithm for the problem takes $O(n^2 \log^2 n)$ time
 by Jaromczyk and Kowaluk in 1995.
 In this paper, 
 we present faster algorithms for three variants of the two-line center problem 
 in which the orientations of the resulting lines are constrained.
 Specifically,
 our algorithms solve the problem in $O(n \log n)$ time when the orientations of both lines are fixed;
 in $O(n \log^3 n)$ time when the orientation of one line is fixed;
 and in $O(n^2 \alpha(n) \log n)$ time when the angle between the two lines is fixed,
 where $\alpha(n)$ denotes the inverse Ackermann function.
\end{abstract}

\section{Introduction} \label{sec:intro}

Given a set~$P$ of $n$~points in the plane~$\Plane$,
the \emph{two-line center problem} asks to find two lines that minimize
the maximum distance from each point in~$P$ to its closer one of the two resulting lines.
In~1991, Agarwal and Sharir~\cite{as-pglpmwps-91} presented 
the first subcubic $O(n^2 \log^5 n)$-time algorithm for the two-line center problem,
in which they solved the decision version in $O(n^2 \log^3 n)$ time
using their machinery~\cite{as-odmwpps-91}
to maintain the width of a point set under a prescribed sequence of changes
and then to apply the parametric search technique. 
(See also its full version~\cite{as-pglp-94}.)
In 1995, Jaromczyk and Kowaluk~\cite{jk-tlcppv:nads-95} presented an $O(n^2 \log^2 n)$-time algorithm
and also discussed an $O(n^2 \log n)$-time decision algorithm.
Glozman et al.~\cite{gks-sgsopsm-95, gks-sgsopsm-98} exhibited 
how any $D$-time decision algorithm for the two-line center problem
can be converted to an optimization algorithm of $O(n^2 \log n + D\log n)$ time
using sorted matrices.
Later, Katz and Sharir~\cite{ks-ebago-97} introduced
an expander-based approach and showed how to solve the problem in
$O(n^2 \log^3 n + D \log n)$ time.
There was no significant progress since then
and $O(n^2 \log^2 n)$ still remains the best known upper bound~\cite{jk-tlcppv:nads-95,gks-sgsopsm-98}.

This paper addresses constrained variants of the two-line center problem,
and aims to provide efficient algorithms for the constrained problems, 
particularly faster than $O(n^2 \log^2 n)$ time,
and to provide new observations and algorithmic techniques for any future breakthrough on the problem.
The currently fastest algorithm by Jaromczyk and Kowaluk~\cite{jk-tlcppv:nads-95} indeed considers
several constrained problems, tackled by different methods.
Though not having explicitly mentioned in~\cite{jk-tlcppv:nads-95}, their approach yields
an $O(n \log^2 n)$-time algorithm when a fixed point in~$P$ should be the farthest
to the resulting lines, after an $O(n^2)$-time preprocessing.
Recently, Bae~\cite{Bae2020} presented an $O(n^2)$-time algorithm
for the two-\emph{parallel}-line center problem, in which the two resulting lines
are supposed to be parallel.

In this paper, we solve three variants of the two-line center problem,
constrained about the orientations of the resulting two lines.
Following summarizes our results and approaches:
\begin{enumerate}[(1)]
\denseitems
\item \textit{(Two fixed orientations)} 
 Given two orientations~$\theta$ and $\phi$,
 we present an $O(n \log n)$-time algorithm that solves
 the two-line center problem in which the two resulting lines are constrained to
 have orientations~$\theta$ and~$\phi$.
 If the input points~$P$ are given as a sorted list
 in one of the specified orientations, then the running time can be reduced to $O(n)$.
\item \textit{(One fixed orientation)}
 Given an orientation~$\phi$,
 we present an $O(n \log^3 n)$-time algorithm that solves
 the two-line center problem in which one of the resulting lines is constrained to
 have orientation~$\phi$.
 We first devise an $O(n\log^2 n)$-time decision algorithm for this constrained problem
 using the data structure by Agarwal and Sharir~\cite{as-odmwpps-91}.
 In spite of having such an efficient decision algorithm,
 it is not immediate to achieve a sub-quadratic time optimization algorithm
 by applying known techniques; as introduced above,
 all known techniques for the two-line center problem require at least quadratic-time 
 additional overhead~\cite{gks-sgsopsm-98,as-pglp-94,ks-ebago-97}. 
 To overcome this difficulty,
 we use our decision algorithm as a subroutine to find an interval narrow enough to reduce
 the possible number of candidate configurations to~$O(n)$ and 
 apply the dynamic width structure by Chan~\cite{c-fdapw-03}.
\item \textit{(Fixed angle of intersection)}
 Given a real~$\beta$,
 we present an $O(n^2 \alpha(n) \log n)$-time algorithm that solves
 the two-line center problem in which the two resulting lines are constrained to
 make angle~$\beta$, where $\alpha(n)$ denotes the inverse Ackermann function.
 As in the second problem, we start by presenting a decision algorithm 
 and apply the known technique~\cite{gks-sgsopsm-98}
 to obtain a favorably narrow interval that contains the optimal width value.
 We then consider a sweeping process in which we rotate a strip of variable width within the interval,
 and prove that if suffices to find an optimal solution by simulating the process.
\end{enumerate}

To our best knowledge, the three constrained problems 
have not been considered in the literature. 
Note that the two-parallel-line center problem studied in~\cite{Bae2020} is a more constrained variant
of our problems:
In the first problem (of two fixed orientations),
the special case of $\theta = \phi$ can be solved in $O(n)$~time,
and the third problem (of fixed angle) for $\beta = 0$ indeed asks to find
a two-parallel-line center, which can be solved in $O(n^2)$~time~\cite{Bae2020}.

\paragraph*{Related work.}
The two-line center problem is a special case of the \emph{$k$-line center} problem for~$k\geq 1$.
For~$k=1$, known as the \emph{width} problem,
one can solve the problem in $O(n \log n)$ time~\cite{ps-cgi-85},
or in $O(n)$ time if the convex hull of~$P$ is given~\cite{t-sgprc-83}.
In three dimensions, the width of $n$~points in~$\Real^3$
can be computed in $O(n^{3/2+\epsilon})$ expected time by Agarwal and Sharir~\cite{as-erasgop-96}.
In higher dimensions~$d\geq 4$, Chan~\cite{Chan2002} showed how to compute the width
in~$O(n^{\lceil d/2 \rceil})$ time.
In the plane~$\Plane$, the $k$-line center problem is known to be NP-hard
when $k$ is part of the input~\cite{mt-cllfp-82},
while efficient approximation algorithms are known~\cite{apv-aa2lc-03,apv-aaklc-05}.
Agarwal et al.~\cite{apv-aaklc-05} presented an efficient approximation algorithm.
Exact algorithms for~$k\leq 2$ are presented as aforementioned,
while any nontrivial exact algorithm for~$k\geq 3$ is, however, unknown.
An efficient $(1+\epsilon)$-approximation algorithm
for~$k=2$ is presented by Agarwal et al.~\cite{apv-aa2lc-03}.
Very recently, several constrained variants of the $k$-line center problem
and its generalization in high dimensions have been considered.
Das et al.~\cite{ddm-aaolc-23} presented an approximation algorithm
for the $k$-line center problem where the resulting lines are constrained to be axis-parallel.
Chung et al.~\cite{caba-plcgs-23} considered a variant of the parallel $k$-line center problem.
Ahn et al.~\cite{acabcy-mwdsweshd-24} presented first algorithms for the problem of finding
two parallel \emph{slabs} in~$\Real^d$ for $d\geq 3$.

Not being restricted to the line center problems,
there have been an enormous amount of results on constrained variants of
those problems of finding optimal locations of one or more geometric shapes enclosing input objects.
Such results on constrained problems usually
provided more efficient solutions than those for the original (unconstrained) problems
or played important roles as stepping stones to later breakthroughs.
Constrained two-square problems~\cite{kks-cscp-98} and
the problem of covering points by two disjoint rectangles~\cite{kba-cpstdr-11} are such examples.
%

\paragraph{Preliminaries.}

A \emph{strip}~$\strip$ is the closed region between two parallel lines
and its \emph{width} is the distance between the two lines.
A pair of two strips will be called a \emph{two-strip}
and the width of a two-strip mean the larger width of its two members.
Note that the two-line center problem is equivalent to the problem of finding
a two-strip of minimum width that enclose given points.
The \emph{orientation} of a line is a real value~$\theta \in [0, \pi)$
such that $\theta$ is the angle swept from a horizontal line in counterclockwise direction to the line.
Similarly, a strip is said to have an orientation~$\theta$ 
when its bounding lines are in orientation~$\theta$.
For any set~$P$ of points and orientation~$\theta \in [0,\pi)$,
we denote by $\strip_\theta(P)$ the minimum-width strip in orientation~$\theta$ that encloses~$P$.
We denote $\width_\theta(P) := \width(\strip_\theta(P))$.
The \emph{width} of point set~$P$, denoted by~$\width(P)$, is 
the smallest width of a strip that encloses~$P$.

\section{Two fixed orientations} \label{sec:2fixed}

In this section, we consider the first constrained problem
where the orientations of two line centers should given values~$\theta$ and~$\phi$.
We assume that $\phi = 0$ without loss of generality.

We start by sorting the $n$~points in~$P$ in the nondecreasing order of $y$-coordinates
and let $p_1, \ldots, p_n \in P$ be in this order.
For $0\leq i \leq n$, 
let $P_i := \{p_1, \ldots, p_i\}$ and $\overline{P}_i := P\setminus P_i = \{p_{i+1}, \ldots, p_n\}$.
It is straightforward in $O(n)$~time to incrementally construct 
the strips~$\strip_\theta(P_i)$ and $\strip_\theta(\overline{P}_i)$ in orientation~$\theta$
for all $0\leq i\leq n$.
We then observe the following.
\begin{lemma} \label{lem:2fixed_query}
 Given $P$ as a sorted list as above,
 $P$ can be processed in $O(n)$-time so that
 $\strip_\theta(P_i \cup \overline{P}_j)$
 can be answered in $O(1)$~time
 for any query pair~$(i,j)$ of indices.
\end{lemma}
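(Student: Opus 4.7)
The plan is to reduce the query to one-dimensional prefix/suffix extrema along the direction normal to orientation~$\theta$. For any finite $S \subseteq \Plane$, the strip $\strip_\theta(S)$ is completely determined by the two points of~$S$ attaining the extrema of the linear functional $f_\theta(p) := \langle p, u_\theta\rangle$, where $u_\theta := (-\sin\theta, \cos\theta)$ is a unit vector normal to orientation~$\theta$: the two bounding lines of $\strip_\theta(S)$ pass through these two extremal points, so the strip itself can be written down from them in $O(1)$ time.

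In the preprocessing phase I would make two sweeps over the sorted list $p_1, \ldots, p_n$. A left-to-right sweep records, for every $0 \le i \le n$, a pointer $M^+_i$ to the point of~$P_i$ maximizing $f_\theta$ and a pointer $m^+_i$ to the point of~$P_i$ minimizing $f_\theta$, using the one-step recurrence that compares the previous prefix extremum with~$p_i$. A symmetric right-to-left sweep computes analogous pointers $M^-_j$ and $m^-_j$ over~$\overline{P}_j$ for every $0 \le j \le n$. Each sweep runs in $O(n)$ time and uses $O(n)$ space, which matches the claimed preprocessing bound.

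Given a query pair $(i,j)$, the answer follows from the elementary identity
\[
  \max_{p \in P_i \cup \overline{P}_j} f_\theta(p) \;=\; \max\bigl(f_\theta(M^+_i),\; f_\theta(M^-_j)\bigr),
\]
together with the analogous identity for the minimum; both identities remain correct even when $i > j$, in which case $P_i \cup \overline{P}_j = P$ and the decomposition is merely redundant. Thus the two extremal points of $P_i \cup \overline{P}_j$ with respect to $f_\theta$ can be selected from the four tabulated candidates in $O(1)$ time, and $\strip_\theta(P_i \cup \overline{P}_j)$ is reported from them. No serious obstacle arises: the whole argument rests on the observation that the width and bounding lines of a $\theta$-oriented strip depend only on the one-dimensional extrema along~$u_\theta$, and these extrema behave additively under set union, so decoupling the prefix and suffix parts is automatic.
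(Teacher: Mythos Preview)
Your argument is correct and essentially the same as the paper's: the paper precomputes $\strip_\theta(P_i)$ and $\strip_\theta(\overline{P}_j)$ for all indices and then answers each query by merging the two stored strips, which is exactly your prefix/suffix extrema of $f_\theta$ phrased at the level of strips rather than scalar values. You are simply more explicit about the one-dimensional reduction, but the idea and the bounds coincide.
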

\begin{proof}
As a preprocessing, we compute $\strip_\theta(P_i)$ and $\strip_\theta(\overline{P}_i)$ 
for all $0\leq i\leq n$, and store them in an array.
This can be done in $O(n)$ time using $O(n)$ space.
For any query pair $0\leq i \leq j\leq n$ of indices,
the strip~$\strip_\theta(P_i \cup \overline{P}_j)$ enclosing~$P_i \cup \overline{P}_j$ 
is simply the one enclosing both $\strip_\theta(P_i)$ and $\strip_\theta(\overline{P}_i)$,
so can be reported in $O(1)$~time.
\end{proof}

Consider any minimum-width two-strip $(\strip_1, \strip_2)$ enclosing~$P$
such that its orientations are $0$ and $\theta$, respectively.
Observe that $\strip_1$ includes a contiguous sequence $p_{i+1}, \ldots, p_{j}$ of points in~$P$
for some indices $0\leq i\leq j\leq n$,
while $\strip_2$ covers the points in $P_i \cup \overline{P}_j$.
Hence, the problem can be solved by searching for $O(n^2)$ possible bipartitions of~$P$,
namely, $(P_i \cup \overline{P}_j, P\setminus (P_i\cup \overline{P}_j))$ for $0\leq i\leq j\leq n$,
and evaluating the widths of the two strips enclosing each part of desired bipartitions.

Let $w_1(i,j) := \width_0(\{p_{i+1}, \ldots, p_j\})$ 
be the width of the smallest horizontal strip enclosing
$j-i$~points $p_{i+1}, \ldots, p_j \in P$, that is, 
the difference of the $y$-coordinates of $p_{i+1}$ and $p_j$.
Let $w_2(i,j) := \width_\theta(P_i \cup \overline{P}_j)$
be the width of the smallest strip in orientation~$\theta$ enclosing
$P_i \cup \overline{P}_j = \{p_1, \ldots, p_i, p_{j+1}, \ldots, p_n\}$.
Define $w(i,j) := \max\{w_1(i,j), w_2(i,j)\}$.
Our task is to minimize $w(i,j)$ over all $0\leq i \leq j\leq n$.
This can be done by evaluating $w_1(i,j)$ and $w_2(i,j)$ for at most $4n$ pairs $(i,j)$ of indices
due to the monotonicity of $w_1$ and $w_2$.
More precisely, observe that
\[ w_1(i,j) \leq w_1(i, j+1) \quad \text{and} \quad w_1(i,j)\leq w_1(i-1, j),\]
while
\[ w_2(i,j) \geq w_2(i, j+1) \quad \text{and} \quad w_2(i,j) \geq w_2(i-1, j)\]
by definition.
Hence, our algorithm initially sets~$i=j=0$ and repeatedly increases~$j$ by one until
it holds that $w_1(0,j) \leq w_2(0,j)$ and $w_1(0, j+1) \geq w_2(0, j+1)$.
Then for each $i=1,\ldots, n$ in this order, it repeatedly increases~$j$ by one
until it holds that $w_1(i,j) \leq w_2(i,j)$ and $w_1(i, j+1) \geq w_2(i, j+1)$
for the current~$i$.
This way, our algorithm probes at most $4n$ pairs $(i, j)$.
For a given pair $(i, j)$, in $O(1)$ time
we can evaluate $w_1(i, j)$ by definition and $w_2(i,j)$ by Lemma~\ref{lem:2fixed_query}.
We thus conclude the following.

\begin{theorem} \label{thm:2fixed}
 Given a set~$P$ of $n$~points and two orientations~$\theta, \phi \in [0,\pi)$,
 the two-line center problem where the resulting lines have orientations~$\theta$ and $\phi$
 can be computed in $O(n \log n)$ time, or in $O(n)$ time,
 provided $P$ is sorted in orientation either~$\theta$ or~$\phi$.
\end{theorem}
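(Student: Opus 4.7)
The plan is to follow the construction already laid out in the excerpt: after reducing to the case $\phi = 0$ by rotation, the sorting of $P$ by $y$-coordinate takes $O(n \log n)$ time (or $O(n)$ if such a sorted list is provided), and every subsequent step will run in linear time, yielding the claimed bounds.

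First I would justify the structural observation that, in some optimal two-strip enclosing $P$, the horizontal strip $\strip_1$ covers exactly a contiguous block $p_{i+1},\dots,p_j$ of the $y$-sorted sequence while the other strip $\strip_2$ of orientation $\theta$ covers the complementary set $P_i \cup \overline{P}_j$. This reduces the search to minimizing $w(i,j)=\max\{w_1(i,j),w_2(i,j)\}$ over $0\le i\le j\le n$. Using the precomputation described before Lemma~\ref{lem:2fixed_query} together with that lemma, each value $w_1(i,j)$ and $w_2(i,j)$ can be evaluated in $O(1)$ time after $O(n)$-time preprocessing.

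Next I would exploit the monotonicity properties stated in the excerpt: $w_1$ is nondecreasing in~$j$ and nonincreasing in~$i$, while $w_2$ behaves oppositely. Consequently, for each fixed~$i$, the function $j \mapsto w(i,j)$ first decreases (while $w_1 \le w_2$) and then increases (once $w_1 \ge w_2$), so its minimum over~$j$ is attained at the smallest index~$j(i)$ where $w_1(i,j(i))$ catches up to $w_2(i,j(i))$, or at $j(i)-1$ just before it. Moreover, as $i$ grows, $w_2$ only shrinks and $w_1$ at fixed~$j$ can only shrink as well, so $j(i)$ is monotonically nondecreasing in~$i$. This allows a standard two-pointer sweep: initialize $i=j=0$, advance $j$ while $w_1(i,j)\le w_2(i,j)$, record the candidate widths at the two pairs straddling the crossover, then increment $i$ and continue advancing $j$ from its current position. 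The two pointers each move at most $n$ times, so only $O(n)$ pairs $(i,j)$ are ever probed.

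Finally I would return the minimum of $w(i,j)$ over all probed pairs, together with the corresponding two-strip, which is the optimum by the reduction in the first step. Since preprocessing takes $O(n)$ after sorting, and the sweep inspects $O(n)$ pairs each in $O(1)$ time, the total running time is dominated by the initial sort: $O(n \log n)$ in general, and $O(n)$ when $P$ is already sorted in orientation $\phi$ (or symmetrically in $\theta$, after swapping the roles of the two strips). The only genuinely delicate point is verifying the monotonicity-based correctness of the two-pointer scan—that the optimum is not missed between pointer advances—but this follows directly from the combined monotonicity of $w_1$ and $w_2$ already spelled out in the text.
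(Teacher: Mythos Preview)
Your proposal is essentially the paper's own argument: the same reduction to $\phi=0$, the same structural observation that $\strip_1$ covers a contiguous $y$-block, the same monotonicity-driven two-pointer sweep over $(i,j)$ with $O(1)$ evaluation of $w_1$ and $w_2$ via Lemma~\ref{lem:2fixed_query}. One small slip in your justification: you write that ``as $i$ grows, $w_2$ only shrinks,'' but the paper's inequality $w_2(i,j)\ge w_2(i-1,j)$ says the opposite---$w_2$ \emph{grows} with $i$; it is exactly this (together with $w_1$ shrinking in $i$) that forces the crossover index $j(i)$ to be nondecreasing and validates the two-pointer scan, so your conclusion stands even though the stated reason needs correcting.
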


\section{One fixed orientation} \label{sec:1fixed}

In this section, we solve the second constrained problem:
given a fixed orientation~$\phi$,
find two strips of minimum width whose union encloses~$P$
such that one of the two strips is in orientation~$\phi$.
Throughout this section, 
a pair of two such strips $(\strip_1, \strip_2)$, where $\strip_1$ is in orientation~$\phi$,
will be simply called a \emph{constrained two-strip},
and assume that~$\phi=0$.

To find a constrained two-strip of minimum width enclosing~$P$,
one could make use of a data structure for the dynamic width maintenance~\cite{e-idmpw-00,c-fdapw-03}.
Observe that there are $O(n^2)$ possible bipartitions of~$P$
induced by a constrained two-strip $(\strip_1, \strip_2)$
since there are $O(n^2)$ distinct subsets of~$P$ that can be enclosed by a horizontal strip~$\strip_1$.
This approach, however, does not seem to avoid a quadratic running time,
since the point set we would maintain undergoes $\Theta(n^2)$ updates.
Another common approach is to apply known techniques,
such as the parametric search~\cite{as-pglp-94}, the expander-based method~\cite{ks-ebago-97}, 
or the one based on a sorted matrix~\cite{gks-sgsopsm-98}.
These techniques also require at least quadratic time overhead.

Despite this difficulty, we present a near-linear $O(n\log^3 n)$-time algorithm.
We first consider the decision problem, for a real parameter~$\wparam > 0$,
asking if there exists a constrained two-strip of width~$\wparam$ enclosing~$P$.
Note that the decision problem can be solved in $O(n\log^3 n)$ time
by a direct application of the machinery of Agarwal and Sharir~\cite{as-odmwpps-91}.
In the following, we show how to shave another logarithmic factor,
while still using the data structure of Agarwal and Sharir.
For the purpose, we first give a brief review on their data structure,
and make some geometric observations, which will be essential 
for the efficiency and correctness of our algorithm.
Once we have an efficient decision algorithm,
we can obtain a sufficiently narrow interval of possible width values such that 
there are only $O(n)$ possible bipartitions of~$P$ we should test.
We then use Chan's structure of dynamic width maintenance~\cite{c-fdapw-03}
to find an optimal two-strip.

\subsection{Data structures for dynamic width decision and maintenance}

Agarwal and Sharir~\cite{as-odmwpps-91} showed that in $O(n \log^3 n)$ time
the \emph{offline dynamic width decision problem} can be solved:
Given a parameter~$\wparam > 0$ and a sequence of 
$n$~insert/delete operations on a set~$S$ of points,
initially consisting of at most $n$~points,
determine whether there is any moment such that $\width(S) \leq \wparam$
during the $n$~updates on~$S$.
Their algorithm builds a segment tree based on the life-spans of the points,
that is, the time intervals in which each point is a member of~$S$,
and traverse it with a secondary data structure~$\ASDS$ 
that maintains necessary information about the width of the current~$S$
using linear space.

The data structure~$\ASDS$ consists of two balanced binary search trees
that store the edges of the convex hull~$\conv(S)$, ordered by their orientations,
and maintains a certain collection of invariants,
which suffice to decide in $O(1)$ time whether or not $\width(S) \leq \wparam$
for the current set~$S$.
Agarwal and Sharir showed that how to update~$\ASDS$
per insertion of a point into~$S$,
and also how to undo the latest insertion, recovering the structure~$\ASDS$
to the status before the latest insertion.
Summarizing, we have:
\begin{lemma}[Agarwal and Sharir~\cite{as-odmwpps-91}] \label{lem:ASDS1}
 Suppose the data structure~$\ASDS$ with a parameter~$\wparam$ 
 has been built on a set~$S$ of $n$~points.
 Then, we can decide whether or not $\width(S) \leq \wparam$ in $O(1)$ time, and 
 $\ASDS$ can be maintained in $O(\log^2 n)$ worst-case time for the following updates:
 inserting a point to~$S$ and undoing the latest insertion.
\end{lemma}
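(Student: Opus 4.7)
The plan is to unpack the Agarwal--Sharir structure and verify each of the three timing claims in turn. I would begin from the classical geometric fact that $\width(S)$ equals the minimum, over all antipodal edge--vertex pairs of $\conv(S)$, of the perpendicular distance from the vertex to the line supporting the edge; consequently, the decision question ``$\width(S) \leq \wparam$?'' reduces to asking whether any such antipodal pair realizes distance at most $\wparam$. Splitting $\conv(S)$ into its upper and lower chains, the antipodal pairs are precisely those read off by merging the two chains in orientation order, which aligns neatly with storing each chain in a balanced BST keyed by edge orientation.

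Once the two trees $T_U$ and $T_L$ are in hand, the invariant I would maintain at every node $u$ is a constant-size summary $\sigma(u)$ of the subtree rooted at $u$: the orientation range spanned by its edges, the extremal vertices needed to evaluate antipodal distances against the opposite chain, and a single boolean flag indicating whether any antipodal pair with its edge in this subtree already certifies $\width(S) \leq \wparam$. Two children's summaries can be merged into the parent's summary in $O(1)$ time, so the decision reduces to a lookup at the two roots, yielding the $O(1)$-time decision bound.

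For the insertion of a point $p$, I would follow the standard convex-hull update: either $p \in \conv(S)$ and nothing changes, or $p$ replaces a contiguous arc $A$ of $\conv(S)$ by two new edges meeting at $p$. The tangents from $p$ can be located by $O(\log n)$ binary descents guided by the summaries; splitting $T_U$ and $T_L$ to excise $A$ and then joining around the two new edges touches $O(\log n)$ balanced-tree nodes. The critical step, from which the second logarithmic factor originates, is updating the antipodal bookkeeping: changing a constant number of edges on one chain can reassign the antipodal partners of $O(\log n)$ edges on the opposite chain, and each such reassignment is implemented by an $O(\log n)$-time search-and-summary-refresh in the opposite tree. Undo is supported by pushing the excised subtree representing $A$, together with the list of modified summaries, onto an undo stack at insertion time; reversing the latest insertion then amounts to popping the stack and reattaching the saved subtree, all within the same $O(\log^2 n)$ budget.

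The main obstacle I would anticipate is proving the $O(\log n)$ bound on the number of antipodal correspondences that must be refreshed per insertion. One must argue that an antipodal pair contributes nontrivially to $\sigma(u)$ only at nodes $u$ lying on a single root-to-leaf path of $T_U$ or $T_L$, so the affected summaries form $O(\log n)$ contiguous chunks, and that the interleaving of the two chains' orientations does not cause a cascade of further updates. This delicate case analysis is the heart of Agarwal and Sharir's argument; once it is in place, the $O(1)$ decision and $O(\log^2 n)$ update bounds follow routinely.
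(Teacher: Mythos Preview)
The paper does not prove this lemma at all: it is stated as a quotation of Agarwal and Sharir's result~\cite{as-odmwpps-91}, preceded only by a one-paragraph informal summary of what the structure~$\ASDS$ stores (two balanced BSTs on the convex-hull edges, keyed by orientation, plus certain invariants). There is therefore nothing in the paper to compare your proposal against; the authors simply import the lemma as a black box.

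Your sketch is a reasonable reconstruction of the Agarwal--Sharir argument, and it matches the paper's informal description of~$\ASDS$. That said, you correctly identify the crux yourself: the claim that only $O(\log n)$ antipodal summaries need to be refreshed per insertion is the nontrivial part, and your proposal stops short of actually establishing it (you say ``one must argue'' and call it ``delicate''). If you intend this as a self-contained proof rather than a pointer to~\cite{as-odmwpps-91}, that step needs to be filled in; otherwise, citing the original paper---as the present paper does---is the appropriate treatment.
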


From these facts,
it is not difficult to see that
the data structure~$\ASDS$ can be applied to solve
\emph{the LIFO(Last-In-First-Out) online dynamic width decision problem}:
Given a parameter~$\wparam>0$ and a set~$S$, initially being empty,
determine whether there is any moment such that $\width(S) \leq \wparam$
in the course of a sequence of LIFO online insertions/deletions on~$S$.
A sequence of online insertions/deletions on~$S$ is called \emph{LIFO}
if each insertion can be arbitrary 
while each deletion occurs always for the last inserted point in~$S$.
\begin{lemma} \label{lem:ASDS}
 The data structure~$\ASDS$ with a given parameter~$\wparam$ on a set~$S$, initially being empty,
 can be maintained in total $O(n\log^2 n)$ time and $O(n)$ space
 under $n$~LIFO online updates on~$S$.
 Thus, in the same time bound, 
 we can solve the LIFO online dynamic width decision problem.
\end{lemma}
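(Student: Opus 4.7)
The plan is to observe that, under LIFO semantics, a deletion must remove the point that was inserted most recently and is still in~$S$; in other words, every LIFO deletion is literally the ``undo the latest insertion'' operation supported by Lemma~\ref{lem:ASDS1}. Hence each of the $n$ online updates falls into exactly one of the two categories already handled by that lemma, and there is nothing new to design.

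More concretely, I would process the LIFO stream one update at a time, maintaining~$\ASDS$ on the current~$S$ (initially empty, so $\ASDS$ consists of two empty balanced BSTs). For an insertion, invoke the insertion routine of Lemma~\ref{lem:ASDS1}, which runs in $O(\log^2 n)$ worst-case time since $|S|\le n$. For a deletion, LIFO guarantees that the target is the last point that was inserted and not yet deleted, so we invoke the undo routine of Lemma~\ref{lem:ASDS1}, again in $O(\log^2 n)$ worst-case time. Summing over the $n$ updates yields the claimed $O(n\log^2 n)$ total time. The space bound is $O(n)$ because $\ASDS$ stores only a representation of the convex hull edges of the current~$S$, whose size is $O(|S|)\le O(n)$; any per-insertion information that must be saved to enable a future undo also amortizes to $O(n)$ space across the $n$ insertions.

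To solve the LIFO online dynamic width decision problem, I would simply query the $O(1)$-time test of Lemma~\ref{lem:ASDS1} after every update, answering ``yes'' the first time it certifies $\width(S)\le\wparam$ and ``no'' if no such moment occurs. This adds only $O(n)$ to the running time, so the overall bound remains $O(n\log^2 n)$ time and $O(n)$ space.

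Honestly, there is no serious obstacle here: the lemma is essentially a re-packaging of Lemma~\ref{lem:ASDS1}, with the sole conceptual step being the identification of LIFO deletions with undo operations. The only mild care needed is to verify that starting from an empty~$S$ is compatible with the hypothesis of Lemma~\ref{lem:ASDS1}, which assumes the structure has been ``built'' on some set; this is immediate since the empty convex hull corresponds to two empty BSTs and all invariants hold vacuously.
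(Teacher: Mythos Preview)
Your proposal is correct and matches the paper's own proof essentially line for line: initialize~$\ASDS$ on the empty set, treat each LIFO deletion as an undo of the latest insertion via Lemma~\ref{lem:ASDS1}, sum the $O(\log^2 n)$ costs, and invoke the $O(1)$ width test after each update. There is nothing to add or correct.
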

\begin{proof}
Initialize the data structure~$\ASDS$ for an empty set~$S$ with parameter~$\wparam$.
For each online update on~$S$,
we perform the following:
If it is an insertion of a point~$p$ to~$S$,
then we update~$\ASDS$ to maintain its invariants
and also record the necessary information to recover the structure
as described in~\cite{as-odmwpps-91}.
If it is a deletion,
then it should be about the last inserted point~$p\in S$,
so we can undo the insertion of~$p$ using the information stored at the moment of its insertion.
By Lemma~\ref{lem:ASDS1}, the total time complexity is $O(n\log^2 n)$
and the space complexity is $O(n)$.

By maintaining the data structure~$\ASDS$,
deciding whether $\width(S) \leq \wparam$ for the current~$S$ 
can be done in $O(1)$ time.
So, the LIFO online dynamic width decision problem can be solved
in $O(n\log^2 n)$ time and $O(n)$ space.
\end{proof}

Chan~\cite{c-fdapw-03} presented how to exactly maintain $\width(S)$
over fully online updates on~$S$.
Its amortized time per update is $O(\sqrt{n} \log^3 n)$,
based on the following data structure.
\begin{lemma}[Chan~\cite{c-fdapw-03}] \label{lem:Chan_width}
 There is a data structure~$\CWDS$ for a set~$S$ of $n$~points
 that supports deletions of points from~$S$ and queries of the following kind:
 given a query point set~$Q$, report $\width(S\cup Q)$.
 The total preprocessing and deletion time is $O(n \log^3 n)$
 and the query time is $O(|Q|\log^3 (n+|Q|))$.
 The space required for maintaining the structure~$\CWDS$ is $O(n \log n)$.
\end{lemma}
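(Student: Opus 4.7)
The plan is to reduce the maintenance of $\width(S \cup Q)$ to semi-dynamic convex-hull maintenance augmented with antipodal-pair queries. The starting observation is that for any planar point set~$R$, the width is realized by an antipodal vertex–edge pair on $\conv(R)$, so it suffices to maintain enough information about $\conv(S)$ to evaluate $\width(\conv(S) \cup \conv(Q))$ on demand. In particular, the query never needs to inspect interior points of~$S$, only its hull.

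The first step would be to build a semi-dynamic convex-hull structure on~$S$ supporting deletions. I would apply the logarithmic method: decompose the current $S$ into $O(\log n)$ canonical buckets of geometrically decreasing size, and store each bucket's upper and lower hulls in balanced binary search trees indexed by edge slope, so that extreme-point queries in a given direction cost $O(\log n)$ per bucket. A deletion marks a point lazily inside its bucket; once a bucket loses a constant fraction of its points, we rebuild it from scratch and, if necessary, merge it with the next bucket to restore the geometric progression. A standard amortization charges $O(\log^2 n)$ per elementary operation inside a rebuilt hull structure, for $O(\log^3 n)$ amortized time per deletion and $O(n)$ preprocessing-plus-deletion time overall up to the stated factors. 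Each point lives in $O(\log n)$ tree nodes, giving the $O(n \log n)$ space bound.

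For a query set~$Q$, first build $\conv(Q)$ in $O(|Q|\log|Q|)$ time. Every candidate antipodal pair realizing $\width(S \cup Q)$ is a (vertex, edge) pair in which one member lies on $\conv(Q)$ and the other on either $\conv(Q)$ or one of the $O(\log n)$ bucket hulls of~$S$. For each of the $O(|Q|)$ edges of $\conv(Q)$, query every bucket for its extreme point in the orthogonal direction in $O(\log n)$ time and record the induced strip width; this handles all pairs whose edge lies on $\conv(Q)$. For the symmetric case, avoid enumerating the $\Omega(n)$ bucket edges by noting that, as the support direction $\theta$ rotates, the extreme vertex of $\conv(S\cup Q)$ on one side is piecewise constant in $\theta$ with $O(|Q|)$ breakpoints coming from edges of $\conv(Q)$; between two consecutive breakpoints the extreme vertex is a fixed vertex of $\conv(Q)$, so the relevant bucket edges are exactly those whose slope falls in the corresponding angular interval, and each such interval can be probed in $O(\log^2(n+|Q|))$ time per bucket via slope-indexed binary search.

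The main obstacle is this second step: bounding the number of bucket edges that must be examined. A naive sweep would cost $\Omega(n)$ and destroy the query bound, so one must exploit the $O(|Q|)$ angular partition induced by $\conv(Q)$ and charge each examined bucket edge to a slope interval of $\conv(Q)$. Once this charging is in place, the total query cost is $O(|Q|)$ slope intervals times $O(\log n)$ buckets times $O(\log^2(n+|Q|))$ per localized search, yielding the claimed $O(|Q|\log^3(n+|Q|))$ query time; the other bounds fall out of the logarithmic-method bookkeeping. Of course, since this is a verbatim restatement of Chan's result~\cite{c-fdapw-03}, the paper can simply cite it rather than carry out the above outline.
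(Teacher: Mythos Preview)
The paper does not prove this lemma; it is quoted directly from Chan~\cite{c-fdapw-03}, with only a one-sentence remark that the $O(n\log n)$ space bound, though not stated explicitly by Chan, can be read off from his construction. So your final sentence is exactly what the paper does.

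Your sketched argument, however, has a genuine gap. The assertion that ``every candidate antipodal pair realizing $\width(S\cup Q)$ has one member on $\conv(Q)$'' is false: if $Q\subset\conv(S)$, or more generally whenever $\width(S\cup Q)=\width(S)$, the determining vertex--edge pair lies entirely on $\conv(S)$ and your query procedure never examines it. Fixing this is not a minor patch, because it amounts to maintaining $\width(S)$ itself under deletions, and width is \emph{not} decomposable across the $O(\log n)$ buckets of the logarithmic method---one cannot recover $\width(S)$ from per-bucket widths or from $O(\log n)$ extreme-point queries. The same issue invalidates the claim that the extreme vertex of $\conv(S\cup Q)$ has only $O(|Q|)$ angular breakpoints: breakpoints also come from hull edges of~$S$, potentially $\Omega(n)$ of them, so your charging scheme for the ``symmetric case'' does not give the stated query bound. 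Chan's actual construction avoids these problems by working in the dual, where width becomes the minimum vertical gap between an upper and a lower envelope of lines, and maintains that gap under line deletions with a tournament-style structure rather than a primal bucket decomposition.
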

Though Chan did not discuss the space requirement for his method,
it is not difficult to see it as stated above from construction~\cite{c-fdapw-03}.
Note that if the online updates are deletions only,
then this results in an $O(n \log^3 n)$-time algorithm
that maintains the exact width.

\subsection{Orientation-constrained width}
Let $S$ be a set of points,
and let $\theta_1 \leq \theta_2$ be two orientations.
Define the \emph{$[\theta_1, \theta_2]$-constrained width} of~$S$ to be
 \[ \width_{[\theta_1, \theta_2]}(S) := \min_{\theta \in [\theta_1, \theta_2]} \width_\theta(S).\]
Note that $\width(S) = \width_{[0,\pi]}(S)$
and $\width_{[\theta, \theta]}(S) = \width_\theta(S) = \width(\strip_\theta(S))$.
Also, define
 \[ \strip_{[\theta_1, \theta_2]}(S) := \bigcap_{\theta \in [\theta_1, \theta_2]} \strip_\theta(S).\]
Note that if $\theta_1\neq\theta_2$, $\strip_{[\theta_1, \theta_2]}(S)$ is
the convex hull of $S$ and two more points from the boundary of the intersection
of two strips~$\strip_{\theta_1}(S)$ and~$\strip_{\theta_2}(S)$.
See \figurename~\ref{fig:sigma_int} for an illustration.

\begin{figure}[tb]
  \centerline{\includegraphics[width=.45\textwidth]{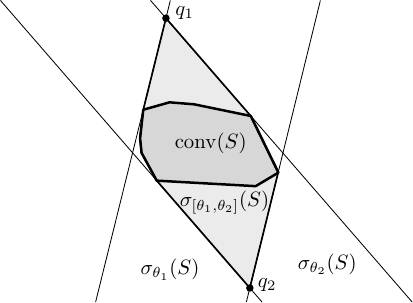}}
  \caption{Illustration of~$\strip_{[\theta_1,\theta_2]}(S) = \conv(S\cup\{q_1, q_2\})$ 
  (shaded in light gray) when $\theta_1 < \theta_2$.
  }
  \label{fig:sigma_int}
\end{figure}

\begin{lemma} \label{lem:constrained_width}
 For any finite set~$S$ of points, it holds that
  $ \width_{[\theta_1, \theta_2]}(S) = \width(\strip_{[\theta_1, \theta_2]}(S))$.
\end{lemma}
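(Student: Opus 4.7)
My plan is to prove the equality by a sandwich argument, with the key geometric fact being that the convex polygon $K := \strip_{[\theta_1, \theta_2]}(S) = \conv(S \cup \{q_1, q_2\})$ has all of its edges oriented inside $[\theta_1, \theta_2]$. First I would establish a local identity: for every $\theta \in [\theta_1, \theta_2]$, the chain $S \subseteq K \subseteq \strip_\theta(S)$ holds (the left inclusion because every $\strip_{\theta'}(S)$ contains $\conv(S)$ and hence so does their intersection; the right by definition of the intersection). Applying $\width_\theta(\cdot)$ along this chain and using that $\strip_\theta(S)$ is already a $\theta$-strip of width $\width_\theta(S)$ forces $\width_\theta(K) = \width_\theta(S)$. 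Minimizing over $\theta \in [\theta_1, \theta_2]$ immediately gives $\width(K) \leq \width_{[\theta_1, \theta_2]}(S)$.

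For the reverse inequality I would classify the edges of $K$. Each edge is of one of two types: (a) a segment supported by a bounding line of $\strip_{\theta_1}(S)$ or $\strip_{\theta_2}(S)$, whose orientation is exactly $\theta_1$ or $\theta_2$; or (b) an edge inherited from $\conv(S)$. For a type-(b) edge $e$ with line orientation $\theta_e$, the line through $e$ is the supporting line of $\conv(S)$ at orientation $\theta_e$ and hence a bounding line of $\strip_{\theta_e}(S)$; if $\theta_e$ were outside $[\theta_1, \theta_2]$, then the outward normal of $e$ would fall outside the two arcs $\{\theta \pm \pi/2 : \theta \in [\theta_1, \theta_2]\}$ swept by strip-boundary normals, so the polygon $K$ would extend past $e$ in that outward direction all the way to one of the corners $q_i$ of the parallelogram $\strip_{\theta_1}(S) \cap \strip_{\theta_2}(S)$. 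Hence every edge of $K$ has orientation in $[\theta_1, \theta_2]$. Invoking the classical rotating-calipers fact that the minimum width of a convex polygon is attained in a direction parallel to one of its edges, $\width(K) = \width_{\theta^*}(K)$ for some $\theta^* \in [\theta_1, \theta_2]$; combined with the first step, this equals $\width_{\theta^*}(S) \geq \width_{[\theta_1, \theta_2]}(S)$, closing the argument.

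The main obstacle I anticipate is the edge-classification step: carefully checking that edges of $\conv(S)$ whose orientation falls outside $[\theta_1, \theta_2]$ are indeed suppressed on $\bd K$, being replaced by the two corners $q_1, q_2$ of $\strip_{\theta_1}(S) \cap \strip_{\theta_2}(S)$. I expect the cleanest way to phrase this is through the outward-normal cones of $K$: the normals realized along $\bd K$ as $\theta$ ranges over $[\theta_1, \theta_2]$ trace the two arcs above, and in the complementary angular sectors the boundary of $K$ is pinned at the two corners $q_i$. Once that angular accounting is in place, the sandwich identity and the rotating-calipers principle combine without further work.
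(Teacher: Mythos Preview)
Your plan is correct and shares the same two-step skeleton as the paper: first show $\width_{[\theta_1,\theta_2]}(S) = \width_{[\theta_1,\theta_2]}(K)$ for $K := \strip_{[\theta_1,\theta_2]}(S)$, then show that the unconstrained width of~$K$ is attained at some orientation in~$[\theta_1,\theta_2]$. The details differ in the second step. You classify the edges of~$K$, argue they all have orientation in~$[\theta_1,\theta_2]$, and invoke rotating calipers. The paper instead argues by contradiction: if the minimum-width strip of~$K$ had orientation outside~$[\theta_1,\theta_2]$, each of its two bounding lines would meet~$K$ only at one of the two auxiliary corners~$q_1, q_2$, violating the standard fact that a minimum-width enclosing strip is supported by at least three extreme points. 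The paper's route is a bit shorter because it avoids the edge case-analysis entirely. Incidentally, the obstacle you flag admits a cleaner resolution than the outward-normal accounting you sketch: since $K = \bigcap_{\theta \in [\theta_1,\theta_2]} \strip_\theta(S)$ by definition, every edge of~$K$ lies on a bounding line of some~$\strip_\theta(S)$ with $\theta \in [\theta_1,\theta_2]$, so its orientation is in that interval immediately.
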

\begin{proof}
If $\theta_1 = \theta_2$, then the lemma is trivially true,
so we assume that $\theta_1 < \theta_2$.

Since $S \subseteq \strip_{[\theta_1, \theta_2]}(S)$,
we have
 \[\width_{[\theta_1, \theta_2]}(S)\leq\width_{[\theta_1, \theta_2]}(\strip_{[\theta_1, \theta_2]}(S)),\]
on one hand.
On the other hand, consider any strip~$\strip_\theta(S)$ that determines
$\width_{[\theta_1, \theta_2]}(S)$ for some $\theta \in [\theta_1, \theta_2]$.
Then, by definition, we have $\strip_{[\theta_1, \theta_2]}(S) \subseteq \strip_\theta(S)$.
This implies that
 \[\width_{[\theta_1, \theta_2]}(S)\geq\width_{[\theta_1, \theta_2]}(\strip_{[\theta_1, \theta_2]}(S)),\]
so we have the equality
 \[\width_{[\theta_1, \theta_2]}(S)=\width_{[\theta_1, \theta_2]}(\strip_{[\theta_1, \theta_2]}(S)).\]

Now, consider a strip~$\strip_\theta(\strip_{[\theta_1, \theta_2]}(S))$ that determines $\width(\strip_{[\theta_1, \theta_2]}(S))$.
We claim that $\theta \in [\theta_1, \theta_2]$.
If the claim is true, then we conclude
 \[ \width(\strip_{[\theta_1, \theta_2]}(S)) = \width_{[\theta_1,\theta_2]}(\strip_{[\theta_1, \theta_2]}(S)) = \width_{[\theta_1,\theta_2]}(S),\]
and the lemma is proved.

In order to prove the claim, suppose that $\theta \notin [\theta_1,\theta_2]$.
As discussed above and illustrated in \figurename~\ref{fig:sigma_int},
$\strip_{[\theta_1, \theta_2]}(S)$ is the convex hull of $S$ and two more points $q_1$ and $q_2$
that are two vertices of the boundary of 
the intersection $\strip_{\theta_1}(S) \cap \strip_{\theta_2}(S)$.
By the construction, 
the two bounding lines of $\strip_\theta(\strip_{[\theta_1, \theta_2]}(S))$ contains these two points $q_1$ and $q_2$
and nothing else from the original set~$S$ since $\theta \notin [\theta_1, \theta_2]$.
This leads to a contradiction to the fact that
at least three points of~$\strip_{[\theta_1, \theta_2]}(S)$ lie on the boundary of the minimum-width strip enclosing~$\strip_{[\theta_1, \theta_2]}(S)$.
The claim is thus proved.
\end{proof}

As will be seen later,
we are also interested in \emph{orientation-constrained width decision queries}.
More precisely,
we are given a query interval $[\theta_1,\theta_2] \subseteq [0,\pi)$ of orientations
and want to decide whether there exists $\theta \in [\theta_1,\theta_2]$
such that the width of $\strip_\theta(S)$ is at most~$\wparam$
or, equivalently, whether $\width_{[\theta_1, \theta_2]}(S) \leq \wparam$.
It turns out that the structure~$\ASDS$ by Agarwal and Sharir is helpful 
for this type of queries as well,
with the aid of Lemma~\ref{lem:constrained_width}.
\begin{lemma} \label{lem:constrained_width-query}
 Provided the data structure~$\ASDS$ on a point set~$S$ of $n$~points with parameter~$\wparam$ 
 is available,
 an orientation-constrained width decision query on~$S$ for width~$\wparam$ can be answered 
 in $O(\log^2 n)$ worst-case time using $O(\log n)$ additional space.
\end{lemma}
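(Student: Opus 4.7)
The plan is to apply Lemma~\ref{lem:constrained_width}, which reduces the query to deciding whether $\width(\strip_{[\theta_1,\theta_2]}(S)) \leq \wparam$. By the discussion preceding that lemma, $\strip_{[\theta_1,\theta_2]}(S) = \conv(S \cup \{q_1, q_2\})$, where $q_1$ and $q_2$ are the two vertices of the parallelogram $\strip_{\theta_1}(S) \cap \strip_{\theta_2}(S)$ that lie outside $\conv(S)$. Thus it suffices to identify $q_1$ and $q_2$, insert them temporarily into $\ASDS$, invoke the $O(1)$-time width decision of Lemma~\ref{lem:ASDS1} on the augmented set, and then undo both insertions so that $\ASDS$ is restored to its pre-query state.

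First I would locate $q_1$ and $q_2$ from $\ASDS$. Since $\ASDS$ keeps the edges of $\conv(S)$ sorted by orientation in two balanced BSTs, standard binary searches on edge orientations find, in $O(\log n)$ time, the extreme vertices of $\conv(S)$ in the two directions perpendicular to $\theta_1$ and the two perpendicular to $\theta_2$. These four vertices determine the four bounding lines of $\strip_{\theta_1}(S)$ and $\strip_{\theta_2}(S)$, and pairwise intersecting these lines yields the four corners of $\strip_{\theta_1}(S)\cap \strip_{\theta_2}(S)$. A side-test of each corner against the adjacent hull edges incident to the corresponding extreme vertices identifies the two corners outside $\conv(S)$; these are $q_1$ and $q_2$.

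Next I would feed $q_1$ and $q_2$ to $\ASDS$ as two successive insertions, each costing $O(\log^2 n)$ worst-case time by Lemma~\ref{lem:ASDS1}, and then read off the answer in $O(1)$ time. Finally I would undo the two insertions in reverse order, again $O(\log^2 n)$ time each. The undo bookkeeping for each insertion is proportional to the depth of the affected paths in the two BSTs of $\ASDS$, giving $O(\log n)$ additional working space in total; summing the four $O(\log^2 n)$ operations yields the claimed $O(\log^2 n)$ query time. The correctness of the answer follows from combining the width decision applied to $S\cup\{q_1,q_2\}$ with Lemma~\ref{lem:constrained_width} and the identity $\width(\conv(S \cup \{q_1,q_2\})) = \width(S\cup\{q_1,q_2\})$.

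The main obstacle I anticipate is the clean identification of $q_1$ and $q_2$ from $\ASDS$ alone: one must argue that exactly two of the four parallelogram corners lie outside $\conv(S)$ (already implicit in the figure accompanying Lemma~\ref{lem:constrained_width}) and that these two can be pinned down by $O(1)$ binary searches on edge orientations in the BSTs, without materializing the full convex hull. Once this local structural fact is established, everything else is a direct assembly of the black boxes already provided by Lemmas~\ref{lem:ASDS1} and~\ref{lem:constrained_width}.
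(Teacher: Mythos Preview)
Your proposal is correct and follows essentially the same route as the paper: reduce via Lemma~\ref{lem:constrained_width} to $\width(S\cup\{q_1,q_2\})$, locate $q_1,q_2$ from the two strips via $O(\log n)$ searches in the BSTs of~$\ASDS$, perform two insertions, read the decision, and undo. The only difference is cosmetic---the paper separates out the degenerate case $\theta_1=\theta_2$ and is terser about finding $q_1,q_2$, whereas you spell out the extreme-point searches and flag the ``exactly two corners outside $\conv(S)$'' step as something to verify.
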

\begin{proof}
Let $[\theta_1, \theta_2]$ be a query interval 
for an orientation-constrained width decision query.
By Lemma~\ref{lem:constrained_width},
we know that $\width_{[\theta_1, \theta_2]}(S) = \width(\strip_{[\theta_1,\theta_2]}(S))$.

If $\theta_1=\theta_2$, then this can be done by
explicitly computing the strip~$\strip_{\theta_1}(S)$.
This can be done in $O(\log n)$ time by searching the internal binary search trees in~$\ASDS$.
So, we assume that $\theta_1 < \theta_2$.
As described above,
in this case, 
$\strip_{[\theta_1,\theta_2]}(S) = \conv(S \cup \{q_1, q_2\})$, where $q_1$ and $q_2$
are two vertices of the boundary of 
the intersection $\strip_{\theta_1}(S) \cap \strip_{\theta_2}(S)$.
(See \figurename~\ref{fig:sigma_int}.)
Thus, we have
 \[ \width_{[\theta_1, \theta_2]}(S) = 
  \width(\strip_{[\theta_1,\theta_2]}(S)) = \width(S \cup \{q_1, q_2\}).\]
In order to decide whether $\width_{[\theta_1, \theta_2]}(S) \leq \wparam$,
we insert two points $q_1$ and $q_2$ into~$S$ and update~$\ASDS$ accordingly.
This takes $O(\log^2 n)$ time by Lemma~\ref{lem:ASDS1}.
Afterwards, we delete $q_2$ and $q_1$ from~$S$ in this order to recover~$\ASDS$.
This can also be done by undoing the latest two insertions,
so we spend $O(\log^2 n)$ time again by Lemma~\ref{lem:ASDS1}.
In this process, we spend $O(\log n)$ additional space
to record at most $O(\log n)$ subtrees in~$\ASDS$ that represents the set of removed edges 
from~$\conv(S)$ by inserting $q_1$ and $q_2$,
which is freed right after undoing the insertions~\cite{as-odmwpps-91}.
\end{proof}

\begin{figure}[bt]
  \centerline{\includegraphics[width=.90\textwidth]{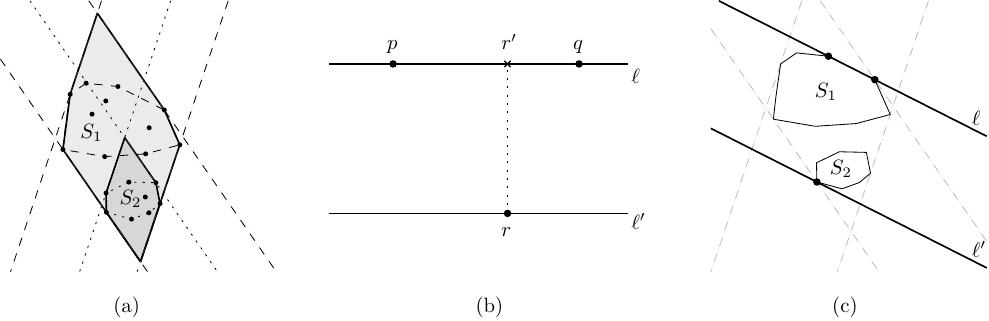}}
  \caption{Illustrations to the proof of Lemma~\ref{lem:constrained_width2}.
  }
  \label{fig:constrained_width2}
\end{figure}

Now, consider two sets $S_1$ and $S_2$ of points in the plane
that can be separated by a line, that is, $\conv(S_1) \cap \conv(S_2) = \emptyset$.
Then, there are exactly two outer common tangent lines~$\ell_1$ and~$\ell_2$.
Let $\theta_1 \leq \theta_2$ be the orientations of $\ell_1$ and $\ell_2$.
We say that $S_1$ \emph{dominates}~$S_2$
if $\strip_{[\theta_1,\theta_2]}(S_2) \subseteq \strip_{[\theta_1,\theta_2]}(S_1)$.
By construction, note that either $S_1$ or $S_2$ dominates the other.
We also mean by the distance between two convex, compact sets~$A$ and~$B$, denoted by~$d(A,B)$,
the minimum length of translation vectors~$\tau$ such that $A$ and $B+\tau$ have a common point.
\begin{lemma} \label{lem:constrained_width2}
 With the above notations, suppose that $S_1$ dominates~$S_2$.
 Then, it holds that:
 \begin{enumerate}[(i)] \denseitems
  \item $\width_{[\theta_1,\theta_2]}(S_1\cup S_2) = \width_{[\theta_1,\theta_2]}(S_1)$.
  \item If $\width(S_1 \cup S_2) < d(\conv(S_1), \conv(S_2))$, then
   $\width(S_1\cup S_2) = \width_{[\theta_1,\theta_2]}(S_1)$.
 \end{enumerate}
\end{lemma}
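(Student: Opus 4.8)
The plan is to derive (i) — in the sharper, pointwise form that $\width_\theta(S_1\cup S_2)=\width_\theta(S_1)$ for every $\theta\in[\theta_1,\theta_2]$ — straight from the definition of domination, and then to obtain (ii) from (i) together with the classical rotating-calipers description of a minimum-width strip. For (i): since $S_2\subseteq\strip_\theta(S_2)$ for each orientation $\theta$, we have $S_2\subseteq\strip_{[\theta_1,\theta_2]}(S_2)$, and domination then gives $S_2\subseteq\strip_{[\theta_1,\theta_2]}(S_2)\subseteq\strip_{[\theta_1,\theta_2]}(S_1)\subseteq\strip_\theta(S_1)$ for every $\theta\in[\theta_1,\theta_2]$. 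Hence for such $\theta$ the strip $\strip_\theta(S_1)$ encloses all of $S_1\cup S_2$, so $\width_\theta(S_1\cup S_2)\le\width_\theta(S_1)$; the reverse inequality is immediate since any strip enclosing $S_1\cup S_2$ also encloses $S_1$. Minimizing the resulting identity over $\theta\in[\theta_1,\theta_2]$ proves (i).

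For (ii), one direction comes for free, with no use of the hypothesis: the pointwise identity above yields $\min_{\theta\in[\theta_1,\theta_2]}\width_\theta(S_1\cup S_2)=\width_{[\theta_1,\theta_2]}(S_1)$, and since $[\theta_1,\theta_2]\subseteq[0,\pi)$ this forces $\width(S_1\cup S_2)\le\width_{[\theta_1,\theta_2]}(S_1)$. So it remains to show $\width(S_1\cup S_2)\ge\width_{[\theta_1,\theta_2]}(S_1)$ under the assumption $\width(S_1\cup S_2)<d(\conv(S_1),\conv(S_2))$, and for this it suffices to prove that \emph{some} minimum-width strip of $S_1\cup S_2$ has its orientation in $[\theta_1,\theta_2]$: by (i) its width is then $\width_\theta(S_1)\ge\width_{[\theta_1,\theta_2]}(S_1)$, finishing the proof. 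Accordingly, let $\strip^*$ be a minimum-width strip of $S_1\cup S_2$ with orientation $\theta^*$, and suppose for contradiction that $\theta^*\notin[\theta_1,\theta_2]$.

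I would now invoke the standard fact that such a $\strip^*$ can be chosen with one bounding line flush with an edge $e$ of $\conv(S_1\cup S_2)$ while the opposite bounding line passes through a vertex $v$ of $\conv(S_1\cup S_2)$ whose orthogonal projection $v'$ onto the first line lies on $e$, so that $\width(\strip^*)=|\seg{vv'}|$. Then distinguish cases according to $e$. If $e$ is one of the two outer common tangent segments bounding $\conv(S_1\cup S_2)$, then $\theta^*\in\{\theta_1,\theta_2\}\subseteq[\theta_1,\theta_2]$, contradicting our assumption. If $e$ is an edge of $\conv(S_1)$ and $v$ is a vertex of $\conv(S_2)$ (or vice versa), then $v\in\conv(S_2)$ and $v'\in e\subseteq\conv(S_1)$, so $\width(S_1\cup S_2)=|\seg{vv'}|\ge d(\conv(S_1),\conv(S_2))$, contradicting the hypothesis. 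The only surviving possibility is that $e$ and $v$ belong to the same hull, say $\conv(S_i)$; then $\strip^*=\strip_{\theta^*}(S_i)$, this strip additionally encloses the other hull $\conv(S_{3-i})$, and $\width_{\theta^*}(S_i)=\width(S_1\cup S_2)<d(\conv(S_1),\conv(S_2))$ with $\theta^*\notin[\theta_1,\theta_2]$.

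Excluding this last configuration is precisely where the domination hypothesis has to be used, and it is the step I expect to be the main obstacle. The intended route is to use that the two contact points of $\ell_1$ and $\ell_2$ on $\conv(S_i)$ split $\bd\conv(S_i)$ into an arc whose edge orientations lie in $[\theta_1,\theta_2]$ and a complementary arc, that only the edges from the former arc appear on $\bd\conv(S_1\cup S_2)$ and hence can serve as $e$ once the two hulls are this far apart, and — invoking the inclusion $\strip_{[\theta_1,\theta_2]}(S_{3-i})\subseteq\strip_{[\theta_1,\theta_2]}(S_i)$ coming from domination (the remark that one of $S_1,S_2$ dominates the other lets us assume $S_1$ dominates $S_2$) — that a strip of orientation $\theta^*\notin[\theta_1,\theta_2]$ that is flush with $\conv(S_i)$ on both sides and still covers $\conv(S_{3-i})$ would necessarily be wider than $d(\conv(S_1),\conv(S_2))$. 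Any of these arguments yields the contradiction, after which $\theta^*\in[\theta_1,\theta_2]$ and (ii) follows from (i) as explained above.
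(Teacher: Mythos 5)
Your proof of part (i) is correct and matches the paper's: the chain of inclusions $S_2\subseteq\strip_{[\theta_1,\theta_2]}(S_2)\subseteq\strip_{[\theta_1,\theta_2]}(S_1)\subseteq\strip_\theta(S_1)$ for $\theta\in[\theta_1,\theta_2]$ gives the pointwise identity $\strip_\theta(S_1\cup S_2)=\strip_\theta(S_1)$, which is exactly what the paper derives (it packages it as $\strip_{[\theta_1,\theta_2]}(S_1\cup S_2)=\strip_{[\theta_1,\theta_2]}(S_1)$ and then appeals to Lemma~\ref{lem:constrained_width}).

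For part (ii), you set up the right reduction: show that a minimum-width strip $\strip^*$ of $S_1\cup S_2$ has orientation $\theta^*\in[\theta_1,\theta_2]$, then conclude via (i). Your rotating-calipers characterization (edge $e$ flush with one bounding line, vertex $v$ on the other with foot $v'\in e$) is the same certificate the paper uses (it calls the points $p,q,r$ with $e=\seg{pq}$, $v=r$). Your case~1 and case~2 dispositions are both correct. But your case~3 — $e$ and $v$ both on the same hull $\conv(S_i)$ — is left genuinely open, and you flag it yourself. The sketch you offer does not hold: the arc of $\bd\conv(S_i)$ that survives on $\bd\conv(S_1\cup S_2)$ is not the arc whose edge orientations lie in $[\theta_1,\theta_2]$ (for the dominating set this visible arc spans strictly more than $\pi$ in direction, so its edge orientations cover \emph{all} of $[0,\pi)$, with $[\theta_1,\theta_2]$ covered twice; for the dominated set the visible arc spans the \emph{complementary} range). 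So the classification ``edges from the former arc in $[\theta_1,\theta_2]$, complementary arc hidden'' that you want to lean on is simply not what happens, and ``any of these arguments yields the contradiction'' is not warranted.

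The paper's proof closes exactly this gap with one clean observation that you never state: if $\theta^*\notin[\theta_1,\theta_2]$, then the two bounding lines $\ell,\ell'$ of $\strip^*$ cannot both meet the same $S_i$ — i.e.\ $\ell\cap S_i\neq\emptyset$ iff $\ell'\cap S_i=\emptyset$, for $i=1,2$. That observation rules out your case~3 outright (and also case~1), forcing the certificate to have $p,q$ on one hull and $r$ on the other, which is precisely your case~2, already excluded by the distance hypothesis. So the structure of your argument is right, but the single geometric fact that makes it go through is missing, and what you substitute for it is incorrect. To repair the proof, replace the arc-based sketch by a direct argument for that observation (e.g.\ note that for $\theta\notin[\theta_1,\theta_2]$ the two supporting points of $\conv(S_1\cup S_2)$ at orientation $\theta$ lie on the two arcs contributed by $\conv(S_1)$ and $\conv(S_2)$, one on each, which can be read off from how the two tangent-segment edges partition the direction circle).
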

\begin{proof}
Since $S_1$ dominates~$S_2$,
we have
 \[S_2 \subseteq \strip_{[\theta_1,\theta_2]}(S_2) \subseteq \strip_{[\theta_1,\theta_2]}(S_1).\]
Lemma~\ref{lem:constrained_width} implies that
\[ \strip_{[\theta_1,\theta_2]}(S_1 \cup S_2) = \strip_{[\theta_1,\theta_2]}(S_1),\]
so the first statement~(i) follows.
See \figurename~\ref{fig:constrained_width2}(a).

Suppose $\width(S_1 \cup S_2) < d(\conv(S_1), \conv(S_2))$.
Let $\strip^* = \strip_{\theta^*}(S_1 \cup S_2)$ be 
a minimum-width strip enclosing $S_1 \cup S_2$ whose orientation is~$\theta^*$.
We claim that $\theta^* \in [\theta_1, \theta_2]$.
If this claim is true, then, by definition, we have 
 \[ \strip_{[\theta_1,\theta_2]}(S_1) = \strip_{[\theta_1,\theta_2]}(S_1 \cup S_2) \subseteq 
   \strip_{\theta^*}(S_1 \cup S_2) = \strip^*,\]
so
 \[ \width(\strip_{[\theta_1,\theta_2]}(S_1)) \leq \width(S_1 \cup S_2),\]
on one hand.
On the other hand, since $S_1\cup S_2$ is a subset of $\strip_{[\theta_1, \theta_2]}(S_1 \cup S_2) = \strip_{[\theta_1,\theta_2]}(S_1)$,
we also have 
 \[ \width(\strip_{[\theta_1,\theta_2]}(S_1)) \geq \width(S_1 \cup S_2),\]
and the second statement~(ii) is thus proved.

Hence, we are done by proving the claim that $\theta^* \in [\theta_1, \theta_2]$.
As $\strip^*$ is minimal among those enclosing $S_1\cup S_2$, 
its boundary contains three points $p, q, r$ from $S_1 \cup S_2$ such that
$p$ and $q$ lie on a common bounding line~$\ell$ of~$\strip^*$,
$r$ lies on the other bounding line~$\ell'$,
and the perpendicular foot~$r'$ of~$r$ to~$\ell$ lies in between~$p$ and~$q$.
See \figurename~\ref{fig:constrained_width2}(b) for an illustration.

We first exclude the possibility that both $p$ and $q$ belong to a common set, $S_1$ or $S_2$,
and $r$ to the other.
Suppose for a contradiction that, say, $p, q\in S_1$ and $r\in S_2$.
By our assumption that $d(\conv(S_1),\conv(S_2))>\width(S_1 \cup S_2)$,
the distance~$d$ from~$r$ to its perpendicular foot~$r'$ is strictly larger than~$\width(S_1 \cup S_2)$,
while, however, the distance~$d$ is also the width of~$\strip^*$,
a contradiction to the assumption that the width of~$\strip^*$ determines $\width(S_1 \cup S_2)$.

Now, suppose that $\theta^* \notin [\theta_1,\theta_2]$.
Then, observe that both bounding lines~$\ell$ and $\ell'$ of~$\strip^*$
cannot intersect a common set, $S_1$ or $S_2$;
that is, $\ell\cap S_i \neq \emptyset$ if and only if $\ell' \cap S_i = \emptyset$, for $i=1,2$.
(See \figurename~\ref{fig:constrained_width2}(c).)
This implies that $p, q\in \ell$ belong to one common set, $S_1$ or $S_2$,
and $r\in \ell'$ belongs to the other set,
which is forbidden by the above argument.
Thus, we have $\theta \in [\theta_1, \theta_2]$,
and the claim is true.
\end{proof}

\subsection{Decision algorithm}
We describe our decision algorithm
for a given parameter~$\wparam>0$.
The points $p_1, \ldots, p_n \in P$ are assumed to be sorted in the $y$-coordinates,
and we let $P_i = \{p_1, \ldots, p_i\}$ and $\overline{P}_i = P\setminus P_i$.

\begin{figure}[t]
  \centerline{\includegraphics[width=.65\textwidth]{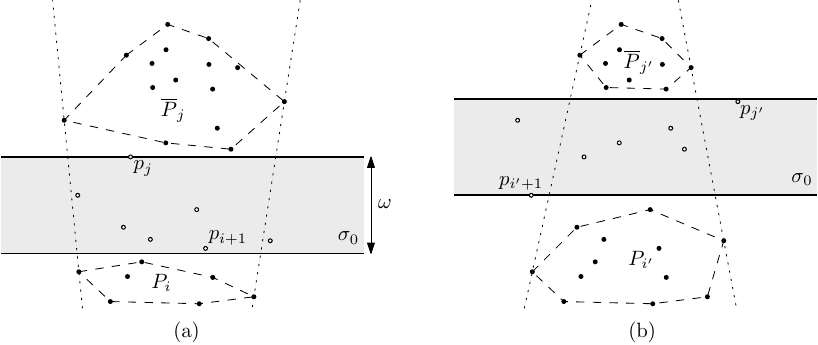}}
  \caption{Snapshots of the sweeping process: 
  (a) $\overline{P}_j$ dominates~$P_i$ and (b) $P_{i'}$ dominates~$\overline{P}_{j'}$.}
  \label{fig:1fixed_decision}
\end{figure}

Consider a horizontal strip~$\strip_0$ of width~$\wparam$
and sweep the plane by translating~$\strip_0$ upwards from below.
At any moment of this sweeping process, the points $P\setminus \strip_0$ outside of~$\strip_0$
is partitioned into $P_i$ and $\overline{P}_j$ for some $0\leq i\leq j\leq n$
such that all points of~$P_i$ lies below~$\strip_0$ and 
all points of~$\overline{P}_j$ lies above~$\strip_0$.
Note that the indices~$i$ and $j$ representing such a separation by~$\strip_0$
do not decrease during the process,
and it always holds that $d(\conv(P_i), \conv(\overline{P}_j)) > \wparam$.
Also, by this monotonicity of $i$ and $j$,
observe that $\overline{P}_j$ dominates $P_i$ from the beginning until some moment
and $P_i$ dominates $\overline{P}_j$ from that moment to the end.
See \figurename~\ref{fig:1fixed_decision}.

We maintain convex hulls, $\conv(P_i)$ and $\conv(\overline{P}_j)$,
and the data structure~$\ASDS$ with parameter~$\wparam$ on set~$P_i$.
Maintaining the convex hulls $\conv(P_i)$ and $\conv(\overline{P}_j)$
can be done in $O(n\log n)$ total time~\cite{bj-dpch-02, hs-ompc-91};
in our case, updates on $P_i$ and $\overline{P}_j$ are offline.
Initially, we have $i=j=0$,
so $\conv(P_i) = \emptyset$, $\conv(\overline{P}_j) = \conv(P)$, and $\ASDS$ is initialized for an empty set~$P_0$.

In the main loop of our decision algorithm,
as $\strip_0$ moves upwards,
we do nothing until $P_i$ becomes dominating~$\overline{P}_j$
while we maintain the data structure~$\ASDS$ by inserting relevant points.
For each pair $(i, j)$ such that $P_i$ dominates~$\overline{P}_j$, 
we decide whether $\width(P_i \cup \overline{P}_j) \leq \wparam$ or not.
If it is the case, we stop the algorithm and report \texttt{YES}; otherwise, we proceed the algorithm.
The decision is made as follows:
We first compute the outer common tangents of $\conv(P_i)$ and $\conv(\overline{P}_j)$ 
in $O(\log^2 n)$ time
using $\conv(P_i)$ and $\conv(\overline{P}_j)$. (See \figurename~\ref{fig:1fixed_decision}.)
Let $\theta_1 \leq \theta_2$ be the orientations of the two common tangents.
We then perform an orientation-constrained width decision query on~$P_i$
with query interval~$[\theta_1,\theta_2]$.
This can be done in $O(\log^2 n)$ time by Lemma~\ref{lem:constrained_width-query} using~$\ASDS$.
If the answer to this query is positive, then we conclude that $\width(P_i \cup \overline{P}_j) \leq \wparam$;
otherwise, we conclude that $\width(P_i \cup \overline{P}_j) > \wparam$.
The correctness of this decision is guaranteed 
by the following lemma, which is a direct application of Lemma~\ref{lem:constrained_width2}.
\begin{lemma} \label{lem:1fixed_decision}
 For $(i,j)$ such that $P_i$ dominates $\overline{P}_j$ and $d(\conv(P_i), \conv(\overline{P}_j)) > \wparam$,
 we have
 $\width(P_i \cup \overline{P}_j) \leq \wparam$ if and only if 
 $\width_{[\theta_1, \theta_2]}(P_i) \leq \wparam$.
\end{lemma}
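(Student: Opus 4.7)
The statement is advertised as a direct application of Lemma~\ref{lem:constrained_width2}, so the plan is essentially to verify that the hypotheses of that lemma hold in the present setting and then to combine its two conclusions with the trivial monotonicity $\width(S) \leq \width_{[\theta_1,\theta_2]}(S)$ (which follows immediately from the definition, since minimizing over a subinterval can only increase the infimum). Set $S_1 := P_i$ and $S_2 := \overline{P}_j$. Since these two sets are separated by the horizontal strip~$\strip_0$ of positive width, their convex hulls are disjoint, so the two outer common tangents exist and their orientations~$\theta_1 \leq \theta_2$ are exactly those used in the statement. By the hypothesis of the lemma, $P_i$ dominates~$\overline{P}_j$, which is precisely the domination hypothesis of Lemma~\ref{lem:constrained_width2} with $S_1=P_i$.

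For the ``if'' direction, I would argue as follows: by Lemma~\ref{lem:constrained_width2}(i),
\[
\width_{[\theta_1,\theta_2]}(P_i \cup \overline{P}_j) \;=\; \width_{[\theta_1,\theta_2]}(P_i).
\]
Combining this with $\width(P_i \cup \overline{P}_j) \leq \width_{[\theta_1,\theta_2]}(P_i \cup \overline{P}_j)$ (which holds because $[\theta_1,\theta_2] \subseteq [0,\pi)$), the assumption $\width_{[\theta_1,\theta_2]}(P_i) \leq \wparam$ yields $\width(P_i \cup \overline{P}_j) \leq \wparam$ immediately.

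For the ``only if'' direction, I would invoke Lemma~\ref{lem:constrained_width2}(ii). The assumption $\width(P_i \cup \overline{P}_j) \leq \wparam$ together with the hypothesis $d(\conv(P_i), \conv(\overline{P}_j)) > \wparam$ gives $\width(P_i \cup \overline{P}_j) < d(\conv(P_i), \conv(\overline{P}_j))$, so part~(ii) applies and delivers
\[
\width(P_i \cup \overline{P}_j) \;=\; \width_{[\theta_1,\theta_2]}(P_i),
\]
from which $\width_{[\theta_1,\theta_2]}(P_i) \leq \wparam$ follows at once.

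There is essentially no obstacle beyond bookkeeping: the only thing to double-check is that the orientations $\theta_1,\theta_2$ defined in the statement of Lemma~\ref{lem:1fixed_decision} indeed coincide with the orientations of the two outer common tangents of $\conv(P_i)$ and $\conv(\overline{P}_j)$ as defined in the setup preceding Lemma~\ref{lem:constrained_width2}, and that the strict inequality $d(\conv(P_i),\conv(\overline{P}_j)) > \wparam$ is enough to trigger the strict inequality needed in part~(ii) when $\width(P_i \cup \overline{P}_j) \leq \wparam$. Both checks are routine, so the proof is just a two-line deduction in each direction.
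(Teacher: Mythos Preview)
Your proposal is correct and matches the paper's own proof essentially line for line: both directions invoke Lemma~\ref{lem:constrained_width2}(i) and~(ii) in exactly the way you describe, using the chain $\width(P_i\cup\overline{P}_j)\le \width_{[\theta_1,\theta_2]}(P_i\cup\overline{P}_j)=\width_{[\theta_1,\theta_2]}(P_i)$ for the ``if'' part and the strict inequality $\width(P_i\cup\overline{P}_j)\le\wparam< d(\conv(P_i),\conv(\overline{P}_j))$ to trigger part~(ii) for the ``only if'' part.
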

\begin{proof}
Recall that $P_i$ and $\overline{P}_j$ are separated by the horizontal strip~$\strip_0$ of width~$\wparam$,
so we have $d(\conv(P_i), \conv(\overline{P}_j)) > \wparam$.
Also, note that 
 \[ \width(P_i \cup \overline{P}_j) \leq \width_{[\theta_1, \theta_2]}(P_i \cup \overline{P}_j) 
     = \width_{[\theta_1, \theta_2]}(P_i),\]
in general by Lemma~\ref{lem:constrained_width2}(i).
Thus,
if $\width_{[\theta_1, \theta_2]}(P_i) \leq \wparam$,
then we have
 \[ \width(P_i \cup \overline{P}_j) \leq \width_{[\theta_1, \theta_2]}(P_i) \leq \wparam.\]

If $\width(P_i\cup \overline{P}_j) \leq \wparam$, 
then we have $\width(P_i \cup \overline{P}_j) \leq \wparam < d(\conv(P_i), \conv(\overline{P}_j))$.
Hence, Lemma~\ref{lem:constrained_width2}(ii) implies that
 \[\width_{[\theta_1, \theta_2]}(P_i) = \width(P_i\cup \overline{P}_j)  \leq \wparam,\]
since $P_i$ dominates~$\overline{P}_j$.
\end{proof}

This way, we check all the pairs $(i, j)$ such that $P_i$ dominates~$\overline{P}_j$ 
during the sweeping process.
The other pairs~$(i, j)$ such that $\overline{P}_j$ dominates~$P_i$ can be handled
in a symmetric way by moving the horizontal strip~$\strip_0$ downwards.
Therefore, we conclude the following.
\begin{theorem} \label{thm:1fixed_decision}
 Given a set~$P$ of $n$~points, $\phi \in [0,\pi)$, and $\wparam>0$,
 we can decide in $O(n\log^2 n)$ time and $O(n)$ space
 whether there is a constrained two-strip of width~$\wparam$.
\end{theorem}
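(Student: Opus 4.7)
The plan is to formalize the sweeping algorithm already outlined in the preceding discussion and verify its time and space complexity. First I would make the two sweeping phases explicit: an upward sweep of the horizontal strip $\strip_0$ of width~$\wparam$ that handles all pairs $(i,j)$ in which $P_i$ dominates $\overline{P}_j$, and a symmetric downward sweep for pairs in which $\overline{P}_j$ dominates $P_i$. Sorting $P$ by $y$-coordinate takes $O(n\log n)$ time; after that, the indices $i$ and $j$ are monotone nondecreasing in each sweep, so across a single sweep there are only $O(n)$ index updates, and hence $O(n)$ pairs $(i,j)$ to test.

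Next I would describe the data structures maintained during the sweep: the convex hulls $\conv(P_i)$ and $\conv(\overline{P}_j)$ together with the Agarwal--Sharir structure $\ASDS$ on $P_i$ with parameter~$\wparam$. Since updates on $P_i$ are purely insertions (points $p_{i+1}$ enter $P_i$ as the strip rises), and updates on $\overline{P}_j$ are purely deletions, the convex hulls can be maintained in $O(\log n)$ amortized time per update using the offline convex hull structures of Brodal--Jacob or Hershberger--Suri, for a total of $O(n\log n)$. The structure $\ASDS$ is maintained in $O(\log^2 n)$ worst-case time per insertion by Lemma~\ref{lem:ASDS1}, giving $O(n\log^2 n)$ overall. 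For the downward sweep we restart $\ASDS$ from scratch on an empty set and insert points as they exit from above, so the LIFO-free nature of the sequence is not an issue here (we only ever insert into $\ASDS$ during each sweep).

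For each pair $(i,j)$ in the dominating configuration, I would compute the two outer common tangents of $\conv(P_i)$ and $\conv(\overline{P}_j)$ in $O(\log^2 n)$ time by binary search on the two hulls, extract their orientations $\theta_1\leq\theta_2$, and then invoke an orientation-constrained width decision query on $P_i$ with interval $[\theta_1,\theta_2]$; by Lemma~\ref{lem:constrained_width-query} this takes $O(\log^2 n)$ time using $O(\log n)$ extra space. Correctness of using this single query in place of testing $\width(P_i\cup\overline{P}_j)\leq\wparam$ directly is exactly the content of Lemma~\ref{lem:1fixed_decision}, which applies because the strip $\strip_0$ guarantees $d(\conv(P_i),\conv(\overline{P}_j))>\wparam$. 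Summing over the $O(n)$ pairs gives $O(n\log^2 n)$ query cost, and combining with the $O(n\log^2 n)$ maintenance cost yields the claimed running time. The space usage is dominated by $\ASDS$ and the two convex hulls, all of which take $O(n)$ space.

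The main obstacle I anticipate is not a single hard step but rather ensuring that every configuration in which a horizontal strip of width $\wparam$ separates $P$ into a pair $(P_i,\overline{P}_j)$ with $\width(P_i\cup\overline{P}_j)\leq\wparam$ is actually examined by one of the two sweeps. I would argue this by noting that for any such feasible $(i,j)$, one of the two sides must dominate the other, and the monotone sweep reaches the corresponding index pair at exactly the moment dominance first holds (or last holds, in the opposite sweep); a careful case analysis on when dominance switches, together with the fact that $\width(P_i\cup\overline{P}_j)$ is monotone only in a weak sense, should handle the boundary configurations. Once that coverage argument is in place, the theorem follows by combining the time and space bounds above.
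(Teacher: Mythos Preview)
Your proposal is correct and follows essentially the same approach as the paper: two symmetric sweeps, maintaining $\conv(P_i)$, $\conv(\overline{P}_j)$, and $\ASDS$ on the dominating side, with an orientation-constrained width decision query per pair via Lemma~\ref{lem:1fixed_decision}, yielding $O(n\log^2 n)$ time and $O(n)$ space. The coverage concern you flag is handled in the paper by the simple observation (stated just before the algorithm description) that during a single monotone sweep dominance switches exactly once---$\overline{P}_j$ dominates $P_i$ up to some moment and $P_i$ dominates $\overline{P}_j$ thereafter---so between the upward and downward sweeps every pair $(i,j)$ arising from a width-$\wparam$ horizontal separation is tested; no delicate boundary analysis is needed.
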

\begin{proof}
The correctness of our algorithm is shown by Lemma~\ref{lem:1fixed_decision}
as discussed above.

The complexity of our algorithm is analyzed as follows.
We spend $O(n\log n)$ time to sort~$P$ in the beginning.
Maintaining $\conv(P_i)$ and $\conv(\overline{P}_j)$ costs $O(n \log n)$ in total~\cite{hs-ompc-91}.
The total cost for~$\ASDS$ is bounded by $O(n \log^2 n)$ by Lemma~\ref{lem:ASDS}.
For each pair $(i,j)$ appearing in the sweeping process,
we spend $O(\log^2 n)$ time to compute the common tangents of $\conv(P_i)$ and $\conv(\overline{P}_j)$
and $O(\log^2 n)$ time for an orientation-constrained width decision query 
by Lemma~\ref{lem:constrained_width-query}.
Since there are at most $2n$ such pairs $(i,j)$,
the total time is bounded by $O(n \log^2 n)$ in the worst case.
The space used is $O(n)$.
\end{proof}

\subsection{Optimization algorithm}

Let $w^*$ be our target optimal width, that is,
the minimum width of a constrained two-strip enclosing~$P$.
As above, suppose that $p_1, \ldots, p_n\in P$ are sorted in their $y$-coordinates.
Let $W_1$ be the set of all differences of $y$-coordinates between two points in~$P$.
Since $W_1$ can be represented by a sorted matrix~\cite{fj-csrX+Ymsc-82},
we can find two consecutive values $w_0, w_1 \in W_1$ such that $w_0 < w^* \leq w_1$
in $O(n \log^3 n)$ time using our decision algorithm (Theorem~\ref{thm:1fixed_decision})
and an efficient selection algorithm for a sorted matrix~\cite{fj-gsrsm-84}.

At this stage, observe that, for any $w_0 < w < w_1$,
the sequence of changes on the sets $P_i$ and $\overline{P}_j$ of points below and above~$\strip_0$ is the same
as the horizontal strip~$\strip_0$ of width~$w$ moves upwards.
Let $X$ be the set of those pairs $(i, j)$ of indices
such that $P_i$ and $\overline{P}_j$ appear as the sets of points below and above~$\strip_0$,
respectively.
\begin{lemma} \label{lem:1fixed_opt_width}
 It holds that
  \[ w^* = \min\{w_1, \min_{(i,j)\in X} \{\width(P_i \cup \overline{P}_j)\}\}.\]
\end{lemma}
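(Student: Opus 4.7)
The plan is to prove the identity by establishing the two inequalities $w^* \leq \text{RHS}$ and $w^* \geq \text{RHS}$ separately, where I write $\text{RHS} := \min\{w_1, \min_{(i,j)\in X} \width(P_i \cup \overline{P}_j)\}$. The structural fact I would establish first is that a pair $(i,j)$ appears as the separation $(P_i \text{ below}, \overline{P}_j \text{ above})$ during the sweep with horizontal strip width $w$ if and only if $y(p_j) - y(p_{i+1}) \leq w < y(p_{j+1}) - y(p_i)$, where both thresholds lie in $W_1 \cup \{0\}$. Because $w_0, w_1$ are \emph{consecutive} in $W_1$, no threshold lies strictly between them; hence for \emph{every} $w \in (w_0, w_1)$ the set of partitions produced by the sweep is exactly~$X$, and every $(i,j)\in X$ satisfies $y(p_j) - y(p_{i+1}) \leq w_0$ and $y(p_{j+1}) - y(p_i) \geq w_1$.

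For $w^* \leq \text{RHS}$, clearly $w^* \leq w_1$ by the choice of $w_1$. For each $(i,j)\in X$, I would construct a valid constrained two-strip whose horizontal strip is the minimum horizontal strip enclosing $\{p_{i+1},\ldots,p_j\}$ (width $y(p_j)-y(p_{i+1})\leq w_0$) and whose other strip is a minimum-width strip enclosing $P_i \cup \overline{P}_j$. Its width is at most $\max\{y(p_j)-y(p_{i+1}), \width(P_i\cup\overline{P}_j)\}$, so minimality of $w^*$ bounds $w^*$ above by this max; since the first term is $\leq w_0 < w^*$, it cannot achieve the max, forcing $w^* \leq \width(P_i\cup\overline{P}_j)$.

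For $w^* \geq \text{RHS}$, I would take an optimal constrained two-strip $(\strip^*_1,\strip^*_2)$ with $\strip^*_1$ horizontal of width $\leq w^*$. If $w^* = w_1$ the bound is immediate. Otherwise $w^* \in (w_0, w_1)$, and I widen $\strip^*_1$ to width exactly $w^*$; this is safe because enlarging a strip only shrinks the set of points lying outside it, so the original $\strip^*_2$ still covers whatever remains outside. The widened horizontal strip then has width in $(w_0,w_1)$, so the partition $(i^*,j^*)$ it induces belongs to $X$ by the equivalence above; moreover $P_{i^*}\cup\overline{P}_{j^*}\subseteq\strip^*_2$ gives $\width(P_{i^*}\cup\overline{P}_{j^*})\leq w^*$, which delivers the desired inequality.

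I expect the main obstacle to be careful justification of the ``widening is safe'' step together with the boundary behaviour at $w^* = w_1$: one must verify that the partition produced by the widened strip actually belongs to $X$, which is precisely where the consecutiveness of $w_0$ and $w_1$ in $W_1$ is indispensable, and the argument must be made robust against degenerate cases such as $i = 0$, $j = n$, or points sharing $y$-coordinates, in which some of the threshold inequalities collapse.
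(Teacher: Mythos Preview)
Your proposal is correct and follows essentially the same approach as the paper. The paper's proof begins by asserting the intermediate identity
\[
 w^* = \min\Bigl\{w_1,\; \min_{(i,j)\in X}\max\{\width_0(P\setminus(P_i\cup\overline{P}_j)),\;\width(P_i\cup\overline{P}_j)\}\Bigr\}
\]
as an ``observation'' and then simplifies the inner $\max$ by the same contradiction you use (the horizontal term is at most $w_0<w^*$, so it cannot be the maximum). Your two-inequality decomposition, and in particular the ``widen $\strip_1^*$ to width exactly $w^*$'' step, is precisely what is needed to justify that observation in the direction $w^*\geq\text{RHS}$; the paper leaves this implicit. So your argument is a slightly more explicit version of the same proof.
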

\begin{proof}
First, observe that 
 \[ w^* = \min\{w_1, \min_{(i,j)\in X}\{\max\{\width_0(P\setminus (P_i \cup \overline{P}_j)),
    \width(P_i \cup \overline{P}_j)\}\}\}.\]
Note that 
$\width_0(P\setminus (P_i \cup \overline{P}_j)) = \width(\strip_0(P\setminus (P_i \cup \overline{P}_j)))$
is the width of the smallest horizontal strip enclosing 
$P\setminus (P_i \cup \overline{P}_j) = \{p_{i+1},\ldots, p_{j}\}$,
so it is determined by the $y$-difference of~$p_j$ and~$p_{i+1}$.
Since $w_0$ and $w_1$ are two consecutive values in~$W_1$ and $w_0 < w < w_1$, 
this implies that $\width_0(P\setminus (P_i \cup \overline{P}_j)) \leq w_0$
for any~$(i,j)\in X$.

If $\width(P_i \cup \overline{P}_j) \leq \width_0(P\setminus (P_i \cup \overline{P}_j))$ 
for some $(i,j)\in X$,
then we would conclude that $w^* \leq \width_0(P\setminus (P_i \cup \overline{P}_j)) \leq w_0$,
a contradiction to the fact that $w_0 < w^*$.
Therefore, we have $\width(P_i \cup \overline{P}_j) > \width_0(P\setminus (P_i \cup \overline{P}_j))$ 
for all $(i,j)\in X$,
so the lemma follows.
\end{proof}

Thus, we are done by computing the width of $P_i \cup \overline{P}_j$ for $(i,j)\in X$.
Even better, to compute~$w^*$ and an optimal two-strip,
it suffices to evaluate the exact value of $\width(P_i\cup \overline{P}_j)$
only for those $(i,j)\in X$ such that $\width(P_i\cup \overline{P}_j) < w_1$.
Let $W$
be the set of values of~$\width(P_i \cup \overline{P}_j)$
that are less than~$w_1$.
Below, we show how to compute the set~$W$.

For the purpose,
we take any value $w$ with $w_0 < w <w_1$
and simulate the translational sweeping process with the horizontal strip~$\strip_0$ of width~$w$
in a similar way as done for the decision algorithm.
Here, we sweep the plane by moving~$\strip_0$ downwards from above.
We initialize the data structure~$\ASDS$ with parameter~$\wparam = w_1$ on point set~$P_n = P$
by inserting $n$~points $p_1, \ldots, p_n$ in this order,
and maintain~$\ASDS$ by deleting points in the reversed order to represent~$P_i$,
as~$\strip_0$ moves downwards.
In addition, we initialize the structure~$\CWDS$ of Lemma~\ref{lem:Chan_width}
for point set~$P_n = P$, and maintain it to store~$P_i$
by deleting points in the same order.
We also maintain the convex hulls $\conv(P_i)$ and $\conv(\overline{P}_j)$.

During the sweeping process,
we only handle those $(i,j)\in X$ such that $P_i$ dominates $\overline{P}_j$,
so we stop the process as soon as $\overline{P}_j$ dominates $P_i$.
(Those~$(i,j)\in X$ such that $\overline{P}_j$ dominates $P_i$
 can be handled in a symmetric way by moving $\strip_0$ upwards.)
Consider such a pair $(i, j)$.
Our goal is to compute $\width(P_i \cup \overline{P}_j)$ only when it is less than~$w_1$.
Note that $d(\conv(P_i), \conv(\overline{P}_j)) \geq w_1$.
We compute the outer common tangents of $\conv(P_i)$ and $\conv(\overline{P}_j)$
and let $\theta_1\leq \theta_2$ be their orientations.
As in the decision algorithm,
we test whether or not $\width_{[\theta_1,\theta_2]}(P_i) \leq w_1$
by Lemma~\ref{lem:constrained_width-query} using~$\ASDS$.
Lemma~\ref{lem:constrained_width2} implies the following.
\begin{lemma} \label{lem:1fixed_opt1}
 Provided $P_i$ dominates~$\overline{P}_j$,
 $\width(P_i \cup \overline{P}_j) \geq w_1$ if $\width_{[\theta_1,\theta_2]}(P_i) > w_1$.
\end{lemma}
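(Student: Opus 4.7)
My plan is to prove the lemma by contrapositive and reduce it in one step to Lemma~\ref{lem:constrained_width2}(ii). Concretely, I would establish the equivalent statement: if $\width(P_i \cup \overline{P}_j) < w_1$, then $\width_{[\theta_1,\theta_2]}(P_i) \leq w_1$.

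Before invoking Lemma~\ref{lem:constrained_width2}, I would first record the geometric fact that is already implicit in the surrounding discussion: since $(i,j)\in X$, for every $w\in(w_0,w_1)$ the sets $P_i$ and $\overline{P}_j$ are separated by a horizontal strip of width~$w$ during the sweep, so $d(\conv(P_i),\conv(\overline{P}_j)) \geq w$; taking the supremum over such $w$ gives
\[ d(\conv(P_i), \conv(\overline{P}_j)) \geq w_1. \]
This is exactly the inequality stated in the paragraph preceding the lemma, and it is what bridges the gap to the hypothesis of Lemma~\ref{lem:constrained_width2}(ii).

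Now assume $\width(P_i \cup \overline{P}_j) < w_1$. Combining this with the inequality above yields
\[ \width(P_i \cup \overline{P}_j) < w_1 \leq d(\conv(P_i), \conv(\overline{P}_j)), \]
which is precisely the hypothesis of Lemma~\ref{lem:constrained_width2}(ii). Since $P_i$ dominates $\overline{P}_j$ by assumption, and $\theta_1,\theta_2$ are, by definition in the algorithm just above, the orientations of the two outer common tangents of $\conv(P_i)$ and $\conv(\overline{P}_j)$, Lemma~\ref{lem:constrained_width2}(ii) applies and gives
\[ \width_{[\theta_1,\theta_2]}(P_i) \;=\; \width(P_i \cup \overline{P}_j) \;<\; w_1, \]
establishing the contrapositive.

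I do not expect any real obstacle here: the whole argument is essentially a bookkeeping application of Lemma~\ref{lem:constrained_width2}(ii). The only point that deserves a careful sentence is verifying the side condition $\width(P_i\cup\overline{P}_j) < d(\conv(P_i),\conv(\overline{P}_j))$, which is exactly where the choice of $w_1$ as the next value in the sorted-matrix set $W_1$ after $w_0$ is used — it is what upgrades the trivial bound $d\geq w$ for the sweep width to the usable bound $d\geq w_1$.
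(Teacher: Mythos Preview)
Your proof is correct and essentially identical to the paper's: both note that $d(\conv(P_i),\conv(\overline{P}_j)) \geq w_1$, assume $\width(P_i \cup \overline{P}_j) < w_1$, and invoke Lemma~\ref{lem:constrained_width2}(ii) to obtain $\width_{[\theta_1,\theta_2]}(P_i) = \width(P_i \cup \overline{P}_j) < w_1$. The only cosmetic difference is that the paper phrases this as a contradiction rather than a contrapositive.
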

\begin{proof}
Recall that $d(\conv(P_i), \conv(\overline{P}_j)) \geq w_1$.
Suppose that $\width(P_i \cup \overline{P}_j) < w_1$ while $\width_{[\theta_1,\theta_2]}(P_i) > w_1$.
Then, we have
 \[ \width(P_i \cup \overline{P}_j) < w_1 \leq d(\conv(P_i), \conv(\overline{P}_j)).\]
So, Lemma~\ref{lem:constrained_width2} implies
 \[ \width(P_i \cup \overline{P}_j) = \width_{[\theta_1,\theta_2]}(P_i),\]
since $P_i$ dominates~$\overline{P}_j$, a contradiction.
\end{proof}

If it turns out that $\width_{[\theta_1,\theta_2]}(P_i) > w_1$, then
we can discard the pair~$(i,j)$ and proceed the algorithm by Lemma~\ref{lem:1fixed_opt1}.
Otherwise, we compute the exact value of $\width_{[\theta_1,\theta_2]}(P_i)$.
If $\theta_1 = \theta_2$, this can be done in $O(\log n)$ time using convex hulls
$\conv(P_i)$ and $\conv(\overline{P}_j)$;
if $\theta_1 < \theta_2$, by Lemma~\ref{lem:constrained_width},
we have $\width_{[\theta_1,\theta_2]}(P_i) = \width(\strip_{[\theta_1,\theta_2]}(P_i))$
and $\strip_{[\theta_1,\theta_2]}(P_i) = \conv(P_i \cup Q)$
where $Q = \{q_1, q_2\}$ consists of two points as described above.
(See also \figurename~\ref{fig:sigma_int}.)
Hence, in this case, we can compute $\width_{[\theta_1,\theta_2]}(P_i)$
in $O(\log^3 n)$ time by Lemma~\ref{lem:Chan_width}
with a query set~$Q = \{q_1, q_2\}$ to~$\CWDS$.
Again, Lemma~\ref{lem:constrained_width2} implies the following.
\begin{lemma} \label{lem:1fixed_opt2}
 $\width_{[\theta_1,\theta_2]}(P_i) < w_1$ if and only if
 $\width(P_i\cup \overline{P}_j) < w_1$.
 Moreover, if $\width_{[\theta_1,\theta_2]}(P_i) < w_1$, 
 then $\width(P_i\cup \overline{P}_j) = \width_{[\theta_1,\theta_2]}(P_i)$.
\end{lemma}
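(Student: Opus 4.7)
The plan is to derive both directions of the equivalence, and the ``moreover'' clause, as direct consequences of Lemma~\ref{lem:constrained_width2} applied to $S_1 = P_i$ and $S_2 = \overline{P}_j$. Recall from the setup of the sweeping process that the horizontal strip~$\strip_0$ of width~$w$ with $w_0 < w < w_1$ separates~$P_i$ and~$\overline{P}_j$; by the choice of $w_0, w_1$ as consecutive values of~$W_1$ together with the fact that $\strip_0$ has width strictly less than~$w_1$, one immediately obtains the key inequality $d(\conv(P_i), \conv(\overline{P}_j)) \geq w_1$. Also, by hypothesis $P_i$ dominates~$\overline{P}_j$, so both parts of Lemma~\ref{lem:constrained_width2} apply.

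For the forward direction, I would assume $\width_{[\theta_1,\theta_2]}(P_i) < w_1$ and write the chain
\[
 \width(P_i \cup \overline{P}_j) \leq \width_{[\theta_1,\theta_2]}(P_i \cup \overline{P}_j) = \width_{[\theta_1,\theta_2]}(P_i) < w_1,
\]
where the first inequality is immediate from the definition of $\width$ and $\width_{[\theta_1,\theta_2]}$, and the equality is Lemma~\ref{lem:constrained_width2}(i). For the reverse direction, assume $\width(P_i \cup \overline{P}_j) < w_1$. Combined with $d(\conv(P_i),\conv(\overline{P}_j)) \geq w_1$ we get
\[
 \width(P_i \cup \overline{P}_j) < w_1 \leq d(\conv(P_i),\conv(\overline{P}_j)),
\]
so Lemma~\ref{lem:constrained_width2}(ii) yields $\width(P_i \cup \overline{P}_j) = \width_{[\theta_1,\theta_2]}(P_i)$, whence $\width_{[\theta_1,\theta_2]}(P_i) < w_1$.

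The ``moreover'' part falls out for free: if $\width_{[\theta_1,\theta_2]}(P_i) < w_1$, the forward direction already gives $\width(P_i \cup \overline{P}_j) < w_1$, so the hypothesis of Lemma~\ref{lem:constrained_width2}(ii) is again satisfied and the same application yields the equality $\width(P_i\cup \overline{P}_j) = \width_{[\theta_1,\theta_2]}(P_i)$. There is no real obstacle here; the only thing requiring a moment of care is to explicitly record the bound $d(\conv(P_i), \conv(\overline{P}_j)) \geq w_1$, which is inherited from the sweeping setup and the choice of consecutive values in~$W_1$, and to verify that Lemma~\ref{lem:constrained_width2} is applicable with the particular orientations $\theta_1, \theta_2$ defined as the orientations of the outer common tangents of $\conv(P_i)$ and $\conv(\overline{P}_j)$, which matches exactly the hypotheses of that lemma.
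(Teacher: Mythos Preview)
Your proposal is correct and follows essentially the same route as the paper's proof: both arguments hinge on the bound $d(\conv(P_i),\conv(\overline{P}_j)) \geq w_1$ (inherited from the choice of consecutive values in~$W_1$) and then invoke Lemma~\ref{lem:constrained_width2}(i) for the forward direction and Lemma~\ref{lem:constrained_width2}(ii) for the reverse direction and the ``moreover'' clause. The organization is virtually identical to the paper's.
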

\begin{proof}
Note that $d(\conv(P_i), \conv(\overline{P}_j)) \geq w_1$,
and also that $\overline{P}_j \subset \strip_{[\theta_1,\theta_2]}(P_i)$,
so we have 
 \[ \width(P_i \cup \overline{P}_j) \leq \width_{[\theta_1,\theta_2]}(P_i \cup \overline{P}_j)
   = \width_{[\theta_1,\theta_2]}(P_i) \leq w_1\]
by the assumption and Lemma~\ref{lem:constrained_width2}(i).

If $\width_{[\theta_1,\theta_2]}(P_i) < w_1$,
we have
 \[ \width(P_i \cup \overline{P}_j) \leq \width_{[\theta_1,\theta_2]}(P_i) < w_1 \leq d(\conv(P_i), \conv(\overline{P}_j)).\]
So, Lemma~\ref{lem:constrained_width2}(ii) implies that
\[ \width(P_i \cup \overline{P}_j) = \width_{[\theta_1,\theta_2]}(P_i).\]

Next, suppose that $\width(P_i \cup \overline{P}_j) < w_1$.
Then, Lemma~\ref{lem:constrained_width2}(ii) again implies that
 \[ \width(P_i \cup \overline{P}_j) = \width(\strip_{[\theta_1, \theta_2]}(P_i))  < w_1.\]
Thus, the lemma follows.
\end{proof}
Hence, we have $\width(P_i \cup \overline{P}_j) = \width_{[\theta_1,\theta_2]}(P_i)$
and it is a member of~$W$
if and only if 
the computed value of~$\width_{[\theta_1,\theta_2]}(P_i)$ is strictly smaller than~$w_1$.

This way, we can collect the values in~$W$ in $O(n \log^3 n)$ time.
By Lemma~\ref{lem:1fixed_opt_width},
the minimum value in~$W$ is~$w^*$, if $W$ is nonempty;
or $w^* = w_1$, if $W = \emptyset$.
The corresponding two-strip of width~$w^*$ can be computed and stored 
during the execution of the algorithm.
Hence, we finally conclude the following.
\begin{theorem}\label{thm:1fixed}
 Given a set~$P$ of $n$~points in the plane and an orientation~$\phi$,
 a two-line center for~$P$ in which one of the two lines is constrained to be in orientation~$\phi$
 can be computed in $O(n\log^3 n)$ time and $O(n\log n)$ space.
\end{theorem}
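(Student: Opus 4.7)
The plan is to combine the $O(n\log^2 n)$-time decision algorithm of Theorem~\ref{thm:1fixed_decision} with the sorted-matrix selection technique to first trap the optimal width $w^*$ in a tight interval, and then use the structures $\ASDS$ and $\CWDS$ together to enumerate only those two-strip configurations whose width could possibly beat the upper endpoint of this interval. Concretely, I would start by sorting $P$ by $y$-coordinate and forming the implicit sorted matrix of pairwise $y$-differences $W_1$~\cite{fj-csrX+Ymsc-82}. Applying the selection algorithm for sorted matrices of Frederickson and Johnson~\cite{fj-gsrsm-84}, driven by our $O(n\log^2 n)$-time decision oracle, I can find the two consecutive values $w_0, w_1 \in W_1$ with $w_0 < w^* \le w_1$ in $O(n\log^3 n)$ time. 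This is important because, as noted in Lemma~\ref{lem:1fixed_opt_width}, inside the open interval $(w_0, w_1)$ the combinatorial structure of the sweep (the set $X$ of index pairs $(i,j)$) is fixed.

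Next, I would fix any $w \in (w_0, w_1)$, initialize the structures $\ASDS$ (with parameter $\wparam = w_1$) and $\CWDS$ on the full set $P$, and simulate the downward translational sweep with a horizontal strip of width $w$. At the initial moment $i = n$, $j = n$, we have $P_i = P$ and $\overline{P}_j = \emptyset$; as the strip moves down, the only operations on $P_i$ are \emph{deletions} of the top-most point, which both $\ASDS$ (by an undo of the insertion, per Lemma~\ref{lem:ASDS1}) and Chan's $\CWDS$ (per Lemma~\ref{lem:Chan_width}) support in the required amortized time. In parallel, I maintain $\conv(P_i)$ and $\conv(\overline{P}_j)$ in $O(n\log n)$ total time~\cite{hs-ompc-91}. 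For each pair $(i,j) \in X$ with $P_i$ dominating $\overline{P}_j$, I compute the outer common tangents of the two hulls in $O(\log^2 n)$ time, obtain their orientations $\theta_1 \le \theta_2$, and first invoke Lemma~\ref{lem:constrained_width-query} to test in $O(\log^2 n)$ time whether $\width_{[\theta_1,\theta_2]}(P_i) \le w_1$. If the test fails, Lemma~\ref{lem:1fixed_opt1} allows me to safely discard this pair; if it succeeds, I compute the exact value of $\width_{[\theta_1,\theta_2]}(P_i)$ by issuing a query to $\CWDS$ with the two auxiliary points $\{q_1, q_2\}$ from the intersection $\strip_{\theta_1}(P_i) \cap \strip_{\theta_2}(P_i)$ (as in \figurename~\ref{fig:sigma_int}), costing $O(\log^3 n)$ by Lemma~\ref{lem:Chan_width}. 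By Lemma~\ref{lem:1fixed_opt2}, the returned value equals $\width(P_i \cup \overline{P}_j)$ whenever it is smaller than $w_1$. The pairs for which $\overline{P}_j$ dominates $P_i$ are handled in a symmetric upward sweep.

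For correctness, the final answer is computed as $w^* = \min\{w_1,\, \min W\}$ according to Lemma~\ref{lem:1fixed_opt_width}, together with the associated two-strip witness, which is maintained alongside. For complexity: the selection phase costs $O(n\log^3 n)$; maintaining the convex hulls costs $O(n\log n)$; all deletions from $\ASDS$ (implemented as undo-insertions after a reversed-order initialization) cost $O(n\log^2 n)$ in total by Lemma~\ref{lem:ASDS}; all deletions from $\CWDS$ cost $O(n\log^3 n)$; and each of the $O(n)$ pairs in $X$ incurs $O(\log^3 n)$ additional work. The bottleneck is therefore $O(n\log^3 n)$, and the space is dominated by Chan's structure at $O(n\log n)$.

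The main obstacle is conceptual rather than computational: the ``direct'' application of a dynamic-width data structure would demand $\Theta(n^2)$ candidate bipartitions, one for each subset of $P$ that could lie inside a horizontal strip. The trick that makes the linear-in-configurations count possible is precisely the narrowness of the interval $(w_0, w_1)$, which freezes the sweep's combinatorial structure; once this is in hand, the role of $\ASDS$ is to cheaply filter out pairs whose constrained width exceeds $w_1$ (so that $\CWDS$ is only queried on surviving pairs), and Lemmas~\ref{lem:1fixed_opt1} and~\ref{lem:1fixed_opt2} guarantee that this filtering is lossless for the purpose of recovering $w^*$. Verifying that the required updates on both $\ASDS$ and $\CWDS$ can indeed be performed within the stated bounds in the offline LIFO pattern induced by the sweep is the only delicate bookkeeping, but it follows directly from Lemmas~\ref{lem:ASDS} and~\ref{lem:Chan_width}.
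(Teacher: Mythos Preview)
Your proposal is correct and follows essentially the same approach as the paper: use the $O(n\log^2 n)$ decision algorithm with sorted-matrix selection on $W_1$ to trap $w^*$ in $(w_0,w_1]$, then run the translational sweep while maintaining $\ASDS$ (as a cheap filter via Lemma~\ref{lem:constrained_width-query} and Lemma~\ref{lem:1fixed_opt1}) and $\CWDS$ (for the exact constrained width via Lemma~\ref{lem:1fixed_opt2}) on~$P_i$ under LIFO deletions, handling the two domination cases by symmetric sweeps. The only minor omission is the degenerate case $\theta_1=\theta_2$, which the paper dispatches directly from the convex hulls in $O(\log n)$ time, but this does not affect either correctness or the stated bounds.
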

\begin{proof}
The correctness of our algorithm is established by
the above discussions together with 
Lemmas~\ref{lem:1fixed_opt_width}--\ref{lem:1fixed_opt2}.

To see the time complexity,
first we spend $O(n\log^3 n)$ time to compute the two values $w_0<w_1 \in W_1$,
using our $O(n\log^2 n)$-time decision algorithm, described in Theorem~\ref{thm:1fixed_decision}.
After specifying the set~$X$ of pairs of indices,
we perform the sweeping process,
in which we spend $O(n\log^3 n)$ total time for maintaining the data structures
$\conv(P_i)$, $\conv(\overline{P}_j)$, $\ASDS$, and $\CWDS$
by Lemmas~\ref{lem:ASDS} and~\ref{lem:Chan_width}.
For each $(i,j)\in X$,
we compute the outer common tangents to $\conv(P_i)$ and $\conv(\overline{P}_j)$ in $O(\log^2 n)$ time
and perform two queries:
an orientation-constrained width decision query in $O(\log^2 n)$ time 
by Lemma~\ref{lem:constrained_width-query}
and a width query to compute $\width(P_i \cup Q)$ by a query set~$Q$ of two points
in $O(\log^3 n)$ time by Lemma~\ref{lem:Chan_width}.
In total, the running time of our algorithm is bounded by $O(n\log^3 n)$.

The space used is bounded by $O(n \log n)$.
Remark that only the structure~$\CWDS$ requires $O(n \log n)$ space,
while $O(n)$ space is sufficient for the other structures.
\end{proof}

\section{Fixed angle of intersection} \label{sec:afixed}

In this section, we solve the third constrained two-line center problem
in which, given a real value~$0 \leq \beta \leq \pi/2$,
the difference of the orientations of the two resulting lines is exactly~$\beta$.
Throughout this section, for convenience,
a \emph{constrained two-strip} denotes
a pair of strips whose orientations differ by~$\beta$.
Let $w^*$ be the minimum width of a constrained two-strip enclosing~$P$.
We start by describing optimal configurations.
\begin{lemma} \label{lem:afixed_conf}
 There exists a minimum-width constrained two-strip $(\strip_1, \strip_2)$ enclosing~$P$
 that falls into one of the following cases:
 \begin{enumerate}[(i)] \denseitems
  \item Either $w^* = \width(\strip_1) > \width(\strip_2)$ 
  or $w^* = \width(\strip_2) > \width(\strip_1)$, and
  one of the four bounding lines of~$\strip_1$ and~$\strip_2$ contains two points in~$P$.
  \item It holds that $w^* = \width(\strip_1) = \width(\strip_2)$, and
  each of the four bounding lines of~$\strip_1$ and~$\strip_2$ contains
  a point in~$P$.
 \end{enumerate}
\end{lemma}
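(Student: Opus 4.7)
My plan is to start from any minimum-width constrained two-strip, first tighten it so that each strip is the minimum-width strip of the points it contains, then split on whether the two widths are equal, and in the unequal case apply a local rotation argument to detect a combinatorial event.

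First I would take any minimum-width constrained two-strip $(\strip_1,\strip_2)$ with orientations $\theta$ and $\theta+\beta$, set $P_1 := P\cap\strip_1$ and $P_2 := P\setminus P_1 \subseteq \strip_2$, and replace each $\strip_i$ by $\strip_{\theta_i}(P_i)$. Neither replacement can increase a width, so the pair remains a minimum-width constrained two-strip, and afterwards each of the two bounding lines of $\strip_i$ contains a point of $P_i$. If at this stage $\width(\strip_1) = \width(\strip_2) = w^*$, then all four bounding lines already carry a point of $P$, which is exactly case~(ii). Otherwise, I would assume without loss of generality that $\width(\strip_1) > \width(\strip_2)$, and aim to produce case~(i) by exhibiting a bounding line of $\strip_1$ that contains two points of~$P_1$.

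Fixing the partition $(P_1,P_2)$, I would consider the one-parameter family of constrained two-strips $(\strip_{\theta'}(P_1),\strip_{\theta'+\beta}(P_2))$ as $\theta'$ varies in a small neighborhood of $\theta$. Its cost $\max(\width_{\theta'}(P_1), \width_{\theta'+\beta}(P_2))$ is continuous in $\theta'$, and by continuity together with the strict inequality it equals $\width_{\theta'}(P_1)$ throughout a sufficiently small neighborhood of~$\theta$. Optimality of the starting pair therefore forces $\theta$ to be a local minimum of the single-variable function $\theta'\mapsto\width_{\theta'}(P_1)$.

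The technical core of the proof will be a local analysis of $\width_{\theta'}(P_1)$. On any maximal interval of orientations on which the antipodal pair of $P_1$ realizing the width is a fixed pair $(p,q)$, the width equals $|p-q|\cdot|\sin(\theta'-\phi_{pq})|$, where $\phi_{pq}$ is the orientation of the segment $pq$; this is a sinusoid of positive amplitude whose only smooth interior critical points are local maxima (since the width stays strictly positive there). Hence every local minimum of $\theta'\mapsto\width_{\theta'}(P_1)$ lies at a breakpoint between consecutive antipodal-pair intervals, that is, at an orientation at which one bounding line of $\strip_{\theta'}(P_1)$ contains two points of $P_1$, namely the endpoints of an edge of~$\conv(P_1)$. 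Applied at our optimal $\theta$, this delivers case~(i). The most delicate point I foresee is pinning down the non-degeneracy assumptions $|P_1|\ge 2$ and $P_1$ non-collinear along direction~$\theta$, both of which I expect to follow cleanly from $w^* = \width(\strip_1) > \width(\strip_2)\ge 0$.
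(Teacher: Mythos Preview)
Your proposal is correct and follows essentially the same route as the paper's proof: tighten the two strips so each bounding line carries a point, split on whether the widths are equal, and in the unequal case use the sinusoidal form of $\theta'\mapsto\width_{\theta'}(\{q_1,q_2\})$ to argue that a positive local minimum cannot occur while a single antipodal pair determines the wider strip. The paper phrases the last step as a direct contradiction (rotate by a small $\epsilon$ to strictly decrease the wider width), whereas you phrase it as ``local minima of the piecewise-sinusoidal width function lie only at breakpoints,'' but these are the same observation.
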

\begin{proof}
It is obvious that there exists a minimum-width constrained two-strip enclosing~$P$ such that
each of the four bounding lines contains at least one point of~$P$.
Let $(\strip_1,\strip_2)$ be such a two-strip.
If $\width(\strip_1)=\width(\strip_2)$, then it is case~(ii).
Thus, in the following, we assume that $w^* = \width(\strip_1)>\width(\strip_2)$.
The other case, $w^* = \width(\strip_2)>\width(\strip_1)$, is symmetric.

In order to show that this is case~(i),
suppose each of the bounding line of~$\strip_1$ contains exactly one point of~$P$.
In other words, $\strip_1$ is determined by two points~$q_1, q_2 \in P$
that lie on its boundary, that is, $\strip_1 = \strip_\theta(\{q_1, q_2\})$
for some~$\theta \in [0,\pi)$.
Note that its width~$\width_\theta(\{q_1, q_2\})$ is a sinusoidal function of~$\theta$,
which is of the form~$A\sin(\theta + B)$~\cite{Bae2020}.
Since $\width(\strip_1) > \width(\strip_2) \geq 0$,
there always exist a real~$\epsilon$, positive or negative, whose absolute value is sufficiently small
such that
  $\strip_{\theta+\epsilon}(P\cap \strip_1) = \strip_{\theta+\epsilon}(\{q_1, q_2\})$
and
 \[ \width_{\theta'+\epsilon}(P\cap \strip_2) < \width_{\theta+\epsilon}(\{q_1, q_2\}) < \width(\strip_1) = w^*,\]
a contradiction to the optimality of~$(\strip_1,\strip_2)$.
\end{proof}

In the following, we present an algorithm that runs in $O(n^2 \alpha(n) \log n)$ time 
using $O(n^2)$ space.
Our algorithm follows a similar flow as for the second problem,
consisting of two phases:
(1) find a favorably narrow interval~$(w_0, w_1]$ that includes our target width~$w^*$,
and (2) proceed the search for~$w^*$ with the aid of~$(w_0, w_1]$.
We first present an efficient decision algorithm,
and then describe each of the two phases.

\subsection{Decision algorithm} \label{sec:afixed_decision}

Let $\wparam>0$ be a given parameter, and our goal is to decide whether or not $\wparam\geq w^*$.
Throughout this section, we regard $\theta$ as a \emph{direction} from the range~$[0, 2\pi)$,
taken by modulo~$2\pi$, and assume that no three points in~$P$ are collinear.

\begin{figure}[b]
  \centerline{\includegraphics[width=.45\textwidth]{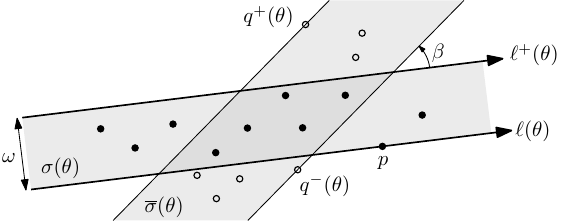}}
  \caption{A snapshot at~$\theta$ of the rotational sweeping process with fixed width~$\wparam$
  and pivot~$p$.}
  \label{fig:sweep}
\end{figure}

We consider a rotational sweeping process with \emph{fixed width~$\wparam$} described as follows:
Take any point~$p\in P$ as a pivot.
For direction~$\theta \in [0,2\pi)$, 
let $\ell(\theta)$ and $\ell^+(\theta)$ be two directed lines in~$\theta$
such that $\ell(\theta)$ is through~$p$
and $\ell^+(\theta)$ is at distance~$\wparam$ to the left of~$\ell(\theta)$.
We simultaneously rotate both lines~$\ell(\theta)$ and~$\ell^+(\theta)$ counterclockwise 
by increasing~$\theta$,
and consider the strip~$\strip(\theta)$ bounded by the two lines.
Taking the second strip~$\stripbar(\theta):=\strip_{\theta+\beta}(P\setminus \strip(\theta))$
as the minimum-width strip in orientation~$\theta+\beta$ enclosing the rest of points in~$P$,
our goal is to decide if there exists $\theta \in [0,2\pi)$ such that 
$\width(\stripbar(\theta)) \leq \wparam$ for some $p\in P$.
See \figurename~\ref{fig:sweep},
in which points in~$P\cap \strip(\theta)$ are depicted by dots
and those in~$P\setminus \strip(\theta)$ by small circles.

This sweeping process can be simulated
by maintaining the dynamic convex hull~$\conv(P\setminus \strip(\theta))$ and
its two extreme points~$q^-(\theta)$ and~$q^+(\theta)$ that define~$\stripbar(\theta)$.
Since the number of updates on~$P\setminus \strip(\theta)$ is~$O(n)$,
it can be done in~$O((n+E)\log n)$ time~\cite{bj-dpch-02},
where $E$ denotes the number of changes of the two extreme points~$q^-(\theta)$ and~$q^+(\theta)$.
As will be seen later, $E = O(n\alpha(n))$ and hence
the decision can be made in $O(n^2 \alpha(n) \log n)$ time.

In order to see why $E=O(n\alpha(n))$ and even to improve the running time,
we find it more useful and convenient to discuss the problem in the dual setting.
Consider the standard dual transformation that maps each point~$r=(a, b)\in \Plane$ 
into a non-vertical line $\dual{r} \colon \{y=ax-b\}$, and vice versa.
Let $L:=\{\dual{p} \mid p\in P\}$ be the set of $n$~lines dual to each point in~$P$.
For a fixed pivot~$p=(a,b)\in P$,
the trace of~$\ell^+(\theta)$ in the dual environment draws 
a hyperbola~$\{y= ax-b \pm \wparam\sqrt{1+x^2}\}$~\cite{as-pglp-94}.
We take the upper branch of the hyperbola, denoted by
 \[ h_p \colon \{y= ax-b + \wparam\sqrt{1+x^2}\},\]
and let $H:= \{h_p \mid p\in P\}$.
By this choice, we restrict ourselves to considering 
half the domain of directions, namely~$[\pi/2, 3\pi/2)$;
the other case can be handled symmetrically by considering the lower branches of the hyperbolas.
Note that the dual of~$\ell^+(\theta)$ for $\theta\in(\pi/2, 3\pi/2)$
is the intersection point between~$h_p$ and vertical line~$\{x=\tan(\theta)\}$.
Hence, the first strip~$\strip(\theta)$ appears as
a vertical segment between $p^*$ and $h_p$ at $x = \tan(\theta)$.
Similarly, the dual of the second strip~$\stripbar(\theta)$
is a vertical segment at $x = \tan(\theta + \beta)$
that crosses all but those lines in~$L$ intersected by~$\dual{(\strip(\theta))}$.

\begin{figure}[tb]
  \centerline{\includegraphics[width=.80\textwidth]{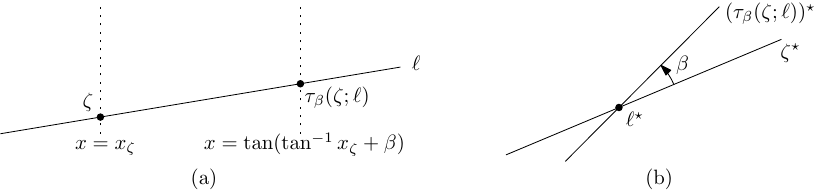}}
  \caption{(a) Illustration for the mapping~$\tau_\beta(\cdot; \ell)$ on line~$\ell$
  and (b) its dual representation.}
  \label{fig:tau}
\end{figure}

For better exposition,
we consider an operation that rotates a given line around a given point by angle~$\beta$.
In the dual setting, such a rotation can be performed by the following mapping:
for any non-vertical line~$\ell$ and a point~$\zeta \in \ell$, 
we define $\tau_\beta(\zeta; \ell)$ to be the intersection point between $\ell$ and
the vertical line $\{x = \tan(\tan^{-1}(x_\zeta) + \beta)\}$,
where $x_\zeta$ is the $x$-coordinate of~$\zeta$.
See \figurename~\ref{fig:tau}.
For a line segment~$s$ on~$\ell$, 
let $\tau_\beta(s)$ be the segment of~$\ell$ obtained by applying the $\beta$-shifting map~$\tau_\beta$
to all points on~$s$ along~$\ell$, 
that is, $\tau_\beta(s) = \bigcup_{\zeta\in s} \tau_\beta(\zeta;\ell)$.
Note that $\tau_\beta(s)$ may consist of two half-lines, if the $x$-coordinates of $s$ are large enough.

Now, consider the pieces of lines in~$L$ below $\dual{p}$ or above~$h_p$.
Let $S$ be the set of these segments and half-lines,
and $T := \{\tau_\beta(s) \mid s\in S\}$ be the set of shifted segments.
We then observe that the two lines bounding~$\stripbar(\theta)$
correspond to the highest and lowest points of the intersection~$T \cap \{x = \tan(\theta+\beta)\}$.
Thus, $\stripbar(\theta)$ and its width over $\theta \in (\pi/2, 3\pi/2)$ are determined by
the lower and upper envelopes, $\LE(T)$ and $\UE(T)$, of~$T$.
This implies that the number~$E$ of changes of the two extreme points~$q^-(\theta)$ and~$q^+(\theta)$
indeed counts the number of vertices in~$\LE(T)$ and $\UE(T)$,
so $E = O(n\alpha(n))$~\cite{sa-dsstga-95} 
and the decision problem can be solved in $O(n^2\alpha(n)\log n)$ time.
In the following, we improve it to $O(n^2 \alpha(n))$ time.

We start with the directed line~$\ell(\pi/2)$ through any~$p\in P$,
and rotate it around the pivot~$p$ by increasing~$\theta$
until it hits another point~$p'\in P$.
Whenever $\ell(\theta)$ hits another point~$p'$,
we switch the pivot to~$p'$ and continue the rotation around the new pivot~$p'$.
Observe that this motion of~$\ell(\theta)$ preserves the number~$k$
of points in~$P$ that lie on~$\ell(\theta)$ or on its right side,
except some moments when $\ell(\theta)$ contains two points.
In the dual setting, the trace of~$\ell(\theta)$ is well known as
the $k$-level of the arrangement~$\arr(L)$ of lines in~$L$.
More precisely, for $k=1,\ldots, n$,
the \emph{$k$-level} of~$\arr(L)$, denoted by~$L_k$,
is the monotone chain consisting of all edges~$e$ of~$\arr(L)$ such that
there are exactly $k-1$~lines strictly below any point in the relative interior of~$e$.
Similarly,
the trace of line~$\ell^+(\theta)$ at distance~$\wparam$ from~$\ell(\theta)$
is the $k$-level of the arrangement~$\arr(H)$ of $n$~hyperbolas in~$H$,
denoted by~$H_k$.

Let $L_k^-$ be the region strictly below~$L_k$
and $H_k^+$ be that strictly above~$H_k$.
For each~$r\in P$ and $1\leq k\leq n$, define
 \[ S^+_{k,r} := \dual{r}\cap H_k^+, \quad 
    S^-_{k,r} := \dual{r}\cap L_k^-, \quad 
    T^+_{k,r} := \tau_\beta(S^+_{k,r}), \quad \text{and} \quad
    T^-_{k,r} := \tau_\beta(S^-_{k,r}),\]
and let $T^+_k$ and $T^-_k$ be the collections of segments and half-lines
in $T^+_{k,r}$ and $T^-_{k,r}$, respectively, over all $r\in P$.
Our goal is then to compute the envelopes
$\LE(T^+_k \cup T^-_k)$ and $\UE(T^+_k \cup T^-_k)$ for all~$k$.
As discussed above,
these envelopes explicitly describe the changes of the extreme points defining
the second strip~$\stripbar(\theta)$ whose orientation is~$\theta + \beta$,
so we can decide whether $\width(\stripbar(\theta)) \leq \wparam$ for some $\theta$
in time linear to the total complexity of the envelopes.

Let $\LE^+_k := \LE(T^+_k)$, $\UE^+_k := \UE(T^+_k)$, $\LE^-_k := \LE(T^-_k)$, 
and $\UE^-_k := \UE(T^-_k)$.
We compute these four families of envelopes separately for all~$k$.
Then, $\LE(T^+_k \cup T^-_k)$ and $\UE(T^+_k \cup T^-_k)$ can be obtained
by merging two envelopes.
For our purpose, the following observation is essential.
\begin{lemma}\label{lem:afixed_decision_incremental}
 For any $1\leq k \leq n-1$ and $r\in P$,
  \[ T^+_{k+1,r} \subset T^+_{k,r} \quad \text{and} \quad T^-_{k,r} \subset T^-_{k+1,r}.\]
 Therefore, every point on~$\LE^+_k$ is below or on~$\LE^+_{k+1}$;
 every point on~$\UE^+_k$ is above or on~$\UE^+_{k+1}$;
 every point on~$\LE^-_k$ is above or on~$\LE^-_{k+1}$;
 every point on~$\UE^-_k$ is below or on~$\UE^-_{k+1}$.
\end{lemma}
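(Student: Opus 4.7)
The plan is to reduce both containment statements to a single structural fact: the $k$-levels of the arrangements $\arr(L)$ and $\arr(H)$ are monotone in~$k$, moving weakly upward as $k$ grows. This is because at any $x$-coordinate, $L_k$ (respectively $H_k$) is by definition the $k$-th smallest $y$-value among the $n$ lines (respectively hyperbolas) evaluated at~$x$, so trivially $L_{k+1}$ lies weakly above $L_k$, and $H_{k+1}$ lies weakly above $H_k$, pointwise in~$x$.

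Using this monotonicity, I would establish the first inclusion as follows. Since $H_{k+1}$ lies weakly above $H_k$, the open region $H_{k+1}^+$ strictly above $H_{k+1}$ is contained in $H_k^+$. Intersecting with the dual line $\dual{r}$ gives $S^+_{k+1,r} \subseteq S^+_{k,r}$. Because $\tau_\beta(\cdot;\dual{r})$ is applied pointwise along $\dual{r}$, it preserves set inclusion, so $T^+_{k+1,r} \subseteq T^+_{k,r}$. The second inclusion is fully symmetric: from $L_{k+1}$ lying weakly above $L_k$ we obtain $L_k^- \subseteq L_{k+1}^-$, hence $S^-_{k,r} \subseteq S^-_{k+1,r}$, and the pointwise image under $\tau_\beta$ yields $T^-_{k,r} \subseteq T^-_{k+1,r}$.

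For the envelope statements, I would invoke the elementary principle that if $\mathcal{F}_1 \subseteq \mathcal{F}_2$ are two families of segments, then $\LE(\mathcal{F}_1)$ lies weakly above $\LE(\mathcal{F}_2)$ while $\UE(\mathcal{F}_1)$ lies weakly below $\UE(\mathcal{F}_2)$ on the common domain, which is immediate from the definition of envelopes as pointwise minima and maxima. Taking unions over~$r\in P$ preserves the inclusions derived above, giving $T^+_{k+1} \subseteq T^+_k$ (the $+$-family shrinks) and $T^-_k \subseteq T^-_{k+1}$ (the $-$-family grows); applying the principle to each pair produces the four envelope inequalities in the stated form. The only mild subtlety I foresee is the interpretation of these inequalities at $x$-coordinates where one envelope is undefined (the domain may shrink together with the family), but this is not an obstacle since the decision procedure only queries the envelopes at $x$-values where the relevant extreme points of $\stripbar(\theta)$ exist.
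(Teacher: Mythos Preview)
Your proposal is correct and follows essentially the same route as the paper: both arguments derive the region containments $H_{k+1}^+ \subset H_k^+$ and $L_k^- \subset L_{k+1}^-$ from the monotonicity of $k$-levels, intersect with $\dual{r}$, and then push the inclusions through $\tau_\beta$ to obtain the segment inclusions and the envelope inequalities. Your write-up is somewhat more explicit (spelling out the pointwise $k$-th order statistic reasoning and the envelope monotonicity principle), but the underlying idea is identical.
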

\begin{proof}
Observe that
 \[ H_{k+1}^+ \subset H_k^+ \quad \text{and} \quad L_k^- \subset L_{k+1}^-\]
by definition, and thus that
 \[ S^+_{k+1,r} \subset S^+_{k,r} \quad \text{and} \quad S^-_{k,r} \subset S^-_{k+1,r}\]
for each $r\in P$.
Since the map~$\tau_\beta$ is continuous and bijective on any line~$\ell$,
when it is applied to segments on~$\ell$,
the lemma is implied.
\end{proof}
This naturally suggests us computing each family of envelopes in an incremental way.
In the following, we describe how to compute $\cL^+_n, \cL^+_{n-1}, \ldots, \cL^+_1$ in this order.
The other three families can also be handled symmetrically and analogously.

Initially,
we compute~$\arr(L)$ and $\arr(L\cup H)$,
and for each vertex~$v$ of~$\arr(L\cup H)$, we collect its images under~$\tau_\beta$
into a set~$\Xi$.
More precisely,
we use a dictionary structure for~$\Xi$ indexed by pairs in~$L\times(L\cup H)$,
such as an array-based $n \times 2n$ matrix.
For each pair $(\ell, \ell')$ with $\ell, \ell' \in L$ such that 
$v = \ell \cap \ell'$ is a vertex of~$\arr(L\cup H)$,
we store $\tau_\beta(v; \ell)$;
for $(\ell, h)$ with $\ell \in L$ and $h\in H$,
we store $\tau_\beta(v; \ell)$ for each $v\in \ell \cap h$.
We also associate each point $\xi\in \Xi$ with the edge $e$ of $\arr(L)$
such that $\xi \in e$,
so that given a pair in $L\times(L\cup H)$ we can locate in $O(1)$ time on which edges of~$\arr(L)$ 
its relevant points~$\xi \in \Xi$ lie.
Note that at most two points are stored at each entry of~$\Xi$.
The initialization can be done in $O(n^2)$ time using $O(n^2)$ space.

Let $T^+_{n+1} = \LE^+_{n+1} = \emptyset$, and
suppose $\LE^+_{k+1}$ has been correctly computed. 
Let $T^*$ be the set of segments obtained from $T^+_{k,r} \setminus T^+_{k+1,r}$
for all $r\in P$.
We then have $\LE^+_k = \LE(\LE^+_{k+1} \cup T^*)$ by
Lemma~\ref{lem:afixed_decision_incremental},
so can be computed by merging $\LE^+_{k+1}$ and $\LE(T^*)$.
Computing~$\LE(T^*)$ is done in three steps:
specify~$T^*$,
compute the arrangement~$\arr(T^*)$,
and extract~$\LE(T^*)$ from~$\arr(T^*)$.

To specift~$T^*$,
we walk along~$H_k$ and~$H_{k+1}$ in~$\arr(L\cup H)$
and find out all intersections~$H_k \cap \dual{r}$ and $H_{k+1} \cap \dual{r}$ for each $r\in P$.
We are then able to extract all segments of~$\dual{r}$ that lie in between $H_k$ and $H_{k+1}$.
For each such segment~$s$, $\tau_\beta(s)$ is a member of~$T^*$.
This can be done in $O(m_k + m_{k+1} + n)$ time,
where $m_i$ denotes the number of vertices of~$\arr(L\cup H)$ along~$H_i$.
Note that the number of segments in~$T^*$ is at most $m_k + m_{k+1}$,
since the endpoints of their preimages under the $\beta$-shifting map~$\tau_\beta$
are all from the vertices along~$H_k$ and $H_{k+1}$.

Note that every $t\in T^*$ is a segment of a line in~$L$,
so $\arr(T^*)$ is a clipped portion of the entire arrangement~$\arr(L)$.
In addition, the endpoints of~$t$ are members of~$\Xi$,
so we can find out their exact locations in~$\arr(L)$ in $O(1)$ time per each.
Thus, we can construct $\arr(T^*)$ by tracing segments $t\in T^*$ in~$\arr(L)$
in $O(|T^*| + v_k) = O(m_k + m_{k+1} + v_k)$ time,
where $v_k$ denotes the number of vertices of~$\arr(L)$ we encounter.
Note that $\arr(T^*)$ consists of exactly $m_k + m_{k+1} + v_k$ vertices
and at most $m_k + m_{k+1} + 2v_k$ edges.

As $\arr(T^*)$ forms a plane graph, possibly being disconnected,
it turns out that its lower envelope~$\LE(T^*)$ can be obtained in time linear to its complexity.
Here, we make use of the following two algorithmic tools:
First,
a linear-time algorithm for computing the vertical decomposition (or the trapezoidation) 
of a simple polygon~\cite{c-tsplt-91,fm-tspep-84}
can be applied for computing the lower envelope of a connected plane graph.
\begin{lemma} \label{lem:envelope_plane_graph}
 Given a connected plane graph~$G$, consisting of $m$~line segments,
 its lower envelope~$\LE(G)$ can be computed in $O(m)$ time.
\end{lemma}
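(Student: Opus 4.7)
The plan is to reduce the problem to the linear-time trapezoidation of a simple polygon. The starting observation is that $\LE(G)$ is entirely contained in the boundary of the unbounded face of~$G$: for any point $(x_0,y_0) \in \LE(G)$, by definition $y_0$ is the minimum $y$-coordinate among points of~$G$ on the vertical line $\{x=x_0\}$, so the open downward ray from $(x_0,y_0)$ is disjoint from~$G$. Hence $(x_0,y_0)$ lies on the boundary of the unbounded face. It therefore suffices to extract this outer boundary and then to identify the portion that forms $\LE(G)$.

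Since $G$ is a connected plane graph with an explicit planar embedding (in the application above, $G$ is a clipped sub-arrangement of~$\arr(L)$ whose DCEL is already available), I would traverse the boundary of the unbounded face as a closed walk~$W$ in $O(m)$ time by following the standard next-edge-around-the-face rule on the embedding. The combinatorial length of $W$ is $O(m)$: each non-bridge edge of~$G$ incident to the outer face is traversed exactly once, while each bridge is traversed twice.

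Next, I would regard $W$ as the boundary of a weakly simple polygon $\Pi$ whose interior is the complement of the unbounded face, with each bridge appearing as a pair of coincident opposite edges. Applying a linear-time trapezoidation for simple polygons, such as Chazelle's algorithm~\cite{c-tsplt-91} or the method of Fournier and Montuno~\cite{fm-tspep-84}, produces the vertical decomposition of $\Pi$ in $O(m)$ time. From the decomposition, $\LE(G)$ is read off by collecting, for each vertical slab, the edge of $\Pi$ spanning the bottom of the corresponding trapezoid and concatenating these edges from left to right in a single $O(m)$ traversal of the decomposition.

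The main obstacle I foresee is the handling of degeneracies such as bridges, dangling edges, and collinearities, which make $W$ only weakly simple rather than strictly simple. I expect this to be manageable either by a standard symbolic perturbation of the vertices of~$G$ that does not alter $\LE(G)$, or by treating paired bridge edges explicitly inside the trapezoidation routine; both are routine adaptations of the cited linear-time trapezoidation algorithms and do not affect the $O(m)$ time bound.
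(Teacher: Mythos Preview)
Your proposal is correct and follows essentially the same route as the paper: extract the outer-face boundary as a weakly simple polygon, perturb to resolve doubled (bridge) edges, apply the linear-time trapezoidation via Chazelle~\cite{c-tsplt-91} and Fournier--Montuno~\cite{fm-tspep-84}, and read off~$\LE(G)$ from the resulting vertical decomposition.

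The one place where the paper is a bit more explicit than your write-up is the final extraction step. You propose to trapezoidate the full outer polygon~$\Pi$ and then ``collect, for each vertical slab, the edge of~$\Pi$ spanning the bottom of the corresponding trapezoid.'' But~$\Pi$ need not be $x$-monotone, so a single vertical slab may be covered by several stacked trapezoids, and you still have to single out the lowest one. The paper avoids this by first restricting to the lower chain~$C$ of the outer boundary between the leftmost and rightmost vertices, and then closing~$C$ with two long downward vertical segments and a bottom segment~$b$ to form a simple polygon~$Q$. After trapezoidating~$Q$, the envelope edges are exactly the top sides of the trapezoids incident to~$b$, so the extraction is immediate. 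This is a small refinement of your plan rather than a different idea.
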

\begin{proof}
We extract the boundary~$B$ of the outer face of~$G$.
Considering the cyclic walk along~$B$,
we regard $B$ itself as a weakly simple polygon.
Find the leftmost and the rightmost vertices~$l,r$ of $B$ and 
take the lower chain~$C$ of~$B$ between $l$ and $r$.
Note that the chain $C$ may contain some edges that appear twice,
as $B$ is a weakly simple polygon.
Such doubled edges of $C$ can be resolved by an arbitrary perturbation avoiding overlaps.
So, we assume $C$ is a simple chain of line segments.
We then construct a simple polygon~$Q$ by attaching two sufficiently long vertical segments
below $l$ and $r$, respectively, and closing it with a third segment $b$ at the bottom.
Then, we compute the vertical decomposition of~$Q$ into trapezoids,
which adds vertical chords inside~$Q$ through every reflex corner of~$Q$.
This can be done in linear time by the linear-time algorithm
for triangulating a simple polygon~\cite{c-tsplt-91}
and a linear-time reduction of the vertical decomposition to the triangulation~\cite{fm-tspep-84}.
It is now obvious that the lower envelope~$\LE(G)$ of~$G$ 
can be constructed from those trapezoids that are incident to the bottom edge~$b$.
The total time spent above is bounded by the number of edges in~$G$.
\end{proof}

Second, Asano et al.~\cite{aaghi-vdp-86} presented a linear-time algorithm
for computing the lower envelope of disjoint line segments,
provided their endpoints are sorted.
It is not difficult to see that their algorithm also works
even if we replace ``segments'' by ``monotone chains.''
\begin{lemma}[Asano et al.~\cite{aaghi-vdp-86}] \label{lem:envelope_chains}
 Let $C_1, C_2, \ldots, C_l$ be mutually disjoint monotone chains with $m$ line segments in total.
 If a sorted list of their endpoints is given,
 then their lower envelope~$\LE(\bigcup_i C_i)$ can be computed in $O(m)$ time.
\end{lemma}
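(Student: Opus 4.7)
The plan is to adapt the Asano et al.\ algorithm for disjoint segments, observing that its correctness rests on two structural properties of the input, both of which disjoint $x$-monotone chains also enjoy: (a) each object is an $x$-monotone curve and is thus a well-defined function of $x$ on its horizontal range, and (b) any two such objects, being disjoint, maintain a constant relative $y$-order over the intersection of their horizontal ranges. Property (b) follows from the intermediate value theorem applied to the continuous difference of the two functions describing the chains; straightness of segments is never invoked in the original argument, only (a) and (b) are.

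First I would run the event-driven sweep over the sorted endpoint list. At each left-endpoint event of a chain $C$ with $y$-value $y_0$ at the event abscissa $x$, the algorithm pops from a stack every chain whose current $y$-value exceeds $y_0$; by (b), each popped chain remains above $C$ for the rest of $C$'s lifetime and hence cannot return to the envelope, so the pop is safe. The new chain $C$ is then pushed. At each right-endpoint event, the expiring chain is removed. Each chain is pushed once and popped at most once, yielding $O(l)$ amortized stack operations, exactly as in the segment case.

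Next I would emit the envelope between consecutive events. Between two successive events, the bottom of the stack is a fixed chain $C_{\mathrm{cur}}$, the current envelope carrier; advancing the output from one event abscissa to the next consists of walking forward along the segments of $C_{\mathrm{cur}}$ within that horizontal interval. Because the sweep abscissa is monotone increasing and the walk within any single chain never revisits a segment (even if $C_{\mathrm{cur}}$ is suspended and later reinstated, we resume from the position we left), the aggregate walking cost is at most $\sum_i |C_i| = m$. Combined with the $O(l)$ stack bookkeeping and the given sorted input, the whole computation runs in $O(m)$ time.

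The main obstacle will be precisely verifying invariant (b) for monotone chains and then re-auditing each use of segment-disjointness in the Asano et al.\ proof to confirm it carries over unchanged; once this is done, no step of the original algorithm or its linear-time amortized analysis needs any modification beyond replacing \emph{segment} by \emph{chain}. A secondary care point is to check that the monotone, non-restarting nature of the walk along $C_{\mathrm{cur}}$ survives the reinstatement scenario described above, which again reduces to invariant (b) applied to $C_{\mathrm{cur}}$ and whichever chain temporarily displaced it.
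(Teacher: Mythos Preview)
The paper does not give a proof of this lemma; it merely cites Asano et al.\ and remarks, just before the lemma statement, that ``it is not difficult to see that their algorithm also works even if we replace `segments' by `monotone chains'.'' Your plan---identify that the Asano et al.\ argument rests only on $x$-monotonicity and on the invariant relative $y$-order of disjoint curves, and then observe that both properties hold for disjoint monotone chains---is exactly the justification the paper's remark is pointing at, so your approach coincides with the paper's.

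That said, your concrete sketch of the stack sweep contains an internal inconsistency. You write that when a new chain $C$ starts, every stacked chain with larger current $y$-value is popped and ``cannot return to the envelope,'' yet a few lines later you rely on a chain being ``suspended and later reinstated.'' The first claim is false in general: a chain $C'$ popped because it lies above $C$ may extend past the right endpoint of $C$ and reappear on the lower envelope there (take $C'$ constant at height $2$ on $[0,10]$ and $C$ constant at height $0$ on $[2,4]$). The actual Asano et al.\ procedure does not discard such chains outright; it retains enough of them so that the correct envelope carrier is recovered when the current one expires. Since your stated plan is to re-audit their algorithm rather than to design a new one, this is a mis-sketch rather than a fatal flaw in the approach, but the description of the pop step should be corrected before the linear bound can be claimed.
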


Back to our problem,
we apply Lemma~\ref{lem:envelope_plane_graph} to each connected component of~$\arr(T^*)$,
resulting in disjoint monotone chains $C_1, C_2, \ldots$, whose lower envelope is~$\LE(T^*)$.
Recall that we already have the sorted list of endpoints of~$T^*$,
since those endpoints have been obtained by walking along $H_k$ and $H_{k+1}$ in~$\arr(L \cup H)$
and applying the $\beta$-shifting map~$\tau_\beta$,
and the map~$\tau_\beta$ preserves the order along~$H_k$ and $H_{k+1}$.
Hence, we can extract a sorted list of endpoints of the chains~$C_i$
in additional $O(m_k + m_{k+1})$ time,
which allows us to apply Lemma~\ref{lem:envelope_chains} to obtain~$\LE(T^*)$.
The total time for this third step is proportional to the number of edges in~$\arr(T^*)$,
so $O(m_k + m_{k+1} + v_k)$ time.

Finally,
to compute~$\LE^+_k$, we linearly scan $\LE^+_{k+1}$ and $\LE(T^*)$, simultaneously.
This takes time linear to the total complexity of $\LE^+_{k+1}$ and $\LE(T^*)$.
Note that $\LE^+_{k+1}$ consists of 
$O(|T^+_{k+1}| \alpha(|T^+_{k+1}|)) = O(m_{k+1} \alpha(m_{k+1}))$ edges
since it is the lower envelope of line segments~\cite{sa-dsstga-95}.
So, the total time we spend to incrementally construct~$\LE^+_k$ from~$\LE^+_{k+1}$
for each~$1\leq k\leq n$
is bounded by $O(n + m_k + m_{k+1}\alpha(m_{k+1}) + v_k)$.

By iterating~$k$ from~$n$ down to~$1$, we conclude our decision algorithm.
\begin{theorem} \label{thm:afixed_decision}
 Given a set~$P$ of $n$~points, an angle~$\beta$, and a parameter~$\wparam$,
 we can decide whether or not $\wparam \geq w^*$
 in $O(n^2 \alpha(n))$ time and $O(n^2)$ space.
\end{theorem}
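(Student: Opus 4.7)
The plan is to stitch together the constructions developed above into a correctness argument and a tight complexity analysis. For correctness, I would invoke Lemma~\ref{lem:afixed_conf} to conclude that some bounding line of an optimal constrained two-strip passes through a point of~$P$, so in the dual environment every candidate configuration is realized by a pair consisting of a level of~$\arr(L)$ (the trace of~$\ell(\theta)$) and the corresponding level of~$\arr(H)$ (the trace of~$\ell^+(\theta)$). For each level index~$k$, the envelopes $\LE(T^+_k\cup T^-_k)$ and $\UE(T^+_k\cup T^-_k)$ jointly realize, as $\theta$ varies, the two bounding lines of the companion strip $\stripbar(\theta)=\strip_{\theta+\beta}(P\setminus\strip(\theta))$, and hence its width. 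Scanning the merged envelopes across all~$k$ therefore decides the existence of a~$\theta$ with $\width(\stripbar(\theta))\leq\wparam$; a symmetric pass that uses the lower branches of the hyperbolas covers the remaining half-range of directions.

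Next I would turn the per-iteration cost $O(n+m_k+m_{k+1}\alpha(m_{k+1})+v_k)$, already established for the incremental construction of $\LE^+_k$ from $\LE^+_{k+1}$ and analogously for each of $\LE^-_k,\UE^\pm_k$, into a global bound via three telescoping estimates. First, $\sum_k m_k = O(n^2)$, because each vertex of $\arr(L\cup H)$ that belongs to some chain~$H_i$ lies on at most two consecutive levels of $\arr(H)$ and the total complexity of $\arr(L\cup H)$ is $O(n^2)$. Second, $\sum_k v_k = O(n^2)$, because the segments in the sets~$T^*$ taken over all~$k$ cover each dual line in $L$ at most once, so tracing passes over each vertex of $\arr(L)$ only $O(1)$ times. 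Third, using monotonicity of~$\alpha$ together with $m_{k+1}=O(n^2)$, we get $\sum_k m_{k+1}\alpha(m_{k+1})\leq \alpha(n^2)\sum_k m_{k+1} = O(n^2\alpha(n))$. The additive $O(n)$ term contributes another $O(n^2)$; the one-time construction of $\arr(L)$, $\arr(L\cup H)$, and the dictionary~$\Xi$ fits into $O(n^2)$; and merging the two families of envelopes at each level takes time proportional to the envelope sizes, again $O(n^2\alpha(n))$ in total. Running the four symmetric computations for $\LE^\pm_k$ and $\UE^\pm_k$ preserves the $O(n^2\alpha(n))$ bound.

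For the space, the persistent structures are $\arr(L)$, $\arr(L\cup H)$, and $\Xi$, each of size $O(n^2)$; within an iteration the previous envelope, the transient arrangement $\arr(T^*)$, and the output envelope together add at most $O(n^2)$ more. The hard part will be the bookkeeping in these telescoping bounds, in particular attributing each hyperbola-hyperbola vertex of $\arr(L\cup H)$ to exactly two consecutive levels of $\arr(H)$, and ensuring that the width test at each level examines the merged envelope of $T^+_k\cup T^-_k$ rather than the two families separately, so that a direction~$\theta$ at which the extreme points of $\stripbar(\theta)$ are contributed by segments from different families is not overlooked.
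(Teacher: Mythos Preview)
Your correctness sketch and time analysis are essentially the same as the paper's: the telescoping bounds $\sum_k m_k = O(n^2)$ and $\sum_k v_k = O(n^2)$ are argued the same way (each vertex of $\arr(L\cup H)$ along some $H_k$ sits on at most two levels; the preimages of the $T^*$ segments over all~$k$ tile each dual line, so tracing visits each vertex of $\arr(L)$ at most twice), and the $\alpha$ factor is handled identically.

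There is, however, a real gap in your space argument. You write that ``within an iteration the previous envelope, the transient arrangement $\arr(T^*)$, and the output envelope together add at most $O(n^2)$ more,'' implicitly assuming that only two envelopes need to coexist at any time. But the merging step you yourself flag as essential requires $\LE^+_k$ and $\LE^-_k$ for the \emph{same}~$k$, and by Lemma~\ref{lem:afixed_decision_incremental} these two families grow in opposite directions: $\LE^+_k$ is built incrementally as $k$ decreases (since $T^+_{k+1}\subset T^+_k$), while $\LE^-_k$ is built as $k$ increases (since $T^-_k\subset T^-_{k+1}$). Hence one entire family must be materialized before the merges can start, and the naive total size $\sum_k |\LE^+_k|$ is only bounded by $O(n^2\alpha(n))$, not $O(n^2)$.

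The paper closes this gap with an additional idea you are missing: it stores the \emph{union} $\bigcup_k \LE^+_k$ as a single plane graph~$G$ whose vertices are of three types (vertices of $\arr(L)$, points of $\Xi$, and their perpendicular feet), each of bounded degree, so $G$ has $O(n^2)$ size; each edge of $G$ is then labeled with the interval of levels $[k_0,k_1]$ on which it appears, allowing any individual $\LE^+_k$ to be traversed in time proportional to its own complexity. Without this compressed representation (or an equivalent argument that the layered envelopes share enough edges to fit in $O(n^2)$ space), your proposal only establishes $O(n^2\alpha(n))$ space.
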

\begin{proof}
The correctness of the algorithm is discussed above.
We thus focus on the analysis of the running time.

The initialization is done in $O(n^2)$ time.
To compute $\LE^+_1, \LE^+_2, \ldots, \LE^+_n$,
we spend $O(n + m_k + m_{k+1}\alpha(m_{k+1}) + v_k)$ time for each $k=1, \ldots, n$
as analyzed above.
Obviously, we have
\[
 \sum_{k=1}^n ( n + m_k + m_{k+1}\alpha(m_{k+1}) + v_k) 
 = O(n^2 + \sum_{k=1}^n m_k \alpha(m_k) + \sum_{k=1}^n v_k).
\]
Now, observe that $\sum_k m_k = O(n^2)$ and $\sum_k v_k = O(n^2)$:
The former is obvious since $m_k$ counts the number of vertices in~$\arr(L\cup H)$ along~$H_k$.
To see the latter, observe that each vertex~$u$ in~$\arr(L)$ is counted exactly twice
in the sum~$\sum_k v_k$ by Lemma~\ref{lem:afixed_decision_incremental}.
Therefore, the total time to compute the envelopes~$\LE^+_k$, $\UE^+_k$, $\LE^-_k$, and $\UE^-_k$ 
for all $k=1,\ldots, n$ is $O(n^2 \alpha(n))$.
Note that the total complexity of all these envelopes is also bounded by $O(n^2 \alpha(n))$.

Our algorithm then merges $\LE^+_k$ and $\LE^-_k$ to obtain $\LE(T^+_k \cup T^-_k)$
and also $\UE^+_k$ and $\UE^-_k$ to obtain $\UE(T^+_k \cup T^-_k)$
for all~$k$.
This can be done in time linear to the complexity of the envelopes.
The final decision of whether or not $\wparam \geq w^*$
is then made by explicitly testing the width function determined by
each pair of extreme points of the second strip,
which are described by $\LE(T^+_k \cup T^-_k)$ and $\UE(T^+_k \cup T^-_k)$.
Hence, the total running time of the decision algorithm is bounded by $O(n^2 \alpha(n))$.

As our algorithm precomputes the envelopes and uses them later,
a straightforward implementation requires $O(n^2 \alpha(n))$ space
to store all the envelopes.
This can be improved to $O(n^2)$ as follows:
consider the envelopes~$\LE^+_1, \ldots, \LE^+_n$.
Recall that they form hierarchical layers in the plane, 
as described in Lemma~\ref{lem:afixed_decision_incremental}.
We store their union as a plane graph~$G$, whose edges are straight line segments.
Then, observe that each vertex of~$G$ is either
\begin{enumerate}[(1)] \denseitems
 \item a vertex from~$\arr(L)$, 
 \item an endpoint of a segment $t\in T^+_k$ for some~$k$, which is also a member of~$\Xi$, or
 \item a perpendicular foot of a vertex of the second type~(2).
\end{enumerate}
The degree of each vertex is at most four if it is of type~(1);
at most three, otherwise.
Hence, the number of vertices and edges in~$G$ is bounded by~$O(n^2)$.
Finally, each~$\LE^+_k$ can be traversed in~$G$ in time linear to the complexity of $\LE^+_k$
by labeling each edge~$e$ of~$G$ with an interval~$[k_0, k_1]$
such that $e$ belongs to $\LE^+_k$ for all $k_0 \leq k\leq k_1$.
All this additional construction can be done in $O(n^2\alpha(n))$ time,
while the space requirement is reduced to~$O(n^2)$.
\end{proof}

\subsection{First phase of the optimization algorithm}
From now on, we describe our optimization algorithm.
Its first phase is done as follows.

Let $W_2$ be the set of all pairwise distances among points in~$P$.
We first obtain two consecutive values $w'_0 < w'_1 \in W_2$ such that $w^* \in (w'_0, w'_1]$.
This is easily done in $O(n^2 \alpha(n) \log n)$ time 
by sorting~$W_2$ and performing a binary search on~$W_2$
using our decision algorithm presented in Theorem~\ref{thm:afixed_decision}.
Next, let $W_3$ be the set of $n\binom{n}{2}$ values obtained as follows:
for any pair~$p,q\in P$ with~$p\neq q$,
collect the distances from each $r\in P$ to the line through~$p$ and~$q$.
We then find two consecutive values $w''_0 < w''_1 \in W_3$ such that $w^* \in (w''_0, w''_1]$.
This can be done in $O(n^2 \alpha(n) \log n)$ time 
by the technique of Glozman et al.~\cite{gks-sgsopsm-98},
again using our decision algorithm.
Note that the two-strip of width~$w''_1$ is the best solution of case~(i) of Lemma~\ref{lem:afixed_conf}.
We then choose $w_0 := \max \{w'_0, w''_0\}$ and $w_1 := \min\{w'_1, w''_1\}$,
and obtain:
\begin{lemma} \label{lem:afixed_w0w1}
 In $O(n^2 \alpha(n) \log n)$ time, we can find two values $w_0 \leq w_1$ such that
 $w_0 < w^* \leq w_1$ and no member in $W_2\cup W_3$ lies in $(w_0, w_1)$.
\end{lemma}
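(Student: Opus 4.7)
The plan is to execute the two searches already sketched in the paragraph preceding the lemma and then intersect the resulting intervals, showing that each search fits within the $O(n^2 \alpha(n) \log n)$ budget.

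For the search over $W_2$, note that $|W_2| \le \binom{n}{2} = O(n^2)$. I would materialize $W_2$ in $O(n^2)$ time and sort it in $O(n^2 \log n)$ time. A standard binary search over the sorted list, calling the decision algorithm of Theorem~\ref{thm:afixed_decision} at each probe, locates the two consecutive entries $w'_0 < w'_1$ of $W_2$ satisfying $w'_0 < w^* \le w'_1$ using $O(\log n)$ probes. Since each probe costs $O(n^2 \alpha(n))$, this phase runs in $O(n^2 \alpha(n) \log n)$ total time.

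For the search over $W_3$, whose size is $n\binom{n}{2} = \Theta(n^3)$, materializing or sorting the whole set would already exceed the budget. Instead, I would invoke the sorted-matrix framework of Glozman et al.~\cite{gks-sgsopsm-98}, which is tailored to exactly this situation for two-line center candidate sets: it boosts any $D$-time decision algorithm into a search procedure that returns two consecutive bracketing values, with overhead $O(n^2 \log n + D \log n)$ in total. The relevant values in $W_3$ are organized into a collection of sorted matrices built once from the $O(n^2)$ lines through pairs of points in $P$, and Frederickson--Johnson selection~\cite{fj-gsrsm-84} is used to drive the binary search. Plugging in $D = O(n^2 \alpha(n))$ from Theorem~\ref{thm:afixed_decision} yields $O(n^2 \log n + n^2 \alpha(n) \log n) = O(n^2 \alpha(n) \log n)$, producing two consecutive values $w''_0 < w''_1 \in W_3$ with $w''_0 < w^* \le w''_1$.

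Finally, I would set $w_0 := \max\{w'_0, w''_0\}$ and $w_1 := \min\{w'_1, w''_1\}$. Since both half-open intervals $(w'_0, w'_1]$ and $(w''_0, w''_1]$ contain $w^*$, so does their intersection $(w_0, w_1]$, giving $w_0 < w^* \le w_1$. To verify the second property, suppose some $w \in W_2 \cup W_3$ lies strictly between $w_0$ and $w_1$. If $w \in W_2$, then $w'_0 < w < w'_1$, contradicting the consecutiveness of $w'_0, w'_1$ in~$W_2$; the case $w \in W_3$ is symmetric. Summing the costs gives $O(n^2 \alpha(n) \log n)$ in total. The main obstacle worth flagging is the second phase: the whole reason a near-linear overhead per decision call is achievable despite $|W_3| = \Theta(n^3)$ is the implicit sorted-matrix representation of $W_3$, and one must verify that the structural ingredients underlying Glozman et al.'s framework apply to the fixed-angle variant of the problem; however, since $W_3$ here is defined identically to the candidate set used in the original two-line center problem and only the decision procedure has changed, this adaptation is essentially mechanical.
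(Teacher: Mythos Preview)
Your proposal is correct and follows essentially the same approach as the paper: sort $W_2$ and binary-search with the decision algorithm, handle $W_3$ via the Glozman et al.\ sorted-matrix technique, then intersect the two bracketing intervals by taking $w_0=\max\{w'_0,w''_0\}$ and $w_1=\min\{w'_1,w''_1\}$. Your explicit verification that no element of $W_2\cup W_3$ survives in $(w_0,w_1)$ and your remark on why the Glozman et al.\ framework carries over unchanged are welcome additions that the paper leaves implicit.
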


\subsection{Second phase of the optimization algorithm}
For each $p\in P$,
let $w^*_p$ be the minimum possible width of constrained two-strips~$(\strip_1, \strip_2)$
such that $p$~lies on the boundary of~$\strip_1$.
It is obvious that $w^* = \min_{p\in P} w^*_p$.
The second phase of our algorithm computes the exact value of~$w^*_p$,
if $w^*_p < w_1$; or reports $w^*_p \geq w_1$, otherwise.
Note that, if $w^*_p < w_1$, then the corresponding optimal two-strip
falls in case~(ii) described in Lemma~\ref{lem:afixed_conf}
by Lemma~\ref{lem:afixed_w0w1}.
In the following, let $p\in P$ be fixed and called the \emph{pivot}.

\paragraph*{Updates in the sweeping process with fixed width.}

%

Before describing the algorithm,
we discuss essential ingredients of its correctness,
based on Lemma~\ref{lem:afixed_w0w1}.
Let $w \in (w_0, w_1)$ be any value.
We consider the sweeping process with fixed width~$w$ and fixed pivot~$p$,
as described at the beginning of Section~\ref{sec:afixed_decision}.
(See \figurename~\ref{fig:sweep}.)
Recall that the first strip~$\strip(\theta)$ is determined by
two directed lines~$\ell(\theta)$ and $\ell^+(\theta)$
such that $\ell(\theta)$ goes through~$p$
and $\ell^+(\theta)$ is at distance~$w$ to the left of~$\ell(\theta)$,
and the second strip~$\stripbar(\theta)$ in orientation~$\theta+\beta$
encloses the rest of points in~$P\setminus \strip(\theta)$.
Let $P(\theta) := P \cap \strip(\theta)$.
Then, during this sweeping process as $\theta$ increases, 
$P(\theta)$ undergoes a sequence of \emph{updates} (deletions and insertions).
We identify each update by a pair of its involved point~$r\in P$ and its \emph{type}
determined by one of the four combinations of the following:
\begin{itemize}\denseitems
 \item 
  An update is \emph{right} if it happens when $\ell(\theta)$ hits~$r$;
  or \emph{left} when $\ell^+(\theta)$ hits~$r$
 \item An update is \emph{leaving} if $r$ is being deleted from~$P(\theta)$;
  or \emph{approaching}, otherwise.
\end{itemize}
Thus, two updates are the same if their involved points and their types are equal.

Let $\Upd_w$ be the set of those updates occurred on~$P(\theta)$ 
during the sweeping process with fixed width~$w$ over $\theta \in [0,2\pi)$.
Observe that there are two possibilities for each~$r\in P\setminus\{p\}$:
By Lemma~\ref{lem:afixed_w0w1},
the distance between~$r$ and the pivot~$p$ is
either at most~$w_0$ or at least~$w_1$.
Thus, if $r$ falls in the former case, there are exactly two updates for~$r$ in~$\Upd_w$
whose types are right leaving and right approaching;
in the latter case, there are exactly four updates for~$r$ in~$\Upd_w$
with each of the four possible types.
This implies that the set~$\Upd_w$ is invariant under the choice of~$w \in (w_0, w_1)$,
so we write $\Upd = \Upd_w$ for any $w \in (w_0, w_1)$.

Fix an arbitrary right leaving update~$\upd_0 \in \Upd$ in which $r_0 \in P\setminus\{p\}$ is involved,
and assume that both~$p$ and~$r_0$ lie along~$\ell(0)$ in this order,
that is, $p$ and $r_0$ lie on the horizontal line~$\ell(0)$ and $r_0$ is to the right of~$p$;
this can be easily achieved by a proper rotation of the axes.
For~$\upd\in \Upd$ and~$w \in (w_0, w_1)$, 
let $\phi_\upd(w)\in[0,2\pi)$ be the direction 
at which $\upd$ occurs during the sweeping process with fixed width~$w$.
From the above discussion, we know that
$\phi_\upd$ is a well-defined function from~$(w_0, w_1)$ to~$[0,2\pi)$.
Lemma~\ref{lem:afixed_w0w1} implies the following.
\begin{lemma} \label{lem:afixed_phi}
 There is no $w \in (w_0, w_1)$ such that
 $\phi_{\upd}(w) = \phi_{\upd'}(w)$
 for any two distinct $\upd, \upd'\in \Upd$.
 Moreover, for each~$\upd\in \Upd$,
 $\phi_\upd(w)$ is either constant if $\upd$ is right,
 continuously increasing if $\upd$ is left leaving,
 or continuously decreasing if $\upd$ is left approaching.
\end{lemma}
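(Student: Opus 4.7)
The plan is to place the pivot $p$ at the origin and parametrize each involved point $r\in P\setminus\{p\}$ by its polar coordinates $(\rho_r,\alpha_r)$, so that $r=(\rho_r\cos\alpha_r,\rho_r\sin\alpha_r)$. Under this parametrization, the signed distance from $r$ to $\ell(\theta)$, positive on the left side, equals $\rho_r\sin(\alpha_r-\theta)$. Thus the direction $\phi_\upd(w)$ at which update $\upd$ (involving point $r$) occurs satisfies $\sin(\alpha_r-\phi_\upd(w))=0$ if $\upd$ is right, and $\rho_r\sin(\alpha_r-\phi_\upd(w))=w$ if $\upd$ is left. From these two defining equations, the monotonicity/continuity part will follow by implicit differentiation, while the absence of coincidences $\phi_\upd(w)=\phi_{\upd'}(w)$ will be shown to force $w$ into $W_2\cup W_3$, contradicting Lemma~\ref{lem:afixed_w0w1}.

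For the monotonicity and continuity assertion, a right update fixes $\phi_\upd(w)\in\{\alpha_r,\alpha_r+\pi\}$ independently of $w$, so $\phi_\upd$ is constant as claimed. For a left update, implicit differentiation of $\rho_r\sin(\alpha_r-\phi_\upd(w))=w$ yields
\[
\phi_\upd'(w)=-\frac{1}{\rho_r\cos(\alpha_r-\phi_\upd(w))},
\]
which is well-defined and continuous as long as $\cos(\alpha_r-\phi_\upd(w))\neq 0$; but $\cos(\alpha_r-\phi_\upd(w))=0$ is equivalent to $w=\rho_r\in W_2$, which is excluded on $(w_0,w_1)$ by Lemma~\ref{lem:afixed_w0w1}. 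The sign of $\phi_\upd'(w)$ is then read off from the sign of $\cos(\alpha_r-\phi_\upd(w))$, which is in turn determined by the leaving/approaching distinction: since $\tfrac{d}{d\theta}\bigl[\rho_r\sin(\alpha_r-\theta)\bigr]=-\rho_r\cos(\alpha_r-\theta)$, a \emph{leaving} left update (where $\rho_r\sin(\alpha_r-\theta)$ crosses $w$ from below to above at $\theta=\phi_\upd(w)$) must satisfy $\cos(\alpha_r-\phi_\upd(w))<0$, giving $\phi_\upd'(w)>0$, i.e., $\phi_\upd$ is increasing. An \emph{approaching} left update reverses this sign, yielding $\phi_\upd'(w)<0$, i.e., $\phi_\upd$ is decreasing.

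For the no-coincidence assertion, I would assume for contradiction that $\phi_\upd(w)=\phi_{\upd'}(w)=\theta$ for some $w\in(w_0,w_1)$ and distinct updates $\upd,\upd'\in\Upd$ involving points $r,r'\in P\setminus\{p\}$, and split into three cases by type. If both are right, then $r=r'$ forces $\upd,\upd'$ to be the right leaving and right approaching pair occurring at $\alpha_r$ and $\alpha_r+\pi$, so they cannot coincide, while $r\neq r'$ makes $\ell(\theta)$ contain the three points $p,r,r'$, violating the general position assumption. If both are left, then $r=r'$ merges the two solution branches of $\sin(\alpha_r-\theta)=w/\rho_r$ only at $w=\rho_r\in W_2$, while $r\neq r'$ forces $\ell^+(\theta)$ to be the line through $r$ and $r'$ at distance $w$ from $p$, placing $w\in W_3$. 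If one is right and the other is left, then necessarily $r\neq r'$ (since $\ell(\theta)$ and $\ell^+(\theta)$ are parallel at positive distance $w$), and $r'$ lies at distance exactly $w$ from the line through $p$ and $r$, again giving $w\in W_3$. Every case forces $w\in W_2\cup W_3$, contradicting Lemma~\ref{lem:afixed_w0w1}.

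I expect the main subtlety to be the correct identification of leaving versus approaching with the sign of $\cos(\alpha_r-\phi_\upd(w))$, together with the verification that the two left-update branches for a single point merge precisely at the excluded value $w=\rho_r$; once this dictionary is in place, the remainder is a direct case analysis that cashes in on the careful choice of $(w_0,w_1)$ so as to avoid the discrete set $W_2\cup W_3$ of critical distances.
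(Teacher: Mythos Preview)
Your proposal is correct and follows essentially the same approach as the paper: the same case analysis reducing any coincidence $\phi_\upd(w)=\phi_{\upd'}(w)$ to membership $w\in W_2\cup W_3$ (contradicting Lemma~\ref{lem:afixed_w0w1}), and the same observation that $\phi_\upd$ for a left update is the inverse of the sinusoidal width function $\theta\mapsto\width_\theta(\{p,r\})$. The only difference is presentational: you carry out the monotonicity argument via explicit polar coordinates and implicit differentiation, whereas the paper states it by identifying $\phi_\upd$ as the inverse of $\width_\theta(\{p,r\})$ on the appropriate branch; your version has the mild advantage of making the leaving/approaching sign dichotomy fully explicit.
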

\begin{proof}
First, note that $\phi_\upd(w)$ for any right update~$\upd\in \Upd$ is a constant function,
since those right updates are determined solely by the rotation of the line~$\ell(\theta)$
through the pivot~$p$,
independently of the choice of~$w\in (w_0, w_1)$.
This also proves the second part of the lemma for right updates.

Suppose $\phi_{\upd}(w) = \phi_{\upd'}(w)$
for some $w\in (w_0, w_1)$ and $\upd, \upd' \in \Upd$ with $\upd \neq \upd'$.
Let $\phi := \phi_{\upd}(w) = \phi_{\upd'}(w)$, and
let $r, r'\in P$ be the points involved in~$\upd$ and~$\upd'$, respectively.
By the above argument and the general position assumption, 
note that not both of~$\upd$ and~$\upd'$ are right updates,
so there are two cases: either one of the two is a right update or both are left updates.

In the former case, if $\upd$ is a right update,
both the pivot~$p$ and~$r$ lie on~$\ell(\phi)$,
and the distance from~$r'$ to line~$\ell(\phi)$ through~$p$ and~$r$ is exactly~$w$.
Thus, $w\in W_3$.
Since $w_0 < w < w_1$, 
this leads to a contradiction to Lemma~\ref{lem:afixed_w0w1}.
The other case where $\upd'$ is a right update can also be handled analogously.

Now, consider the latter case where both~$\upd$ and~$\upd'$ are left updates.
If $r \neq r'$, then the distance from~$p$ to the line through~$r$ and $r'$
is exactly~$w$, and we have $w\in W_3$,
leading to a contradiction to Lemma~\ref{lem:afixed_w0w1} as above.
Thus, we have $r = r'$, so $\upd$ and~$\upd'$ are leaving and approaching,
as $\upd \neq \upd'$.
This, however, implies that $\width_\theta(\{p, r\})$
attains its maximum~$w$ at~$\theta = \phi$.
As $\width_\theta(\{p, r\})$ is represented as a sinusoidal function of~$\theta$~\cite{Bae2020},
the value~$w$ is indeed the distance between $p$ and $r$, so $w\in W_2$.
This leads to a contradiction to Lemma~\ref{lem:afixed_w0w1},
as $w_0 < w < w_1$.

This completes the proof of the first part of the lemma.
By this, we have $\phi_\upd(w) \neq 0$ for any $\upd\in \Upd\setminus\{\upd_0\}$
and any $w\in (w_0, w_1)$.
For any left update~$\upd\in \Upd$ with its involving point~$r\in P$,
observe that 
the function~$\phi_\upd$ is indeed the inverse of 
$\width_\theta(\{p, r\})$ 
as a function restricted for those~$\theta$ such that the function value lies in~$(w_0, w_1)$,
that is,
 \[ \width_{\phi_\upd(w)}(\{p, r\}) = w\]
for any $w\in (w_0, w_1)$.
This implies the second part of the lemma for left updates.
\end{proof}

For $w\in (w_0, w_1)$,
we consider a total order~$\prec_w$ on~$\Upd$ such that
$\upd \prec_w \upd'$ if and only if
$\phi_\upd(w) < \phi_{\upd'}(w)$.
Note that its totality is guaranteed by Lemma~\ref{lem:afixed_phi}
and $\upd_0$ is the least element in~$\Upd$ under~$\prec_w$.
Lemma~\ref{lem:afixed_phi} further implies that
the ordering~$\prec_w$ on~$\Upd$ remains the same over all~$w\in (w_0, w_1)$:
assuming any swap between~$\prec_w$ and~$\prec_{w'}$ for $w_0 < w < w' < w_1$,
one can face with some $w'' \in (w, w')$ and $\upd,\upd'\in \Upd$ 
such that $\phi_{\upd}(w'') = \phi_{\upd'}(w'')$,
due to the continuity of functions~$\phi_\upd$ and~$\phi_\upd$,
so a contradiction.

Hence, we have a universal total ordering~$\prec$ on~$\Upd$ 
such that $\prec\; = \;\prec_w$ for any~$w \in (w_0, w_1)$.
Let $\upd_0, \upd_1, \ldots, \upd_{m-1} \in \Upd$ be
the updates in~$\Upd$ listed in this order~$\prec$,
where $m:= |\Upd|$.
For each $0\leq i\leq m-1$,
let $I_i:= \{\phi_{\upd_i}(w) \mid w_0 < w <w_1\}$.
Lemma~\ref{lem:afixed_phi} implies that
$I_i$ consists of a single element if $\upd_i$ is a right update;
otherwise, $I_i$ forms an open interval if $\upd_i$ is a left update.
The following summarizes more implications of Lemma~\ref{lem:afixed_phi}
about the intervals~$I_i$.
Two intervals~$I$ and $I'$ are said to be \emph{properly nested} 
if one includes the other, say $I' \subset I$, in such a way that
both endpoints of~$I'$ lie in the relative interior of~$I$.
\begin{lemma} \label{lem:afixed_update}
 Two intervals~$I_i$ and $I_j$ are never properly nested.
 If $I_i$ and $I_j$ overlap,
  then either both of~$\upd_i$ and $\upd_{j}$ are left leaving or
  both are left approaching.
\end{lemma}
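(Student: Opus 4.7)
The plan is to leverage Lemma~\ref{lem:afixed_phi} together with a short continuity argument. The starting observation is that for any two distinct updates $\upd_i, \upd_j \in \Upd$, the continuous maps $\phi_{\upd_i}, \phi_{\upd_j} \colon (w_0, w_1) \to [0, 2\pi)$ never take a common value by Lemma~\ref{lem:afixed_phi}, so by the intermediate value theorem the sign of $\phi_{\upd_i}(w) - \phi_{\upd_j}(w)$ is constant on $(w_0, w_1)$; without loss of generality I will assume $\phi_{\upd_i}(w) < \phi_{\upd_j}(w)$ throughout. Next I would tabulate the endpoints of each $I_i$ in terms of the one-sided limits $\phi_{\upd_i}(w_0^+)$ and $\phi_{\upd_i}(w_1^-)$, using the monotonicity statements in Lemma~\ref{lem:afixed_phi}: $I_i$ is a single point if $\upd_i$ is a right update; it is the open interval whose left endpoint is $\phi_{\upd_i}(w_0^+)$ and whose right endpoint is $\phi_{\upd_i}(w_1^-)$ if $\upd_i$ is left leaving; and it is the same interval with its two endpoints swapped if $\upd_i$ is left approaching.

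For the first assertion (no proper nesting), I would do a case analysis on the types of $\upd_i$ and $\upd_j$. If $\upd_i$ is a right update, then $I_i$ is a single point and nothing can be properly nested inside it. If $\upd_j$ is a right update while $\upd_i$ is a left update, then the constant value $\phi_{\upd_j}(w)$ lies strictly inside $I_i$ by the nesting assumption, so $\phi_{\upd_i}$ attains this value at some $w \in (w_0, w_1)$ by the intermediate value theorem, contradicting Lemma~\ref{lem:afixed_phi}. When both are left updates and $I_j \subsetneq I_i$ is properly nested, the tabulation above forces the sign of $\phi_{\upd_i}(w) - \phi_{\upd_j}(w)$ to differ near $w_0^+$ and near $w_1^-$, again contradicting the constant-sign conclusion.

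For the second assertion (overlap forces identical type), I would argue the contrapositive: assuming $\upd_i$ and $\upd_j$ are not both left leaving and not both left approaching, I would show $I_i \cap I_j = \emptyset$. The subcases in which either update is a right one are settled exactly as above (two coinciding right points would violate Lemma~\ref{lem:afixed_phi} globally, and a right point strictly inside a left-update interval would be met by the corresponding $\phi$-function somewhere in $(w_0, w_1)$). The remaining subcase has one update left leaving and the other left approaching; here the opposite monotonicities together with the fixed sign of $\phi_{\upd_i}(w) - \phi_{\upd_j}(w)$ translate, at the two one-sided limits, into an ordering of the four endpoint values that places the entirety of $I_i$ on one side of $I_j$, so the two open intervals are disjoint.

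The main obstacle I anticipate is purely bookkeeping: keeping track of which endpoint of $I_i$ corresponds to which one-sided limit across the four combinations of leaving and approaching, and making sure the direction conventions of Lemma~\ref{lem:afixed_phi} (right updates constant, left leaving increasing, left approaching decreasing) are consistently applied so that the forced constant sign of $\phi_{\upd_i}(w) - \phi_{\upd_j}(w)$ yields a contradiction in each undesired configuration.
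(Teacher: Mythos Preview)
Your proposal is correct and follows essentially the same approach as the paper: both arguments reduce to showing that the forbidden configurations would force $\phi_{\upd_i}$ and $\phi_{\upd_j}$ to cross somewhere on $(w_0,w_1)$, contradicting Lemma~\ref{lem:afixed_phi}, with the paper phrasing this as ``the graphs $\gamma_i$ and $\gamma_j$ must cross'' and you phrasing it as ``the sign of $\phi_{\upd_i}(w)-\phi_{\upd_j}(w)$ is constant.'' Your version makes the endpoint bookkeeping more explicit, but the underlying idea and case split are the same.
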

\begin{proof}
Suppose that $I_i$ and $I_j$ are properly nested and 
$I_i \subset I_j$.
Note that not both~$\upd_i$ and $\upd_j$ are right updates,
since, otherwise, both~$I_i$ and~$I_j$ are singletons 
and cannot be properly nested.
Consider the graphs~$\gamma_i$ of~$\{\theta = \phi_{\upd_i}(w)\}$ and
$\gamma_j$ of~$\{\theta = \phi_{\upd_j}(w)\}$ drawn in the $(w, \theta)$-plane over $w\in (w_0, w_1)$.
Then, we observe that $\gamma_i$ and $\gamma_j$ must cross,
regardless of the types of~$\upd_i$ and~$\upd_j$,
since $\gamma_i$ and $\gamma_j$ are continuous by Lemma~\ref{lem:afixed_phi}.
See \figurename~\ref{fig:afixed_update}(a)
for an illustration to the case where both~$\upd_i$ and~$\upd_j$ are left leaving updates.
This implies the existence of $w' \in (w_0, w_1)$ such that
$\phi_{\upd_i}(w') = \phi_{\upd_j}(w')$,
a contradiction to Lemma~\ref{lem:afixed_phi}.
Hence, $I(\upd)$ and $I(\upd')$ are not properly nested.
This proves the first statement of the lemma

\begin{figure}[htb]
  \centerline{\includegraphics[width=.90\textwidth]{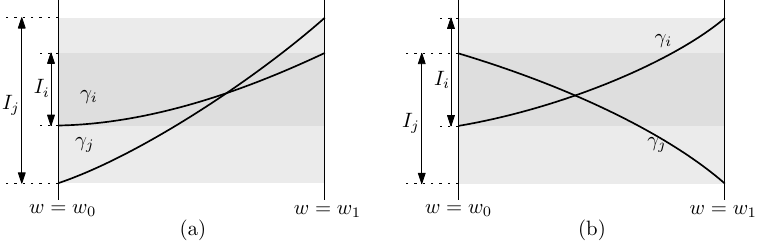}}
  \caption{Proof of Lemma~\ref{lem:afixed_update}.}
  \label{fig:afixed_update}
\end{figure}

Next, assume that $I_i\cap I_j \neq \emptyset$.
This rules out the possibility that one of~$\upd_i$ and $\upd_j$ is a right update
by the first statement of the lemma and also by the general position assumption.
So, both are left updates.
Suppose $\upd_i$ is leaving and $\upd_j$ is approaching.
We then observe that $\gamma_i$ and $\gamma_j$ must cross,
since $\phi_{\upd_i}$ is continuously increasing and $\phi_{\upd_j}$ is continuously decreasing
by Lemma~\ref{lem:afixed_phi}.
See \figurename~\ref{fig:afixed_update}(b).
This again leads to a contradiction to Lemma~\ref{lem:afixed_phi}.
Therefore, either both of~$\upd_i$ and~$\upd_j$ are left leaving updates 
or both are left approaching updates.
\end{proof}

For $0\leq i\leq m-1$,
let $P_i$ be the resulting set after executing
the first $i+1$ updates~$\upd_0, \ldots, \upd_i$
on the subset of points in~$P$ lying on or to the left of~$\ell(0)$
whose distances to~$\ell(0)$ are at most~$w_0$.
Note that $P_0 = (\strip_0 \cap P) \setminus\{r_0\}$
where $\strip_0$ denotes the horizontal strip of width~$w_0$
such that $\ell(0)$ bounds~$\strip_0$ from below.
Let $Q_i := P\setminus P_i$, and define
 \[ \wf_i(\theta) := \width_\theta(P_i) \quad \text{and} \quad 
    \wfg_i(\theta) := \width_{\theta+\beta}(Q_i),\]
for $\theta \in [0,2\pi)$.
%
Let $r_i \in P\setminus \{p\}$ be the point involved in~$\upd_i$.
\begin{lemma} \label{lem:afixed_width_leaving}
 For any left leaving update~$\upd_i \in \Upd$,
 $\wf_{i-1}(\theta) = \width_\theta(\{p,r_i\})$ over~$\theta\in I_i$,
 and is an increasing function over~$I_i$ whose infimum and supremum are $w_0$ and $w_1$, respectively.
\end{lemma}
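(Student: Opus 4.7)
The plan is to reduce $\wf_{i-1}(\theta)$ on $I_i$ to the elementary pair width $\width_\theta(\{p,r_i\})$ by reading off both sides from the snapshot of the sweeping process taken just before $\upd_i$ fires. Concretely, for each $w \in (w_0, w_1)$ and $\theta := \phi_{\upd_i}(w) \in I_i$, the first $i$ updates $\upd_0, \ldots, \upd_{i-1}$ have already been processed at this $\theta$ while $\upd_i$ has not, so the current set $P(\theta^-)$ in the sweep coincides with $P_{i-1}$. Since every direction of $I_i$ arises this way for some $w$, it suffices to verify the desired identity at such critical directions.

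The identity itself follows from a short sandwich at each such $\theta$. The pivot $p$ stays on $\ell(\theta)$ throughout the sweep and is never removed, so $p \in P_{i-1}$; and $r_i \in P_{i-1}$ because $\upd_i$ is precisely the removal of $r_i$. As $\upd_i$ is left leaving, $\ell^+(\theta)$ passes through $r_i$, so the two bounding lines of $\strip(\theta)$ contain $p$ and $r_i$ respectively, giving $\width_\theta(\{p, r_i\}) = w$. Combining $P_{i-1} \subseteq \strip(\theta)$, which yields $\wf_{i-1}(\theta) \leq w$, with $\{p, r_i\} \subseteq P_{i-1}$, which yields $\wf_{i-1}(\theta) \geq \width_\theta(\{p, r_i\}) = w$, pins both sides at $w$.

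For monotonicity and the endpoints I would simply invoke Lemma~\ref{lem:afixed_phi}: for a left leaving update, $\phi_{\upd_i}$ is continuous and strictly increasing on $(w_0, w_1)$ and therefore maps bijectively onto the open interval $I_i$. The identity $\wf_{i-1}(\phi_{\upd_i}(w)) = w$ then exhibits $\wf_{i-1}$ restricted to $I_i$ as the inverse $\phi_{\upd_i}^{-1}$, which is continuous and strictly increasing, and the one-sided limits as $w \to w_0^+$ and $w \to w_1^-$ give infimum $w_0$ and supremum $w_1$.

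The main obstacle I anticipate is justifying that no point of $P_{i-1} \setminus \{p, r_i\}$ happens to lie on $\partial\strip(\theta)$ at the critical direction $\theta = \phi_{\upd_i}(w)$; if such a point existed, the second inequality in the sandwich could be realized by a different pair and the reduction to $\{p, r_i\}$ would fail. This, however, is exactly the single-collision guarantee of Lemma~\ref{lem:afixed_phi}: for any $w \in (w_0, w_1)$ no two distinct updates of $\Upd$ share a direction, so at $\theta = \phi_{\upd_i}(w)$ the point $r_i$ is the unique point of $P \setminus \{p\}$ meeting $\partial\strip(\theta)$, and the sandwich above is tight. Once this is in hand, the remainder of the argument is routine.
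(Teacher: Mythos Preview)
Your proof is correct and follows essentially the same route as the paper: pick $w\in(w_0,w_1)$, set $\theta=\phi_{\upd_i}(w)$, use the ordering guaranteed by Lemma~\ref{lem:afixed_phi} to identify the set in the sweep at that moment with $P_{i-1}$, and read off $\wf_{i-1}(\theta)=\width_\theta(\{p,r_i\})=w$; the monotonicity and the infimum/supremum then drop out because $\wf_{i-1}\!\restriction_{I_i}$ is the inverse of $\phi_{\upd_i}$.

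One small remark: the ``obstacle'' you flag in the last paragraph is not actually an obstacle. Your sandwich $\width_\theta(\{p,r_i\})\le \wf_{i-1}(\theta)\le \width(\strip(\theta))=w$ together with $\width_\theta(\{p,r_i\})=w$ already forces the identity, regardless of whether some third point of $P_{i-1}$ happens to sit on $\partial\strip(\theta)$; uniqueness of the boundary point is irrelevant to either inequality. So the appeal to the single-collision part of Lemma~\ref{lem:afixed_phi} there is superfluous (though harmless). The paper's proof simply asserts $\width_{\phi_{\upd_i}(w)}(P_{i-1})=\width_{\phi_{\upd_i}(w)}(\{p,r_i\})$ without the sandwich spelled out, so your version is, if anything, a bit more explicit.
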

\begin{proof}
Note that $P_i = P_{i-1} \setminus \{r_i\}$.
Let $\delta_i \colon I_i \to (w_0, w_1)$ be a function such that
 \[ \delta_i(\theta) = \width_\theta(\{p, r_i\}).\]
As also discussed in the proof of Lemma~\ref{lem:afixed_phi},
$\delta_i$ is the inverse function of~$\phi_{\upd_i}$.
By Lemma~\ref{lem:afixed_phi},
we know that $\phi_{\upd_i} \colon (w_0, w_1) \to I_i$ is a continuous increasing function
and thus a bijection.
Hence, $\delta_i$ is also increasing over~$I_i$ whose infimum and supremum are
$w_0$ and $w_1$, respectively.

In the rest of the proof, we prove that
 \[ \wf_{i-1}(\theta) = \delta_i(\theta) = \width_\theta(\{p, r_i\})\]
for $\theta \in I_i$.
Consider the sweeping process
with fixed width~$w$ and fixed pivot~$p$, as described above.
As discussed above, Lemma~\ref{lem:afixed_phi} implies that
 \[ \phi_{\upd_0}(w) < \phi_{\upd_1}(w) < \cdots < \phi_{\upd_{m-1}}(w) \]
for any~$w\in (w_0, w_1)$.
Hence, at~$\theta = \phi_{\upd_i}(w)$,
the first $i$~updates~$\upd_0, \upd_1 \ldots, \upd_{i-1}$ have been placed,
so we have
 \[ \width_{\phi_{\upd_i}(w)}(P_{i-1}) = \width_{\phi_{\upd_i}(w)}(\{p, r_i\}) 
  = \delta_i(\phi_{\upd_i}(w)).\]
Furthermore,
by definition of~$I_i$,
$I_i$ is the image of~$(w_0, w_1)$ under~$\phi_{\upd_i}$.
This implies that 
$\wf_{i-1}(\theta) = \width_{\theta}(P_{i-1}) = \delta_i(\theta)$
for any $\theta \in I_i$, and the lemma follows.
\end{proof}

\paragraph*{Description of algorithm.}
The second phase of our algorithm simulates a similar sweeping process as before,
but with the first strip~$\strip(\theta)$ having \emph{variable} width:
Let $\wf \colon [0,2\pi) \to \Real$ be a function,
which will be specified later.
We redefine $\ell^+(\theta)$ to be the line at distance~$\wf(\theta)$
to the left of~$\ell(\theta)$,
and thus $\strip(\theta)$ to have width~$\wf(\theta)$.
The second strip~$\stripbar(\theta)$ in orientation~$\theta + \beta$ is determined as before
to tightly enclose the rest of points in~$P \setminus \strip(\theta)$.
Let $\wfg(\theta) := \width(\stripbar(\theta))$.
This way, the process is completely determined by the width function~$\wf(\theta)$.

Our width function~$\wf(\theta)$ will be fully determined
by when to execute each update~$\upd_i\in \Upd$.
For $0\leq i \leq m-1$,
let $\phi_i$ be the direction
at which the $i$-th update~$\upd_i \in \Upd$ is executed in our algorithm.
We choose the $\phi_i$'s by the following rules:
\begin{itemize}\denseitems
 \item If $\upd_i$ is a right update,
  $\phi_i$ is the only direction in~$I_i$, that is, $I_i = \{\phi_i\}$.
 \item If $\upd_i$ is a left approaching update,
  $\phi_i$ is chosen to be the larger endpoint of~$I_i$.
 \item If $\upd_i$ is a left leaving update,
  $\phi_i$ is chosen to be the smallest direction~$\theta$ 
  such that $\wf_{i-1}(\theta) = \wfg_{i-1}(\theta)$ over~$\theta \in I_i$, if exists;
  otherwise, $\phi_i$ is the larger endpoint of~$I_i$.
\end{itemize}
Note that $\phi_0 = 0$ and let $\phi_m := 2\pi$.
It is obvious that either $\phi_i \in I_i$ or
$\phi_i$ is the larger endpoint of~$I_i$.
Less obvious is that the resulting $\phi_i$'s indeed obey the ordering~$\prec$ of~$\Upd$.
\begin{lemma} \label{lem:afixed_order_phi}
 It holds that 
  $0 = \phi_0 <\phi_1\leq \phi_2 \leq \cdots \leq \phi_{m-1} \leq \phi_m = 2\pi$.
\end{lemma}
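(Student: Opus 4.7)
The plan is to prove $\phi_i \le \phi_{i+1}$ for every consecutive pair by a case analysis driven by the interval structure of $I_i$ from Lemma~\ref{lem:afixed_phi} and the no-nesting restriction from Lemma~\ref{lem:afixed_update}. The boundary identities $\phi_0 = 0$ and $\phi_m = 2\pi$ are immediate from the setup ($\upd_0$ is the distinguished right leaving update placed along $\ell(0)$, and $\phi_m$ is defined to be $2\pi$), and the strict inequality $\phi_0 < \phi_1$ follows since no two distinct updates in $\Upd$ can share a direction by Lemma~\ref{lem:afixed_phi} together with the general position assumption.

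A first observation is that $\phi_i$ always lies in $[\inf I_i, \sup I_i]$ in every branch of the definition: singleton for right updates, right endpoint for left approaching updates, and either an interior balancing point or the right endpoint $\sup I_i$ for left leaving updates. The universal ordering gives $\phi_{\upd_i}(w) < \phi_{\upd_{i+1}}(w)$ for every $w \in (w_0, w_1)$, and by Lemma~\ref{lem:afixed_phi} each $\phi_\upd$ is continuous and either constant or strictly monotone on $(w_0, w_1)$. Passing to the one-sided limits $w \to w_0^+$ and $w \to w_1^-$ converts this strict pointwise inequality into weak inequalities between the endpoints of $I_i$ and $I_{i+1}$. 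If the intervals are disjoint these endpoint inequalities force $\sup I_i \le \inf I_{i+1}$, and then $\phi_i \le \sup I_i \le \inf I_{i+1} \le \phi_{i+1}$. Otherwise, Lemma~\ref{lem:afixed_update} forces the two updates to share side-type. If both are left approaching, then $\phi_i = \sup I_i$, $\phi_{i+1} = \sup I_{i+1}$, and both functions are decreasing so these suprema are attained as $w \to w_0^+$; the limit of the strict inequality in that direction gives $\phi_i \le \phi_{i+1}$.

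The main obstacle is the both-left-leaving subcase with overlap, where one has $\inf I_i \le \inf I_{i+1} \le \sup I_i \le \sup I_{i+1}$ but $\phi_i$ can a priori lie anywhere in $[\inf I_i, \sup I_i]$, so endpoint comparison alone does not suffice. The plan is to reason via the width functions directly. If $\phi_i \le \inf I_{i+1}$ we are done, so assume $\phi_i \in I_{i+1}$. Because $\upd_i$ is left leaving we have $P_i = P_{i-1} \setminus \{r_i\}$ and $Q_i = Q_{i-1} \cup \{r_i\}$, so monotonicity of width under inclusion gives $\wf_i \le \wf_{i-1}$ and $\wfg_i \ge \wfg_{i-1}$ pointwise. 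When $\phi_i$ is an interior balancing point we have $\wf_{i-1}(\phi_i) = \wfg_{i-1}(\phi_i)$, and the monotone inequalities immediately give $\wf_i(\phi_i) \le \wfg_i(\phi_i)$; when instead $\phi_i = \sup I_i$ (no balancing in $I_i$), a continuity argument using $\wf_{i-1} \to w_0$ at $\inf I_i^+$ from Lemma~\ref{lem:afixed_width_leaving}, together with Lemma~\ref{lem:afixed_w0w1} to exclude exceptional coincidences from $W_2 \cup W_3$ strictly inside $(w_0, w_1)$, forces $\wf_{i-1} \le \wfg_{i-1}$ throughout $I_i$ and again yields $\wf_i(\phi_i) \le \wfg_i(\phi_i)$. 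Finally, by Lemma~\ref{lem:afixed_width_leaving} applied to $\upd_{i+1}$, the restriction of $\wf_i$ to $I_{i+1}$ equals $\width_\theta(\{p, r_{i+1}\})$ and is continuously increasing, while $\wfg_i$ is continuous; therefore the earliest $\theta \in I_{i+1}$ with $\wf_i(\theta) = \wfg_i(\theta)$, which is $\phi_{i+1}$ by definition, cannot precede $\phi_i$. Carefully verifying the sign of $\wf_{i-1} - \wfg_{i-1}$ on $I_i$ in the endpoint subcase is the most delicate step, and this is precisely where Lemma~\ref{lem:afixed_w0w1} is used to rule out a reverse crossing.
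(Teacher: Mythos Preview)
Your overall decomposition matches the paper's: reduce to consecutive pairs, dispose of the disjoint-interval case by endpoint comparison, invoke Lemma~\ref{lem:afixed_update} when $I_i\cap I_{i+1}\neq\emptyset$, and treat the both-left-approaching case by comparing right endpoints. The divergence, and the gap, is in the both-left-leaving case.

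There your final inference does not go through. From $\wf_i(\phi_i)\le\wfg_i(\phi_i)$, together with ``$\wf_i$ increasing on $I_{i+1}$'' and ``$\wfg_i$ continuous,'' you cannot conclude that the \emph{earliest} root of $\wf_i=\wfg_i$ in $I_{i+1}$ is $\ge\phi_i$: nothing prevents $\wf_i-\wfg_i$ from crossing zero once (or several times) to the left of $\phi_i$ and then returning to a nonpositive value at $\phi_i$. What you actually need is the strict inequality $\wf_i(\theta)<\wfg_i(\theta)$ for every $\theta\in I_{i+1}$ with $\theta<\phi_i$. This is available, but it comes from the definition of $\phi_i$ rather than from the single value at $\phi_i$: any such $\theta$ lies in $J=I_i\cap I_{i+1}$ (by the interval ordering and non-nesting), and on $(\inf I_i,\phi_i)$ one has $\wf_{i-1}<\wfg_{i-1}$ strictly, because $\wf_{i-1}\to w_0<\wfg_{i-1}$ near $\inf I_i$ and $\phi_i$ is by construction the \emph{first} zero of $\wf_{i-1}-\wfg_{i-1}$ in $I_i$; hence $\wf_i\le\wf_{i-1}<\wfg_{i-1}\le\wfg_i$ there. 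You do carry out exactly this sign analysis in your subcase $\phi_i=\sup I_i$, but in the interior-balancing subcase you stop at the pointwise inequality $\wf_i(\phi_i)\le\wfg_i(\phi_i)$, which is insufficient.

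The paper closes the same case by contradiction: assuming $\phi_{i+1}<\phi_i$, it uses Lemma~\ref{lem:afixed_width_leaving} together with $r_i\neq r_{i+1}$ to get the \emph{strict} inequality $\wf_{i-1}>\wf_i$ on $J$, so that $\wfg_{i-1}(\phi_{i+1})\le\wfg_i(\phi_{i+1})=\wf_i(\phi_{i+1})<\wf_{i-1}(\phi_{i+1})$; combined with $\wf_{i-1}<\wfg_{i-1}$ near $\inf I_i$, the intermediate value theorem then produces a zero of $\wf_{i-1}-\wfg_{i-1}$ strictly before $\phi_{i+1}<\phi_i$, contradicting the minimality of $\phi_i$. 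Either route works, but you must supply one of them; the bare inequality at $\phi_i$ does not.
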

\begin{proof}
Pick any $0\leq i \leq m-2$.
If $I_i \cap I_{i+1}=\emptyset$, then we immediately have $\phi_{i} \leq \phi_{i+1}$,
since each $\phi_i$ is chosen to be either $\phi_i\in I_i$ or
the larger endpoint of~$I_i$ according to the above rules.
So, suppose $I_i$ and $I_{i+1}$ have a nonempty intersection.
Then Lemma~\ref{lem:afixed_update} implies that
both $\upd_i$ and $\upd_{i+1}$ are left leaving updates
or both are left approaching updates.
In the latter case, $\phi_i$ and $\phi_{i+1}$ are chosen to be
the larger endpoints of $I_i$ and $I_{i+1}$ by the rules, and thus $\phi_i \leq \phi_{i+1}$.

What remains is the former case where both $\upd_i$ and $\upd_{i+1}$ are left leaving updates.
Let $J := I_i \cap I_{i+1}$.
If either $\phi_i \notin I_{i+1}$ or $\phi_{i+1} \notin I_i$, then we are obviously done,
so assume that $\phi_i \in I_{i+1}$ and $\phi_{i+1} \in I_i$.
Suppose to the contrary that $\phi_i > \phi_{i+1}$.
From the rules, 
this implies that $\phi_{i+1} \in J$ is the smallest direction over range~$I_{i+1}$ such that
$\wf_{i}(\phi_{i+1}) = \wfg_{i}(\phi_{i+1})$.
We claim that
\begin{quote}
 (*) $\wf_{i-1}(\theta) > \wf_{i}(\theta)$ and $\wfg_{i-1}(\theta) \leq \wfg_{i}(\theta)$
 for any~$\theta \in J$.
\end{quote}
We first show a contradiction, assuming the above claim (*) is true,
and next we prove the claim.

On one hand, the claim~(*) implies 
 \[ \wfg_{i-1}(\phi_{i+1}) \leq \wfg_{i}(\phi_{i+1}) = \wf_{i}(\phi_{i+1}) < \wf_{i-1}(\phi_{i+1}).\]
On the other hand, letting $\theta_0$ be the smaller endpoint of~$I_i$,
observe that 
 \[  \wf_{i-1}(\theta_0) = w_0 < \wfg_{i-1}(\theta_0),\]
since $\theta_0 = \phi_{\upd_{i-1}}(w_0)$
and $w_0 < w^*$.
As both functions~$\wf_{i-1}$ and~$\wfg_{i-1}$ are continuous over~$J$,
this implies the existence of~$\phi'$ with $\theta_0 < \phi' < \phi_{i+1}$ such that
$\wf_{i-1}(\phi') = \wfg_{i-1}(\phi')$.
This leads to a contradiction to the definition of~$\phi_i$, since $\phi' < \phi_{i+1} < \phi_i$.
So, we have $\phi_i \leq \phi_{i+1}$, and the lemma follows.

Now, we prove the claim~(*).
Since both $\upd_i$ and $\upd_{i+1}$ are leaving updates,
we have 
 \[ P_i = P_{i-1} \setminus \{r_i\}, \quad P_{i+1} = P_i \setminus \{r_{i+1}\}, \quad
  Q_i = Q_{i-1} \cup \{r_i\}, \quad \text{ and } \quad Q_{i+1} = Q_i \cup \{r_{i+1}\}.\]
From the definitions of~$\wf_{i-1}$ and~$\wfg_{i-1}$, this already implies that
$\wf_{i-1}(\theta) \geq \wf_i(\theta)$ and $\wfg_{i-1}(\theta) \leq \wfg_i(\theta)$
for $\theta \in J$.
The strict inequality $\wf_{i-1}(\theta) > \wf_{i}(\theta)$
is observed from the fact that 
 \[ \wf_{i-1}(\theta) = \width_\theta(\{p, r_i\}) \quad \text{and} \quad
    \wf_{i}(\theta) = \width_\theta(\{p, r_{i+1}\}) \]
by Lemma~\ref{lem:afixed_width_leaving},
and that $r_i \neq r_{i+1}$ since both $I_i$ and $I_{i+1}$ are left leaving.
\end{proof}
The function~$\wf(\theta)$ is then set up as follows:
$\wf(0) := w_0$ and
$\wf(\theta) := \max\{ w_0, \wf_i(\theta)\}$ for $\theta \in (\phi_i, \phi_{i+1}]$
and~$0\leq i\leq m-1$.
We then obtain a conditional correctness of our algorithm.
\begin{lemma} \label{lem:afixed_correctness}
 Suppose $w^*_p < w_1$, and 
 let $\theta^*$ and $\theta^*+\beta$ be the directions of
 the bounding lines of a corresponding two-strip of width~$w^*_p$
 such that the pivot~$p$ lies on the right bounding line of direction~$\theta^*$.
 If $\theta^*\notin I_i$ for all left approaching updates~$\upd_i\in\Upd$,
 then there is a left leaving update~$\upd_j \in \Upd$ such that
 $\theta^* = \phi_j \in I_j$ and
 $w^*_p = \wf(\theta^*) = \wf_{j-1}(\theta^*) = \wfg_{j-1}(\theta^*) = \wfg(\theta^*)$.
\end{lemma}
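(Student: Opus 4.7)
The plan is to pin down the combinatorial structure of the given optimal configuration at direction~$\theta^*$ and then identify the left leaving update that the algorithm assigns to~$\theta^*$. Since $w_0 < w^* \leq w^*_p < w_1$, we have $w^*_p \in (w_0, w_1)$, so by Lemma~\ref{lem:afixed_w0w1} $w^*_p \notin W_2\cup W_3$. Applying the perturbation argument of Lemma~\ref{lem:afixed_conf} with rotations restricted to those about~$p$ (to preserve that $p$ lies on the right bounding line of $\strip_1$), and shrinking any bounding line that fails to touch a point of~$P$, the given pair $(\strip_1,\strip_2)$ must fall into case~(ii) of Lemma~\ref{lem:afixed_conf}: $\width(\strip_1) = \width(\strip_2) = w^*_p$ and each of the four bounding lines carries a point of~$P$; case~(i) would force $w^*_p \in W_3$, which is excluded. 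Let $q \in P\setminus\{p\}$ be the point on the left bounding line of~$\strip_1$. In the fixed-width sweeping process with width~$w^*_p$ and pivot~$p$, the line $\ell^+(\theta^*)$ passes through~$q$, so the update $\upd \in \Upd$ involving~$q$ at direction $\theta^*$ is a left update with $\phi_\upd(w^*_p) = \theta^* \in I_\upd$. The hypothesis $\theta^* \notin I_i$ for every left approaching $\upd_i$ forces $\upd$ to be left leaving; set $\upd_j := \upd$. Then $\theta^* \in I_j$, and Lemma~\ref{lem:afixed_width_leaving} yields $\wf_{j-1}(\theta^*) = \width_{\theta^*}(\{p,q\}) = w^*_p$.

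By the universal ordering~$\prec$ on~$\Upd$ (which agrees with the order of $\phi_\upd(w^*_p)$), the set $P_{j-1}$ coincides with $P\cap\strip(\theta^*) = P\cap\strip_1$, so $Q_{j-1} = P\setminus\strip_1 \subseteq \strip_2$, and $\wfg_{j-1}(\theta^*) \leq \width(\strip_2) = w^*_p$. A strict inequality would allow us to rotate both $\strip_1$ and $\stripbar(\theta^*)$ by the same small angle~$\epsilon$ about~$p$, keeping the angle~$\beta$ between them and $p$ on the right bounding line of the first strip: the sinusoidal $\width_\theta(\{p,q\})$ strictly decreases away from~$\theta^*$ in one direction, since $\theta^*$ cannot attain the maximum $|pq| \in W_2$ of this function (as $w^*_p \notin W_2$), while $\wfg_{j-1}(\theta^* + \epsilon)$ remains strictly below~$w^*_p$ by continuity; the result would be a constrained two-strip with $p$ on the right bounding line of the first strip and width less than~$w^*_p$, contradicting the definition of~$w^*_p$. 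Hence $\wfg_{j-1}(\theta^*) = w^*_p$. The equality $\wf_{j-1}(\theta^*) = \wfg_{j-1}(\theta^*)$ and the rule defining $\phi_j$ for left leaving updates give $\phi_j \leq \theta^*$. If $\phi_j < \theta^*$, then since $\wf_{j-1}$ is strictly increasing on~$I_j$ (Lemma~\ref{lem:afixed_width_leaving}) we would have $\wf_{j-1}(\phi_j) < w^*_p$; combined with $\wf_{j-1}(\phi_j) = \wfg_{j-1}(\phi_j)$ at~$\phi_j$, this would again produce a valid constrained two-strip of width strictly less than~$w^*_p$, a contradiction. Therefore $\phi_j = \theta^*$, the defining piecewise formula yields $\wf(\theta^*) = \max\{w_0, \wf_{j-1}(\theta^*)\} = w^*_p$, and $\wfg(\theta^*) = \wfg_{j-1}(\theta^*) = w^*_p$ because $\strip(\theta^*)$ in the variable-width sweep has width~$w^*_p$ and therefore coincides with the fixed-width strip of width~$w^*_p$, so the set outside it is still~$Q_{j-1}$.

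The main obstacle I anticipate is the clean reduction of the given optimal configuration to case~(ii) at the same direction~$\theta^*$: Lemma~\ref{lem:afixed_conf} is stated for the unconstrained optimum~$w^*$, and its perturbation argument needs to be carefully carried out under the pivot constraint while ruling out the degenerate situation where $\strip_1$ contains no point of~$P$ other than~$p$ (which would invalidate the case analysis). Once this structural step is settled, the remainder is a routine combination of Lemmas~\ref{lem:afixed_w0w1}, \ref{lem:afixed_phi}, and~\ref{lem:afixed_width_leaving}.
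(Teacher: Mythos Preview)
Your argument is correct and follows essentially the same route as the paper: locate a left leaving update~$\upd_j$ with $\theta^*\in I_j$, use Lemma~\ref{lem:afixed_width_leaving} to get $\wf_{j-1}(\theta^*)=w^*_p$, show $\wfg_{j-1}(\theta^*)=w^*_p$, and then use monotonicity of~$\wf_{j-1}$ plus minimality of~$w^*_p$ to force $\phi_j=\theta^*$. You are in fact more careful than the paper on two points. First, you correctly note that Lemma~\ref{lem:afixed_conf} is stated for the global optimum~$w^*$, not for the pivot-constrained optimum~$w^*_p$, and you redo the perturbation with rotations about~$p$; the paper simply cites Lemma~\ref{lem:afixed_conf}(ii) here without comment. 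Second, for $\wfg_{j-1}(\theta^*)=w^*_p$ the paper relies on the (implicitly assumed) fact that the extreme points of~$\strip_2$ lie in~$Q_{j-1}$, whereas your perturbation argument (decrease~$\theta$ slightly inside~$I_j$ so that $\wf_{j-1}$ drops below~$w^*_p$ while $\wfg_{j-1}$ stays below by continuity, yielding a strictly better pivot-constrained two-strip) is cleaner and self-contained, since Lemma~\ref{lem:afixed_width_leaving} guarantees $p$ remains on the boundary of $\strip_\theta(P_{j-1})$ throughout~$I_j$.
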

\begin{proof}
Suppose $w^*_p < w_1$ 
and $\theta^*$ avoids all the intervals of left approaching updates in~$\Upd$. 
Consider the sweeping process with fixed width~$w^*_p$ for the unknown~$w^*_p$.
At $\theta = \theta^*$ in the process,
the two strips in orientation~$\theta^*$ and $\theta^*+\beta$, respectively, have equal width~$w^*_p$ and
each of their four bounding lines contains a point in~$P$
by Lemma~\ref{lem:afixed_conf}(ii).
This implies that $\theta^* = \phi_{\upd_j}(w^*_p) \in I_j$
for some left leaving update~$\upd_j \in \Upd$.
Hence, the first strip in orientation~$\theta^*$ is determined by
the pivot~$p$ and the point~$r_j$ involved in~$\upd_j$.
Furthermore, the first strip encloses~$P_{j-1} = P_j \cup \{r_j\}$,
which implies that the second strip encloses~$P\setminus P_{j-1} = Q_{j-1}$.
Therefore, we have
 \[ \width_{\theta^*}(\{p, r_j\}) = \width_{\theta^*}(P_{j-1}) 
   = \width_{\theta^*+\beta}(Q_{j-1}) = w^*_p.\]

Since $\upd_j$ is a left leaving update,
Lemma~\ref{lem:afixed_width_leaving} implies that
 \[ \wf_{j-1}(\theta^*) = \width_{\theta^*}(\{p, r_j\}),\]
while we have
 \[ \wfg_{j-1}(\theta^*) = \width_{\theta^*+\beta}(Q_{j-1})\]
by definition.
Hence, the above discussion concludes that
 \[ \wf_{j-1}(\theta^*) = \wfg_{j-1}(\theta^*) = w^*_p.\]
 
By our rules, $\phi_j$ is the smallest $\theta \in I_j$ such that
$\wf_{j-1}(\theta) = \wfg_{j-1}(\theta)$, if exists; or the larger endpoint of~$I_j$, otherwise.
From the existence of~$\theta^* \in I_j$, we exclude the possibility of the latter case,
so $\phi_j \in I_j$.
Observe that $\phi_j$ indeed minimizes the value of $\wf_{j-1}(\theta)$
over all those solutions to the equation $\wf_{j-1}(\theta) = \wfg_{j-1}(\theta)$
as $\wf_{j-1}$ is an increasing function by Lemma~\ref{lem:afixed_width_leaving}.
This suffices to see that $\theta^* = \phi_j$
and $w^*_p = \wf(\theta^*) = \wf_{j-1}(\theta^*) = \wfg_{j-1}(\theta^*) = \wfg(\theta^*)$
by Lemma~\ref{lem:afixed_order_phi} and our definition of the width function~$\wf(\theta)$. 
This completes the proof of the lemma.
\end{proof}
Thus, we can compute~$w^*_p$ and its corresponding two-strip
by checking each~$\phi_i$ such that $\phi_i \in I_i$ and $\upd_i\in\Upd$ is a left leaving update,
provided the condition of Lemma~\ref{lem:afixed_correctness} is satisfied.
The other case, where $w^*_p$ is \emph{not} determined by left leaving updates,
can be handled by a \emph{reversed} sweeping process that rotates $\strip(\theta)$
clockwise by decreasing~$\theta$ from $2\pi$ to $0$;
note that in this reversed process each approaching update becomes a leaving update, and vice versa.

Now, the detailed implementation is presented.
Simulating the sweeping process with function~$\wf(\theta)$ can be done by maintaining
a dynamic set~$Q$, representing $P\setminus \strip(\theta)$, and its convex hull~$\conv(Q)$.
First, we compute the updates~$\upd_0, \upd_1, \ldots, \upd_{m-1} \in \Upd$
together with their intervals~$I_i$, and also precompute~$\phi_i$ 
for all right updates and left approaching updates~$\upd_i \in \Upd$.
Initially, $Q = Q_0$ and $Q = Q_i$ while we are in~$\theta \in (\phi_i, \phi_{i+1}]$
for each $0\leq i \leq m$.
We also maintain the two extreme points of~$Q$ that determine~$\stripbar(\theta)$:
this can be done by two types of queries on~$\conv(Q)$,
finding two tangents of~$\conv(Q)$ in a given direction and
finding the next extreme point of~$\conv(Q)$ neighboring the current one.
Each of these convex hull queries can be answered in $O(\log n)$ amortized time~\cite{bj-dpch-02}.

While we rotate $\strip(\theta)$ as increasing~$\theta$,
we execute updates~$\upd_i \in \Upd$ at $\theta = \phi_i$ if $\phi_i$ has already been computed.
Recall that only the execution times~$\phi_i$ of left leaving update~$\upd_i$
are not precomputed, so they are evaluated during the sweeping process:
Suppose the current direction~$\theta$ lies in~$I_i$ for a left leaving update~$\upd_i$
and the first $j\leq i$ updates $\upd_0, \ldots, \upd_{j-1}$ have already been executed,
that is, $Q = Q_{j-1}$ currently at~$\theta$ and $\wfg(\theta) = \wfg_{j-1}(\theta)$.
At this moment~$\theta$, note that $\theta \in I_j \cap I_i$
and $\upd_j$ is also a left leaving update by Lemma~\ref{lem:afixed_update}.
Hence, Lemma~\ref{lem:afixed_width_leaving} implies that
$\wf(\theta) = \wf_{j-1}(\theta) = \width_\theta(\{p, r_j\})$.
We then solve the equation $\wf_{j-1}(\varphi) = \wfg_{j-1}(\varphi)$.
Since the two functions $\wf_{j-1}$ and $\wfg_{j-1}$ are sinusoidal
over a range in which the two extreme points of~$Q_{j-1}$ do not change~\cite{Bae2020},
this can be done in time proportional to the number of such changes while $Q = Q_{j-1}$.
As soon as we find a solution~$\phi\in I_j$ such that $\wf_{j-1}(\phi) = \wfg_{j-1}(\phi)$,
we know that $\phi_j = \phi$ by our rules;
otherwise, $\phi_j$ is chosen to be the larger endpoint of~$I_j$.

Since $m = O(n)$, the overall time we spend is bounded by
$O(n \log n + E \log n)$,
where $E$ denotes the number of changes of the extreme points of~$Q$
that define the second strip~$\stripbar(\theta)$.
In the dual setting, as done for the decision algorithm,
those changes correspond to the vertices 
of the lower and upper envelopes of $O(n)$ line segments,
so we have $E = O(n \alpha(n))$~\cite{sa-dsstga-95}.
By iterating pivots $p\in P$,
the second phase of the algorithm can be implemented in $O(n^2 \alpha(n) \log n)$ total time.

Therefore, we conclude the following result.
\begin{theorem}\label{thm:afixed}
 Given a set~$P$ of $n$~points and a parameter~$\beta\in [0,\pi/2]$,
 the two-line center problem with a constraint that
 the resulting two lines should make an angle of~$\beta$
 can be solved in $O(n^2 \alpha(n)\log n)$ time
 using $O(n^2)$ space.
\end{theorem}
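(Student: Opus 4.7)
The plan is to assemble Theorem~\ref{thm:afixed} from the ingredients already in hand, treating the proof as a correctness and complexity audit of the two-phase procedure described in the section. I would organize the argument around three claims: (a) the first phase correctly produces an interval $(w_0, w_1]$ with $w_0 < w^* \leq w_1$ and no member of $W_2 \cup W_3$ strictly inside, (b) for each pivot $p \in P$ the second phase correctly computes $w^*_p$ whenever $w^*_p < w_1$, and (c) the total running time is $O(n^2 \alpha(n) \log n)$ in $O(n^2)$ space.

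For (a), I would simply cite Lemma~\ref{lem:afixed_w0w1}: the sorted matrix technique of Glozman et al. applied with the decision algorithm from Theorem~\ref{thm:afixed_decision} produces the desired narrow interval, and case~(i) of Lemma~\ref{lem:afixed_conf} is already absorbed into $w''_1$, so any remaining optimum lies in case~(ii). For (b), the structural analysis is the core. I would first quote that $\Upd$ does not depend on $w \in (w_0, w_1)$, and therefore admits the universal ordering $\prec$ (Lemma~\ref{lem:afixed_phi}). Then, invoking Lemma~\ref{lem:afixed_update}, I would argue that the execution times $\phi_0 \leq \phi_1 \leq \cdots \leq \phi_{m-1}$ chosen by our three rules are well-ordered (Lemma~\ref{lem:afixed_order_phi}), so that the variable-width process is well-defined. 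The conclusion of correctness then reduces to applying Lemma~\ref{lem:afixed_correctness}: if $w^*_p < w_1$ and $\theta^*$ falls outside every left-approaching $I_i$, the optimum is realized at some $\phi_j$ with $\upd_j$ left leaving; the symmetric possibility, where $\theta^*$ lies in a left-approaching interval, is captured by the reversed clockwise sweep (which swaps the roles of leaving and approaching). Taking the minimum over all pivots and both directions of sweep yields $w^* = \min_p w^*_p$.

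For (c), I would analyze one pivot at a time. The preparation of all $\upd_i$, $I_i$, and the $\phi_i$'s for right and left-approaching updates costs $O(n \log n)$ via standard rotational preprocessing. The simulation maintains the dynamic convex hull $\conv(Q)$ under $O(n)$ offline updates and supports tangent and neighbor queries in $O(\log n)$ amortized time~\cite{bj-dpch-02}. Each change of the extreme pair defining $\stripbar(\theta)$ dualizes to a vertex on the lower or upper envelope of $O(n)$ line segments, so by Davenport--Schinzel bounds~\cite{sa-dsstga-95} the number of such events is $O(n \alpha(n))$. Computing each unresolved $\phi_j$ for a left-leaving $\upd_j$ amounts to solving $\wf_{j-1}(\varphi) = \wfg_{j-1}(\varphi)$ on each maximal sub-arc where both functions remain sinusoidal, and each solution check is charged to an extreme-point change of~$Q$. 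Summing, one pivot costs $O(n \alpha(n) \log n)$, and iterating over $n$ pivots yields $O(n^2 \alpha(n) \log n)$. Space for the convex-hull structure is $O(n)$; combined with the $O(n^2)$ budget inherited from the first phase (Theorem~\ref{thm:afixed_decision}), the overall space is $O(n^2)$.

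The main obstacle I would expect is ensuring the variable-width sweep is truly well-defined and reaches the optimum even though the $\phi_i$'s for left-leaving updates are determined on the fly. The danger is a circular dependency: $\phi_j$ is chosen from a continuous equation involving $\wf_{j-1}$ and $\wfg_{j-1}$, which in turn depend on all earlier executions. Lemma~\ref{lem:afixed_order_phi} already addresses the monotonicity through the delicate argument in its proof of claim~(*), but I would want to re-examine carefully that the on-the-fly solution of $\wf_{j-1} = \wfg_{j-1}$ is actually the \emph{first} such crossing in $I_j$ seen in the simulation (not only in $I_j$ itself), which is exactly what Lemma~\ref{lem:afixed_correctness} needs. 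Once that is verified, the rest is an accounting of standard data-structural costs.
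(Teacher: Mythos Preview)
Your proposal is correct and follows essentially the same route as the paper: the paper's proof of Theorem~\ref{thm:afixed} is precisely the two-phase algorithm you summarize, with correctness resting on Lemmas~\ref{lem:afixed_conf} and~\ref{lem:afixed_w0w1} for the first phase and on Lemmas~\ref{lem:afixed_phi}--\ref{lem:afixed_correctness} together with the reversed sweep for the second, and with the time bound obtained as $O(n\alpha(n)\log n)$ per pivot via the dynamic convex hull and the Davenport--Schinzel envelope bound. The potential circularity you flag for the on-the-fly computation of~$\phi_j$ is indeed resolved by Lemma~\ref{lem:afixed_order_phi}: since $\phi_{j-1}\le\phi_j$, the smallest crossing of $\wf_{j-1}=\wfg_{j-1}$ in~$I_j$ cannot occur before the simulation reaches~$\phi_{j-1}$, so the forward scan from the current direction finds it.
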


\bibliography{paper}
\bibliographystyle{abbrv}

\end{document}